\newcommand{\mA}{{\cal A}}
\newcommand{\mC}{{\cal C}}
\newcommand{\mK}{{\cal K}}
\newcommand{\mL}{{\cal L}}
\newcommand{\mP}{{\cal P}}
\newcommand{\mR}{{\cal R}}
\newcommand{\mT}{{\cal T}}
\newcommand{\nosemic}{\renewcommand{\@endalgocfline}{\relax}}
\newcommand{\dosemic}{\renewcommand{\@endalgocfline}{\algocf@endline}}
\begin{document}

\title{\Large Linear-Time Algorithms for Finding Tucker Submatrices and Lekkerkerker-Boland Subgraphs}

\institute{Colorado State University, Fort Collins CO 80521, USA}

\author{Nathan Lindzey~\thanks{lindzey@cs.colostate.edu,
Computer Science Department,
Colorado State University,
Fort Collins, CO, 80523-1873
U.S.A.},
Ross M. McConnell~\thanks{rmm@cs.colostate.edu,
Computer Science Department,
Colorado State University,
Fort Collins, CO, 80523-1873
U.S.A.}
}

\date{}

\maketitle

\begin{abstract}
Lekkerkerker and Boland characterized the minimal forbidden induced subgraphs 
for the class of interval graphs.  We give a linear-time algorithm to
find one in any graph that is not an interval graph.  Tucker characterized the
minimal forbidden submatrices of binary matrices that do not have the consecutive-ones
property.  We give a linear-time algorithm to find one in any binary matrix that
does not have the consecutive-ones property.
\end{abstract}

\section{Introduction}

The {\em intersection graph} of a collection of sets has one vertex for each
set in the collection and an edge between two vertices if the corresponding sets
intersect.
A graph is an {\em interval graph} if it is the intersection graph of a collection of intervals
on a line.
Such a collection of intervals is known as an {\em interval model} of the graph.
Interval graphs are an important subclass of perfect graphs~\cite{Gol80}, they have been written
about extensively, and they model constraints in various combinatorial optimization
and decision problems~\cite{CLRS09, Gol80, Roberts78, SpinGR}.   They have a rich structure and history,
and interesting relationships to other graph classes.  For a survey, see~\cite{BLS99}.

For a 0-1 (binary) matrix $M$, let $n$ denote the number of rows, $m$ the number
of columns, and $size(M)$ the number
of rows, columns and 1's.  In this paper, all matrices are binary.
A sparse representation of a matrix takes $O(size(M))$ space.
Such a matrix has the {\em consecutive-ones property} if there exists an ordering
of its columns such that, in every row, the 1's are consecutive.

A {\em consecutive-ones matrix} is a matrix that has the consecutive-ones property,
and a {\em consecutive-ones-ordered matrix} is a matrix where the 1's are consecutive
in every row.
A {\em clique} is a maximal induced subgraph that is a complete graph.
A {\em clique matrix} of a graph $G$ is a matrix that has a row for each
vertex, a column for each clique, and a 1 in row $i$, column $j$ if vertex $i$ is contained
in clique $j$.  
A graph is an interval graph if and only if its clique matrices have the consecutive-ones
property, see, for example,~\cite{Gol80}.

In 1962, Lekkerkerker and Boland described the minimal induced forbidden subgraphs for
the class of interval graphs~\cite{LB62}, known as the {\em LB graphs}.
These are depicted in Figure~\ref{fig:LBSubgraphs}.
Ten years later,
Tucker described the minimal forbidden submatrices for consecutive-ones matrices~\cite{Tuck72}.
These are the matrices that do not have the consecutive-ones property,
but where deletion of any row or column results in a matrix that has the 
consecutive-ones property.
These are depicted in Figure~\ref{fig:TuckMatrices}.  The presence of an induced LB subgraph
in a graph and the presence of a Tucker submatrix in its clique matrix are both necessary
and sufficient conditions for a graph not to be an interval graph.  Not surprisingly, there is a relationship
between the Tucker matrices and the clique matrices of LB graphs, depicted in Figure~\ref{fig:LBGraphsTucker}.

\begin{figure}[!ht]
\captionsetup[subfigure]{labelformat=empty}
\centering
\subfloat[$G_I$]{\includegraphics[scale=.25]{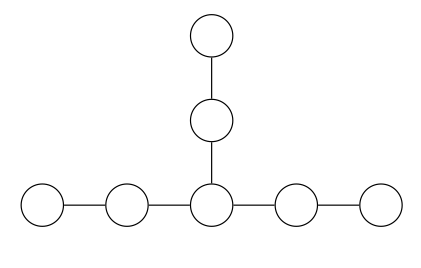}}
\subfloat[$G_{II}$]{\includegraphics[scale=.25]{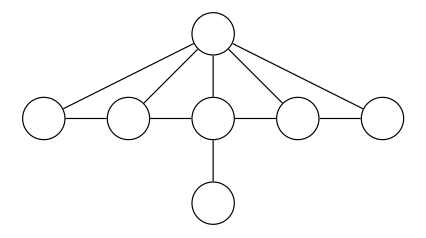}}
\subfloat[$G_{III}(n)$, $n \geq 4$]{\includegraphics[scale=.25]{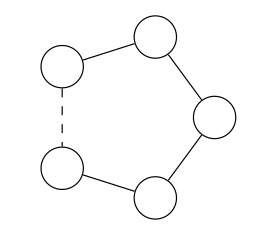}}\\
\subfloat[$G_{IV}(n)$, $n \geq 6$]{\includegraphics[scale=.25]{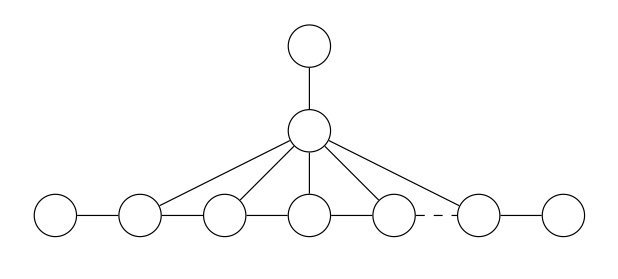}}
\subfloat[$G_V(n)$, $n \geq 6$]{\includegraphics[scale=.25]{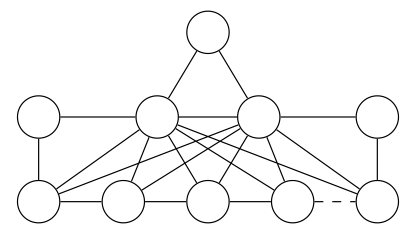}}

\caption{The Lekkerkerker-Boland subgraphs.}\label{fig:LBSubgraphs}
\end{figure}

\begin{figure}[!ht]
\captionsetup[subfigure]{labelformat=empty}
\centering
\subfloat[]{\includegraphics[scale=.25]{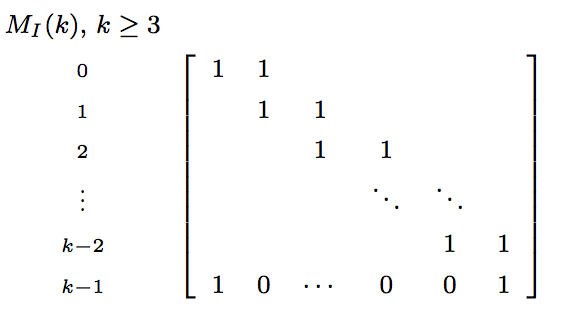}}
\quad \quad \quad
\subfloat[]{\includegraphics[scale=.22]{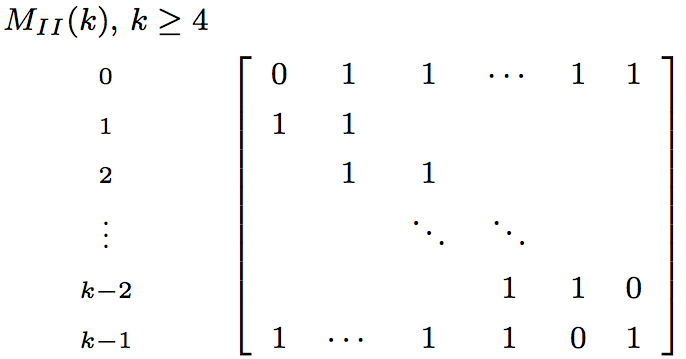}}\\
\centering
\subfloat[]{\includegraphics[scale=.25]{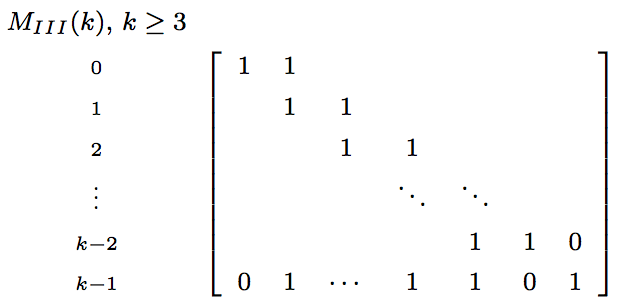}}\\
\quad \quad
\subfloat[]{\includegraphics[scale=.35]{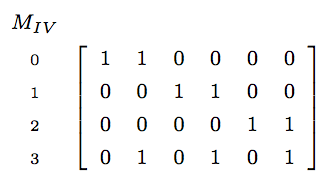}}
\quad \quad
\subfloat[]{\includegraphics[scale=.23]{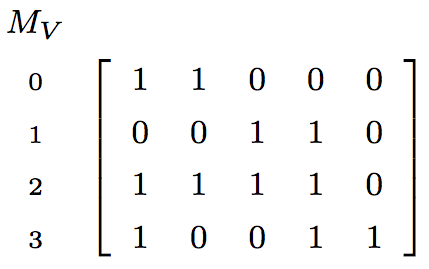}}

\caption{The minimal forbidden submatrices for consecutive-ones matrices.  
Entries that have nothing in them are implicitly 0's.
Since they are minimal matrices that cannot be consecutive-ones ordered,
for any chosen ordering of the rows, all rows except the last can be
consecutive-ones ordered.
}\label{fig:TuckMatrices}
\end{figure}

\begin{figure}[!ht]
\captionsetup[subfigure]{labelformat=empty}
\centering
\subfloat[$G_I$]{\includegraphics[scale=.18]{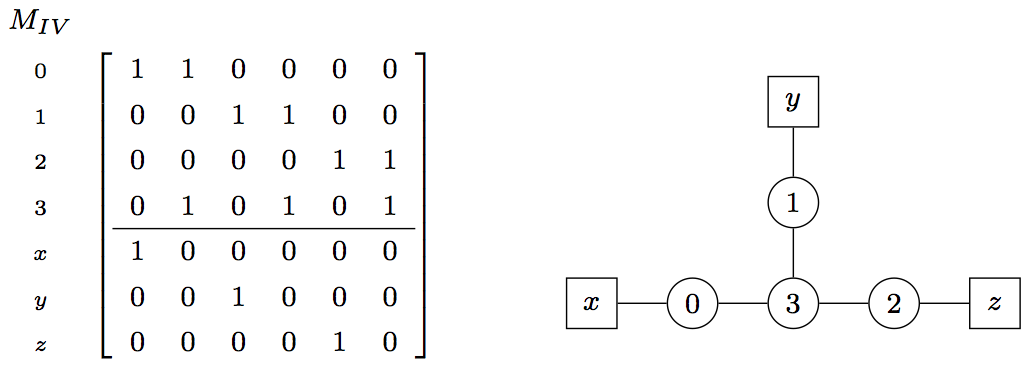}}
\subfloat[$G_{II}$]{\includegraphics[scale=.18]{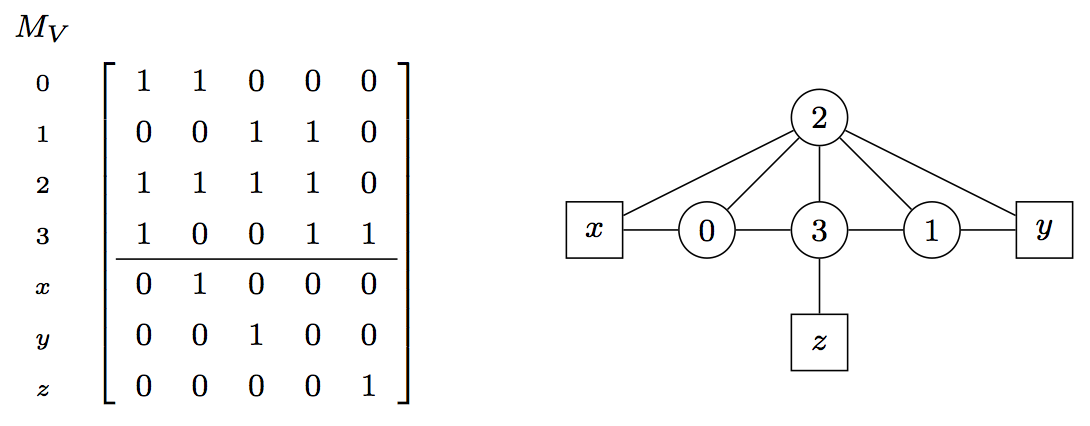}}\\

\subfloat[$G_{III}(n)$, $n \geq 4$, $k \geq 4$]{\includegraphics[scale=.21]{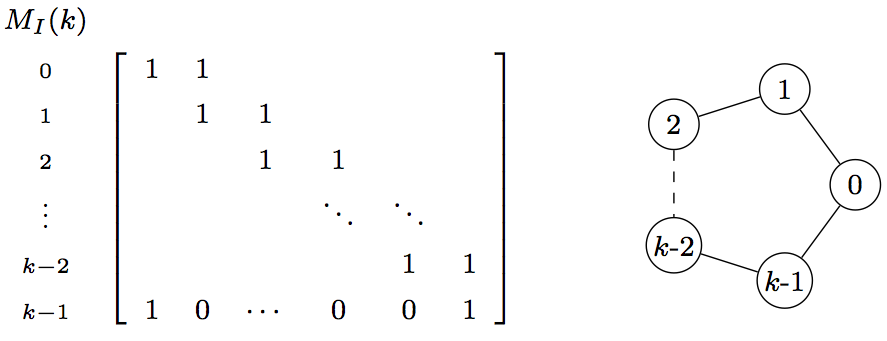}}\\
\subfloat[$G_{IV}(n)$, $n \geq 6$, $k \geq 3$]{\includegraphics[scale=.23]{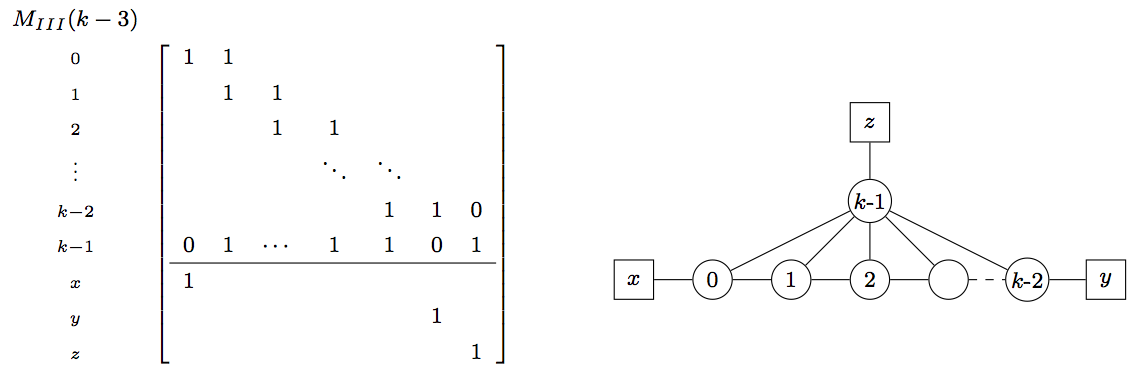}}\\

\subfloat[$G_V(n)$, $n \geq 6$, $k \geq 4$]{\includegraphics[scale=.21]{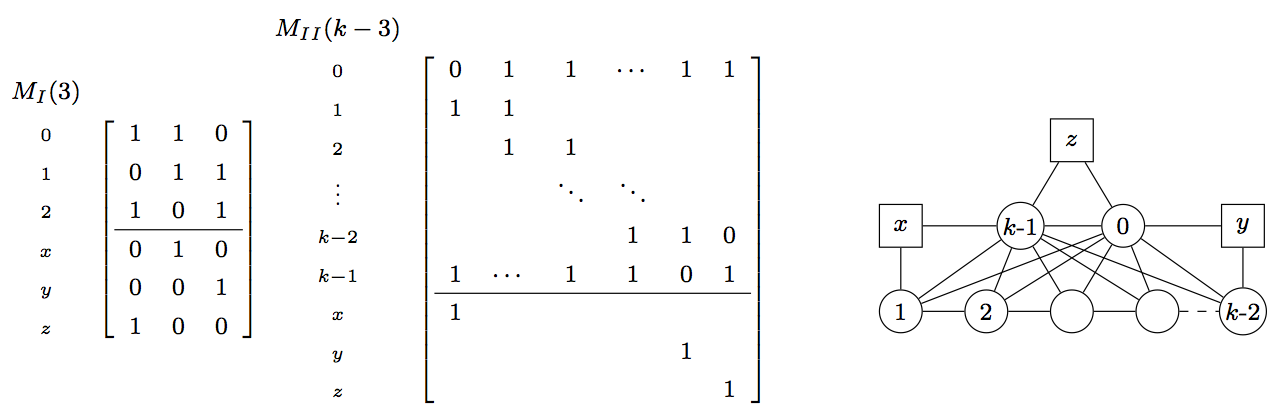}}
\caption{The relationship between clique matrices of the LB graphs and the Tucker matrices.
To the left of each graph is a corresponding clique matrix.
The square vertices of the graphs are those that occupy a single clique; these are the {\em simplicial}
vertices.
In each case, the submatrix obtained by excluding rows corresponding to simplicial vertices is 
a Tucker matrix, a fact observed by Tucker~\cite{Tuck72}.
$G_V(6)$ is a special case that has $M_I(3)$ as a submatrix of its clique matrix; 
$G_V(n)$ for $n \geq 7$ has $M_{II}(n-3)$ as a submatrix of its clique matrix.
}\label{fig:LBGraphsTucker}
\end{figure}

In this paper, we give a linear time bound for finding an induced LB
subgraph when a graph is not an interval graph.   As part of our algorithm, we
also give a linear-time $(O(size(M))$ bound for finding one of Tucker's 
submatrices in a matrix $M$ that does not have the consecutive-ones property. 
This latter problem was solved previously in $O(n*size(M))$ time in~\cite{Tamayo}.
An $O(\Delta^3 m^2n(m+n^2)$ bound 
for finding a Tucker submatrix of minimum size is given in~\cite{Dom10}, where $\Delta$ is
the maximum number of 1's in any row.

A {\em simplicial vertex} of a graph is a vertex that occupies a single clique;
it and its neighbors are the members of the clique.  In Figure~\ref{fig:LBGraphsTucker},
the square vertices are the simplicial vertices.
A {\em chord} on a path or cycle in a graph is an edge that is not on
the cycle or path, but both of whose endpoints lie on the cycle or path.
A {\em chordless cycle} in a graph is a cycle on four or more vertices that
has no chord, that is, it is an induced $G_{III}(n)$ for $n \geq 4$.
A {\em chordless path} is a path that has no chord.
A graph is {\em chordal} if it has no chordless cycle.  
Since $G_{III}(n)$ for $n \geq 4$ are forbidden induced subgraphs for
interval graphs, interval graphs are a subclass of chordal graphs.
The clique matrices
of graphs can have an exponential number of columns, but in the case of
chordal graphs, if $M$ is a clique matrix, $size(M) = O(n+m)$, see~\cite{Gol80}.  

A {\em certifying algorithm} is an algorithm that provides, with each output, a simple-to-check
proof that it has answered correctly~\cite{KMMSJournal,MMNSCert}.  An interval model gives a 
certificate that a graph is an interval graph, and an LB subgraph gives one 
if the graph is not an interval graph.  However, a certifying algorithm for
recognition of interval graphs was given previously
in~\cite{KMMSJournal}.  
The ability to give a consecutive-ones ordering or a Tucker 
submatrix in linear time gives a linear-time certifying algorithm for consecutive-ones
matrices, but one was given previously in~\cite{McCSODA04}.  
However, the previous certificates are neither minimal nor
uniquely characterized.  It is easy to obtain 
a minimal certificate of the form given in~\cite{KMMSJournal} from an LB subgraph found by
the algorithm we describe below, but not the other way around.  
The presentation in that paper gives the certificate in an especially easy format for
authenticating, and the LB subgraphs can easily be given in this format.
However, checking that it is also minimal would be more complicated
and unnecessary for certifying that the graph is not an interval graph.  

Therefore, interest in the algorithm of this paper will likely be motivated
by the theoretical importance of the LB subgraphs, rather than by certification.
Results such as those in this paper can have unanticipated algorithmic uses.
For example, the algorithm of Rose, Tarjan and Lueker~\cite{RTL:triangulated} 
recognizes whether a graph is a chordal graph, but it does not
return a chordless cycle if it is not.  The addendum
given by~\cite{TarjYan85} was a response to demand for an algorithm
to find chordless cycles in arbitrary non-chordal graphs.  Though
this is useful for certification, it does not appear to have been
the motivation for the addendum.
{\em Circular-arc graphs} are the intersection graphs of
arcs on a circle, and chordless cycles play a role in efficient recognition 
of circular-arc graphs~\cite{Tuck80, McCCircPaper, KaplanNussbaum11}.

The LB Subgraphs play a role in the characterization of related graph classes.
For any subclass of interval graphs, the LB subgraphs, or induced subgraphs of them, must be among 
the minimal forbidden induced subgraphs for the class.  For example,
an interval graph is {\em proper} if there exists an interval model where
no interval is a subset of another.  It is a {\em unit} interval graph if
there exists an interval model where all intervals have the same length.
These graph classes are the same, and they are a subclass of the interval
graphs.  Wegner showed that a graph is a proper
interval graph if and only if it does not have a chordless cycle,
the special case of $G_{IV}$ or $G_V$
for $n = 6$, or the {\em claw} ($K_{1,3}$) as an induced subgraph~\cite{Wegner67}.
Hell and Huang give an algorithm that produces one of them in linear time~\cite{HellUIG}.
The problem of finding a forbidden subgraph for this class reduces to finding an LB
subgraph:  Each of the LB graphs is either one of Wegner's forbidden subgraphs or contains
an obvious claw.  Therefore, if a graph $G$ is not an interval graph,
we can find one of Wegner's forbidden subgraphs in linear time.
If $G$ is an interval graph, it is trivial to find a claw in linear time
using an interval model of $G$ generated by the algorithm of~\cite{BL76}.
This approach has no obvious advantages over Hell and Huang's 
algorithm, but it illustrates that the application of our results is
not restricted to interval graphs.

This can also be true for properly overlapping classes of graphs or superclasses of
interval graphs.
The characterization of the general class of circular-arc graphs in terms of its minimal forbidden
induced subgraphs has remained elusive, and
LB subgraphs also figure heavily in partial characterizations,
for example, the recent characterizations of
those graphs that are {\em normal Helly} circular-arc graphs~\cite{LSSNormal13, GrippoSafe12}.
Adding an isolated vertex to any of the LB subgraphs that are circular-arc graphs gives
a minimal forbidden subgraph for the class of circular-arc graphs:  each of these must have a circular-arc
model that covers the entire circle, since it is not an interval graph, it is minimal with
respect to this property, and this precludes
an isolated vertex.  If a characterization of circular-arc graphs in terms of minimal forbidden induced subgraphs
is discovered, our results will likely be useful in an algorithm for finding one.

The paper also makes extensive use of generic techniques whose usefulness, to our knowledge, has not been
previously recognized.  An example is the extensive use of Lemma~\ref{lem:BLPrefix}
in various parts of the algorithm.  

Tucker recognized a relationship between
the LB graphs and the Tucker matrices:  a Tucker submatrix occurs in
the clique matrix of every LB graph.  The relationship between Tucker submatrices
of clique matrices of graphs and their induced LB subgraphs, however, has a much
richer structure than has been previously recognized, which is shown in the last
section of the paper.

An open question 
is whether a minimal, unique, especially simple, or otherwise interesting special case of the 
certificate from~\cite{McCSODA04} can be obtained by applying the algorithm of that paper to a Tucker 
submatrix obtained by the algorithm of the present paper.
Tucker submatrices may be useful in heuristics for finding large submatrices that have the consecutive-ones
property, small Tucker matrices, or identifying errors in biological data~\cite{StoyeWittler09, CCHS10}.  Our techniques
provide new tools for such heuristics.

The results of this paper appeared in preliminary form in~\cite{LindzeyMcConnell13}.

\section{Preliminaries}

Given a graph $G$, let $V$ denote the set of vertices and $E$ denote the set of edges.
Let $n$ denote $|V|$ and $m$ denote $|E|$.
If $\emptyset \subset X \subseteq V$, let $G[X]$ denote the subgraph induced by $X$.
By $G - x$, we denote $G[V \setminus \{x\}]$, and by $G - X$, we
denote $G[V \setminus X]$.

Given an ordered sequence $(a_1,a_2, \ldots, a_k)$, let a {\em prefix} of the
sequence denote a consecutive subsequence $(a_1, a_2, \ldots, a_i)$ for some $i$
such that $0 \leq i \leq k$.  (If $i = 0$, the prefix is empty.)
Similarly, let a {\em suffix} denote $(a_i, a_{i+1} \ldots a_k)$ for some $i$
such that $1 \leq i \leq k+1$.  

We treat the rows as {\em sets}, where
each row $R$ is the set of columns in which the row has a 1.
We will use calligraphic font for collections of sets, hence
for sets of rows.

For a set $\mR$ of rows and a set $C$ of columns of a matrix, let $\mR[C]$ denote
the restriction $\{R \cap C| R \in \mR\}$ of $\mR$ to columns in $C$.
Suppose $\mR$ is the set of rows of a consecutive-ones ordered matrix
and $(c_1, c_2, \ldots, c_m)$ is the ordering of the columns.
In linear time, we can find, for each row, the leftmost and rightmost column
in the row.  Let us call these the {\em left endpoint} and {\em right endpoint} of the row.
Every graph $G$ is the intersection graph of rows of the clique matrix,
since two vertices of $G$ are adjacent if and only if they are members of a common clique.
Therefore, a consecutive-ones ordering of the clique matrix
of a graph gives an interval model of the graph; the intervals are the
consecutive blocks of 1's in the rows.

That interval graphs are a subclass of the class of chordal graphs follows
from inclusion of the $G_{III}$'s among the LB subgraphs.  

When a graph is chordal, the problem of deciding
whether it is an interval graph reduces to the problem of deciding whether its clique
matrix has the consecutive-ones property.   
When $G$ is chordal, the algorithm of Rose, Tarjan and Lueker 
produces its maximal cliques, hence a sparse representation
of its clique matrix, in linear 
time~\cite{RTL:triangulated}.  Booth and Lueker gave
an algorithm for finding a consecutive-ones ordering of an arbitrary
matrix or else determining that it does not have the
consecutive-ones property, which yielded a linear-time bound
for interval graph recognition~\cite{BL76}.   

Booth and Lueker's algorithm actually achieves a stronger result, of which
we make extensive use in this paper:

\begin{lemma}\label{lem:BLPrefix}~\cite{BL76}
Given a matrix $M$, it takes $O(size(M))$ time to find
the maximal prefix of the rows of $M$ that has the consecutive-ones
property.
\end{lemma}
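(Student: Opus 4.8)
The plan is to reduce to the ordinary Booth--Lueker algorithm run incrementally, row by row, and to argue that the PQ-tree machinery already tells us exactly where the first obstruction occurs. Recall that Booth and Lueker's algorithm processes the rows $R_1, R_2, \ldots, R_n$ one at a time, maintaining a PQ-tree whose leaves are the columns and whose admissible orderings are precisely the column orderings that make $\{R_1, \ldots, R_i\}$ consecutive-ones ordered. When row $R_{i+1}$ is inserted, the \textsc{Reduce} operation either succeeds, updating the tree, or it fails, which happens exactly when $\{R_1, \ldots, R_{i+1}\}$ has no consecutive-ones ordering. So the first index $i$ at which \textsc{Reduce} fails is precisely one more than the length of the maximal prefix with the consecutive-ones property: the prefix $(R_1, \ldots, R_i)$ has the property, and $(R_1, \ldots, R_{i+1})$ does not, hence no longer prefix does either (the consecutive-ones property is inherited by subsets of the rows). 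If \textsc{Reduce} never fails, the whole matrix has the property and the maximal prefix is all of $M$.

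First I would make precise the incremental interface: running Booth and Lueker's construction up through row $R_i$ costs $O\bigl(\sum_{j \le i} |R_j| + m + n\bigr)$ time in total, by their amortized analysis of the template-matching operations, where $|R_j|$ is the number of $1$'s in row $R_j$. The key point is that their analysis is already incremental --- the cost charged is proportional to the sizes of the rows actually processed, plus a one-time $O(m)$ to build the initial universal tree and an $O(n)$ overhead to touch each row --- so stopping at the first failed \textsc{Reduce} and discarding the rest costs at most $O(size(M))$. Then I would observe that detecting failure is built into \textsc{Reduce}: it reports whether a valid template match exists at every node on the relevant path, so no extra work is needed to recognize that the current prefix is the maximal one.

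The one subtlety, and the step I expect to need the most care, is the bookkeeping at the failing row. We must not charge for $R_{i+1}$ the cost of a full reduction, since \textsc{Reduce} may abort partway through after having modified part of the tree; but Booth and Lueker's bound already accounts for this, because the work done during a failed \textsc{Reduce} is still bounded by the number of pertinent nodes, which is $O(|R_{i+1}| + (\text{number of pertinent nodes}))$, and the standard amortized argument bounds the total number of pertinent-node visits across all reductions (successful or not) by $O(\sum_j |R_j| + m + n)$. Since $\sum_j |R_j| + m + n = O(size(M))$, the total is $O(size(M))$. I would also note the degenerate cases --- an empty prefix when $R_1$ itself forces nothing (it never does, a single row is always consecutive-ones orderable), and a matrix with empty rows or zero columns --- but these are immediate.

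Finally, I would remark that the output of the procedure is not merely the length of the maximal prefix but the PQ-tree representing all consecutive-ones orderings of that prefix, which is exactly the object subsequent sections will need; this is available at no extra cost since it is the tree held in hand at the moment \textsc{Reduce} first fails.
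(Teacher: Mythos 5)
Your proposal is correct and is essentially the argument behind the paper's treatment: the paper does not reprove this lemma but simply cites Booth and Lueker, observing that their incremental PQ-tree algorithm ``actually achieves a stronger result,'' which is exactly what you spell out --- the first failed \textsc{Reduce} marks the end of the maximal prefix (maximality following from closure of the consecutive-ones property under row deletion), and the amortized charging scheme, including the partial work of the failing reduction, stays within $O(size(M))$. Nothing further is needed.
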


\begin{definition}\label{def:overlapGraph}
Let $\mR = \{R_1, R_2, \ldots, R_p\}$ be a subset of rows of a matrix.
Two rows {\em overlap} if their intersection is nonempty but
neither is a subset of the other.  
The {\em overlap graph} of $\mR$ is
the undirected graph whose vertices are the members of $\mR$, and where $R_i, R_j \in \mR$ are
adjacent if and only if $R_i$ and $R_j$ overlap.
By an {\em overlap component} of $\mR$, we denote the elements of $\mR$ that
make up a connected component of the overlap graph.
\end{definition}

\begin{definition}\label{def:Venn}
Suppose the overlap graph of a set $\mR_Q$ of rows is connected.
Then two columns of $M$ are in the 
same {\em Venn class} of $\mR_Q$ if they are elements of the same set of members of $\mR_Q$.
The {\em unconstrained Venn class} consists of those columns that are not in
any member of $\mR_Q$; all others are {\em constrained}.
\end{definition}

\begin{figure}[!ht]\label{fig:schematic}
\captionsetup[subfigure]{labelformat=empty}
\centering
\subfloat[]{\includegraphics[scale=.3]{Tucker5.png}}
\quad \quad \quad
\subfloat[]{\includegraphics[scale=.34]{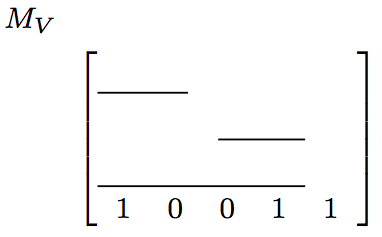}}
\caption{When some of the rows of a matrix are consecutive-ones
ordered, we will often represent the intervals occupied by
the 1's in these rows with line segments.
}
\end{figure}

\begin{figure}
\centerline{\includegraphics[scale=.25]{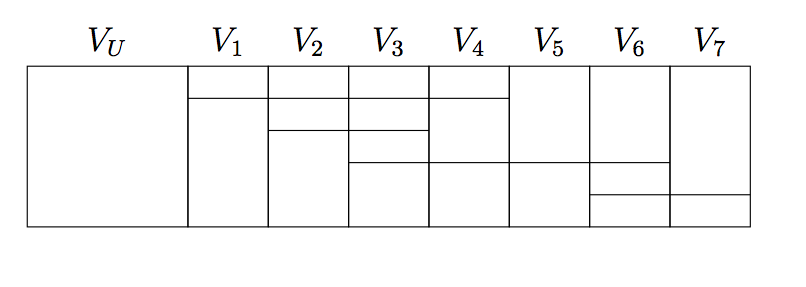}}
\caption{A set $\mR_Q$ of rows of a consecutive-ones matrix whose
overlap graph is connected, and its
Venn classes, $\{V_U, V_1, V_2, \ldots, V_7\}$.  Each element of $\{V_1, V_2, \ldots, V_7\}$
is consecutive, their union is consecutive, and the order of $V_1$ through $V_7$ is uniquely constrained
in any consecutive-ones ordering of $\mR_Q$, up to reversal.  These
are the {\em constrained} classes.  Columns in
the unconstrained class, $V_U$, can go
on either end of this sequence.
}\label{fig:Venn}
\end{figure}

\begin{lemma}\label{lem:Meidanis}~\cite{MeidanisPT98}
(See Figure~\ref{fig:Venn}.)
If the overlap graph of a set $\mR_Q$ of rows is connected,
then in all consecutive-ones orderings of $\mR_Q$,
each constrained Venn class is consecutive, the union of 
constrained Venn classes is consecutive, and the sequence of
constrained Venn classes in the ordering is invariant,
up to reversal.
\end{lemma}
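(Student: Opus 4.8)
The plan is to fix a consecutive-ones ordering $\pi$ of $\mR_Q$ and argue structurally about what $\pi$ must do, using the connectedness of the overlap graph as the engine. First I would establish the statement for a single edge of the overlap graph: if $R_i$ and $R_j$ overlap, then in the consecutive-ones-ordered layout their intervals (blocks of 1's) overlap as intervals, so the three nonempty regions $R_i\setminus R_j$, $R_i\cap R_j$, $R_j\setminus R_i$ appear left-to-right in one of exactly two orders (one being the reverse of the other), and each of these three regions is consecutive. This is the base case: the Venn classes of the two-row set are exactly these three constrained classes plus the unconstrained class $V_U$, and the claim holds.

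Next I would induct on the number of rows, adding them in an order that keeps the overlap graph connected at every stage (possible because a connected graph has such a vertex ordering, e.g.\ by a spanning tree). Let $\mR' = \mR_Q \setminus \{R\}$ where $R$ overlaps some $R_k \in \mR'$ and $\mR'$ still has connected overlap graph. By induction, in $\pi$ restricted to $\mR'$ the constrained Venn classes of $\mR'$ form a fixed consecutive sequence $V_1, \ldots, V_t$ (up to reversal), occupying a consecutive block. The Venn classes of $\mR_Q$ are obtained by intersecting each $V_\ell$ with $R$ and with its complement; so I must show that $R$, as an interval in $\pi$, cuts this sequence ``cleanly'', splitting at most the two boundary classes it straddles and fully containing the ones in between, and also that $R$ cannot wander off into $V_U$ in a way that creates a new invariant order. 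The key leverage is that $R$ overlaps $R_k$: this pins $R$'s interval to genuinely straddle $R_k$'s interval, so $R$'s left and right endpoints lie in specific regions relative to the $\mR'$-layout, and any freedom $R$ has (sliding its free end through $V_U$) does not affect the \emph{relative} order of the constrained classes, only which end $V_U$-columns attach to. Combining the refinement of each $V_\ell$ by $R$ with the fixed order of the $V_\ell$'s yields a fixed order of the refined classes; consecutiveness of each refined class and of their union follows because we are intersecting/cutting consecutive sets by a single interval.

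I expect the main obstacle to be the bookkeeping in the inductive step: precisely characterizing where $R$'s two endpoints can fall and ruling out ``degenerate'' placements — e.g.\ $R$ being disjoint from the constrained block entirely (impossible, since it overlaps $R_k$), or $R$ containing the whole constrained block (then $R_k \subseteq R$, contradicting overlap), or $R$ splitting a class $V_\ell$ that it neither contains nor is forced to straddle. Handling these requires carefully translating ``$R$ overlaps $R_k$'' into constraints on endpoints in the $\mR'$-coordinate system and checking that the two global orientations (the reversal symmetry) are the only ones surviving. Once the endpoint analysis is nailed down, the consecutiveness and invariance claims are immediate consequences of the fact that a single interval's intersection with, and complement within, a consecutive set is again consecutive, and that refining a totally ordered partition by one interval preserves the order.
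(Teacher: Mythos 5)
The paper does not prove this lemma at all: it is quoted as a known result of Meidanis, Porto and Telles~\cite{MeidanisPT98}, so there is no in-paper argument to compare yours against, and any self-contained proof is extra. Your plan---induct on the rows in an order along which the overlap graph stays connected, and show that adding one row $R$ (an interval in any consecutive-ones ordering) refines the already-rigid sequence of constrained Venn classes of $\mR'$ in a forced way---is the natural argument and is essentially sound. The base case, the spanning-tree ordering, and the exclusions you list (that $R$ cannot miss the constrained block, since $R\cap R_k\neq\emptyset$, and cannot contain it, since $R_k\subseteq\bigcup\mR'$ would give $R_k\subseteq R$) are exactly the right ingredients; consecutiveness of the refined classes also uses that $R$ cannot lie inside a single class $V_\ell$ (again because $R$ overlaps $R_k$), which is worth stating since otherwise $V_\ell\setminus R$ could be split in two.

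The one place where your sketch is too quick is the dismissal of the columns of $R$ in the unconstrained class of $\mR'$: you say sliding $R$'s free end through $V_U$ only changes ``which end $V_U$-columns attach to,'' but $R\cap V_U$ is itself a \emph{constrained} Venn class of $\mR_Q$, so its position in the sequence is part of what the lemma asserts is invariant up to reversal. The needed argument is that if $R\cap V_U\neq\emptyset$, then in every consecutive-ones ordering $R$ must cover one entire end class of the constrained block of $\mR'$ and exit the block there, and it cannot do so at both ends (that would force $R$ to contain the whole block, hence $R_k$); since which end class is wholly contained in $R$ is a set-theoretic fact about $R$, the side at which $R\cap V_U$ attaches is determined once the orientation of $V_1,\ldots,V_t$ is fixed, giving invariance up to the single global reversal. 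With that point made explicit, the endpoint bookkeeping you anticipate goes through and the plan yields a correct proof.
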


In this paper, we will focus on sets of columns whose overlap
graph is a path.
Abusing notation slightly, when $\mP$ is the set of rows on a path in
the overlap graph, we will alternatingly treat it as an ordered sequence
or as an unordered set.  For example, a {\em prefix} of $\mP$
is a subpath of $\mP$ that contains its first row, and a {\em suffix}
is a subpath that contains its last row.  On the other hand, we
can treat it as an unordered collection of sets, as in the expression 
$\mP \cup \{Z\}$, where $Z$ is a row not in $\mP$.

\section{Breadth-first search on the overlap graph of the rows of a matrix, 
given a consecutive-ones ordering}\label{sect:BFS}

One step of our algorithm is to find a shortest path of the overlap graph
of the rows of a consecutive-ones ordered matrix.  
The difficulty in performing BFS on the overlap graph within the $O(size(M))$ time bound
is that the size of the overlap graph is not $O(size(M))$.  A simple example of this
is when $M$ has three columns, $n/2$ rows with 1's in the first and second column
and $n/2$ rows with 1's in the second and third column.  Then $size(M) = O(n)$
but its overlap graph is a complete bipartite graph with $n^2/4 = \Omega(n^2)$ edges.
The algorithm is given as Algorithm~\ref{alg:OverlapBFS}.

\begin{algorithm}
\KwData{A consecutive-ones ordered matrix $M$ and a starting row $S$.}
\KwResult{A BFS tree in the overlap graph on rows of $M$, rooted at $S$}
Let $Q$ be an empty queue of rows\;
Enqueue $S$ to $Q$\;
Let $\mR'$ be the rows of $M$ other than $S$ \;
\For {each column $c_i$}{
 Let $\mR_i$ be a doubly-linked list of rows in $\mR'$ whose right endpoint is in $c_i$,\\
 \quad \quad sorted in ascending order of left endpoint\;
   Let $\mL_i$ be a doubly-linked list of rows in $\mR'$ whose left endpoint is in $c_i$, \\
    \quad \quad sorted in descending order of right endpoint\;
}
\While {$Q$ is not empty}{
   Dequeue a row $R$\;
   Let $(c_j, c_{j+1}, \ldots, c_k)$ be the columns of $R$\;
   \For {$i = j+1$ to $k$}{
       $R' \longleftarrow$ the first row in $\mR_i$\;
       \While {the left endpoint of $R'$ is to the left of $c_j$}{
           Remove $R'$ from the front of $\mR_i$\;
           Let $h$ be the left endpoint of $R'$\;
           Remove $R'$ from $\mL_h$\;
           Assign $R$ as the parent of $R'$\;
           Enqueue $R'$ to $Q$\;
           $R' \longleftarrow$ the first row in $\mR_i$\;
       }
   }
   // Left-right mirror image of previous {\tt For} loop .. \;
   \For {$i = j$ to $k-1$}{
       $R' \longleftarrow$ the first row in $\mL_i$\;
       \While {the right endpoint of $R'$ is to the right of $c_k$}{
           Remove $R'$ from the front of $\mL_i$\;
           Let $h$ be the right endpoint of $R'$\;
           Remove $R'$ from $\mR_h$\;
           Assign $R$ as the parent of $R'$\;
           Enqueue $R'$ to $Q$\;
           $R' \longleftarrow$ the first row in $\mL_h$\;
       }
   }
}
\caption{{\tt OverlapBFS($M,S$)}\label{alg:OverlapBFS}}
\end{algorithm}

\begin{lemma}\label{lem:OverlapBFSCorrect}
The parent relation assigned by {\tt OverlapBFS$(M,S)$} (Algorithm~\ref{alg:OverlapBFS}) 
is a BFS tree, rooted at $S$, in the overlap graph of rows of $M$.
\end{lemma}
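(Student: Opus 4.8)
The plan is to verify two things about the parent relation that Algorithm~\ref{alg:OverlapBFS} assigns: first, that every row of $\mR'$ is assigned exactly one parent and that this parent is an overlap-neighbor (so that we really do get a tree, or forest, of overlap edges), and second, that the BFS-layer property holds, i.e.\ a row dequeued at the moment its neighborhood is scanned assigns parents precisely to those of its overlap-neighbors that have not yet been reached. Once both are in hand, the standard argument that BFS produces a shortest-path tree applies verbatim.

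The first step is to characterize, for a fixed dequeued row $R$ with column span $(c_j,\dots,c_k)$, exactly which rows $R'\in\mR'$ the two \textbf{For} loops can reach. A row $R'$ overlaps $R$ iff $R\cap R'\neq\emptyset$ and neither contains the other; since both are intervals in the consecutive-ones order, this happens iff exactly one endpoint of $R'$ lies strictly inside the open span of $R$ while the other lies outside it (one of the two ``staircase'' configurations). The first \textbf{For} loop walks $i$ from $j+1$ to $k$ and, for each $i$, pulls from the front of $\mR_i$ (rows with right endpoint $c_i$, sorted by ascending left endpoint) every row whose left endpoint is strictly left of $c_j$ — that is precisely the set of rows that end inside the span of $R$ but begin strictly before it. The mirror loop catches the rows that begin inside the span and end strictly after it. Together these are exactly the overlap-neighbors of $R$. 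I would state this as a short sub-claim and prove it by the interval characterization above, being careful about the boundary indices ($i$ runs to $k$, not past it, because a row with right endpoint at $c_k$ and left endpoint before $c_j$ still overlaps $R$; the ``$<c_j$'' and ``$>c_k$'' strict inequalities handle the containment exclusions).

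The second step is a loop-invariant argument for the \textbf{While} loop over the queue. The invariant is: at the top of each iteration, the lists $\mR_i$ and $\mL_i$ contain exactly the rows of $\mR'$ that have not yet been enqueued (and the sortedness of each list is preserved). Enqueuing happens exactly when a row is removed from its $\mR_i$ and its matching $\mL_h$ (or vice versa), so a row leaves the data structure precisely when it is enqueued, and leaves it from both lists it ever sat in — here one checks that a row sits in exactly one $\mR$-list (indexed by its right endpoint) and exactly one $\mL$-list (indexed by its left endpoint), so removing it from both is both well-defined and complete. Combining with the sub-claim of step one: when $R$ is dequeued, the rows it assigns as children are exactly those overlap-neighbors of $R$ that were still in the data structure, i.e.\ not previously enqueued; each such row is enqueued once, gets $R$ as its unique parent, and is thereafter never touched. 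This gives the single-parent property and the BFS-layer property simultaneously. One also needs: every row reachable from $S$ in the overlap graph does eventually get enqueued — this follows by induction on overlap-distance from $S$, using the layer property, exactly as in textbook BFS.

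The main obstacle I expect is the bookkeeping in step two: showing that after a row $R'$ with left endpoint $c_h$ has been removed from the \emph{front} of some $\mR_i$, the ``first row in $\mR_i$'' that the code next inspects is the correct next candidate, and — more delicately — that the $\mR_i$ list is still correctly sorted after we later pull rows out of its interior via the $\mL_h$-side removals (the line ``Remove $R'$ from $\mR_h$'' in the second loop deletes from the middle of a list). Deletion from a doubly-linked list preserves the relative order of the surviving elements, so sortedness is automatic; the real content is that the front of $\mR_i$ is always the not-yet-enqueued row of smallest left endpoint whose right endpoint is $c_i$, which is exactly what the ascending-by-left-endpoint sort guarantees together with the invariant that the list holds precisely the un-enqueued such rows. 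Spelling this out carefully, together with the mirror-image symmetry argument for the second \textbf{For} loop (which I would invoke rather than repeat), completes the proof.
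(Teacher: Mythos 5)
Your overall architecture matches the paper's proof: a loop invariant saying that the lists $\mL_i$ and $\mR_i$ contain exactly the not-yet-enqueued rows with the corresponding endpoints, plus a characterization of the overlap-neighbors of a dequeued row $R$ as sorted prefixes of certain lists, from which it follows that each dequeue step enqueues precisely the unvisited neighbors of $R$ and the standard BFS argument finishes. (Your extra care about single parents and reachability is fine; the paper simply reduces to ``this is what BFS does.'')

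However, your central sub-claim --- that the two \textbf{For} loops as you describe them reach \emph{exactly} the overlap-neighbors of $R$ --- contains a genuine error at the boundaries. You assert that ``a row with right endpoint at $c_k$ and left endpoint before $c_j$ still overlaps $R$.'' It does not: such a row's column set is a superset of $R$'s columns $\{c_j,\ldots,c_k\}$, so by Definition~\ref{def:overlapGraph} it is a containment, not an overlap. Symmetrically, a row with left endpoint $c_j$ and right endpoint beyond $c_k$ contains $R$. Meanwhile your criterion ``exactly one endpoint strictly inside the open span of $R$'' wrongly excludes genuine neighbors: a row with right endpoint exactly $c_j$ and left endpoint to the left of $c_j$, or with left endpoint exactly $c_k$ and right endpoint to the right of $c_k$, does overlap $R$ (they share a single column and neither contains the other, when $R$ has at least two columns). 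The correct characterization, which is the one the paper's proof uses, is: the left endpoint of $R'$ lies in $\{c_{j+1},\ldots,c_k\}$ with right endpoint to the right of $c_k$ (these rows are prefixes of $\mL_{j+1},\ldots,\mL_k$), or the right endpoint of $R'$ lies in $\{c_j,\ldots,c_{k-1}\}$ with left endpoint to the left of $c_j$ (prefixes of $\mR_j,\ldots,\mR_{k-1}$). Note the asymmetry in which endpoint of $R$ may be shared; your symmetric ``strict'' reading, matched to the index ranges $j+1,\ldots,k$ on the $\mR$-lists and $j,\ldots,k-1$ on the $\mL$-lists, would enqueue rows that contain $R$ (assigning them $R$ as parent despite their not being adjacent in the overlap graph, which already breaks the tree-edge property you set out to verify) and would miss the endpoint-sharing neighbors. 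To make the lemma go through you must pair the index range $\{j+1,\ldots,k\}$ with the left-endpoint lists and $\{j,\ldots,k-1\}$ with the right-endpoint lists, as in the paper's proof, rather than bending the overlap geometry to fit the literal pairing you described.
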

\begin{proof}
In BFS, each time a vertex $v$ is dequeued, the vertices that are
enqueued are those that have not previously
been enqueued and that are neighbors of $v$.  It suffices
to show that this is what {\tt OverlapBFS} does.

An invariant is that for each column $c_i$, the rows in $\mL_i$ are 
those rows that have not been enqueued and
whose left endpoints are in $c_i$, and the rows in $\mR_i$ are those
that have not been enqueued and whose right endpoints are in $c_i$.
This is true initially, and when a row is enqueued, it is removed
from the two lists $\mR_i$ and $\mL_h$ or $\mL_i$ and $\mR_h$
that it occupies, maintaining the invariant.

A row $R'$ that has not been enqueued
properly overlaps row $R = (c_j, c_{j+1}, \ldots, c_k)$ if and only if
one of of the following conditions applies:

\begin{enumerate}
\item The left endpoint of $R'$ is in $\{c_{j+1}, c_{j+2}, \ldots, c_k\}$ and its
right endpoints is to the right of $c_k$;

\item The right endpoint of $R'$
is in $\{c_j, c_{j+1}, \ldots, c_{k-1}\}$ and its left endpoint is to
the left of $c_j$.  
\end{enumerate}

The unenqueued rows meeting the first condition are
prefixes of $\{\mL_{j+1}, \mL_{j+1}, \ldots, \mL_k\}$ because of the way these
lists are sorted.  Similarly, the rows meeting the second condition are
prefixes of $\{\mR_j, \mR_{j+1}, \ldots, \mR_{k-1}\}$.  
The inner {\tt while} loops enqueue these prefixes.
Thus, when $R$ is dequeued,
the algorithm enqueues precisely those rows that have not previously been enqueued
and that properly overlap $R$, that is, the unenqueued neighbors of $R$ in the
overlap graph of rows of $M$.
\end{proof}

\begin{lemma}\label{lem:OverlapBFSTime}
{\tt OverlapBFS} can be implemented so that it can find the overlap components
of rows of a consecutive-ones ordered matrix $M$, and, for each component, a BFS tree, in $O(size(M))$ time.
\end{lemma}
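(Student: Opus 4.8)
The plan is to establish correctness via Lemma~\ref{lem:OverlapBFSCorrect} and then argue the time bound by a careful accounting of the work done in the two inner \texttt{while} loops, plus the preprocessing. The overall structure will be: (1) describe the data structures needed so each primitive operation (list insertion, deletion, reading the first element) is $O(1)$; (2) bound the preprocessing that builds the $2m$ lists $\mL_i, \mR_i$; (3) bound the main loop; (4) observe that the same traversal, restarted from each not-yet-visited row, finds all overlap components.

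First I would handle preprocessing. For each row $R$ we can find its left and right endpoints in time proportional to the number of 1's in $R$ (indeed, in a consecutive-ones ordered matrix, in $O(1)$ if we store the block boundaries, but even a scan is fine), so in $O(size(M))$ total we know, for each row, its endpoint columns. To build $\mR_i$ sorted in ascending order of left endpoint and $\mL_i$ sorted in descending order of right endpoint, I would avoid comparison sorting: instead, make a single pass over the columns $c_1, c_2, \ldots, c_m$, and when processing $c_\ell$, for every row whose \emph{left} endpoint is $c_\ell$, append it to the tail of $\mR_i$ where $c_i$ is that row's right endpoint, and prepend it to the head of $\mL_\ell$. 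A symmetric pass (or the same pass, appending to $\mL$ appropriately) handles the other list. Since each row is placed into exactly one $\mR_i$ and one $\mL_i$, and we can locate the rows with left endpoint $c_\ell$ via a bucket array indexed by column, this is $O(n+m)$ work; charging the bucket scan to columns and rows gives $O(size(M))$.

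The crux is bounding the main \texttt{while} loop. The key observation — which is really the content of the lemma — is that although the overlap graph can have $\Omega(n^2)$ edges, the algorithm never ``touches'' most of them. Every iteration of an inner \texttt{while} loop does $O(1)$ work and \emph{removes a row permanently} from the lists $\mR_h$ and $\mL_h$ it occupies (this is exactly the invariant proved in Lemma~\ref{lem:OverlapBFSCorrect}: once enqueued, a row is gone from all lists). So the total number of iterations of the inner \texttt{while} loops, summed over the whole run, is at most $n-1$ (one per row other than $S$): each such iteration is in bijection with an enqueue event. The only remaining cost is the ``probe'' at the top of each inner \texttt{while}: when we read the first row $R'$ of $\mR_i$ (or $\mL_i$) and find its left endpoint is \emph{not} to the left of $c_j$, we do $O(1)$ work and break. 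There is at most one such failed probe per execution of the surrounding \texttt{for}-loop body, i.e., one per column of the dequeued row $R$. Summing $|R|$ over all dequeued rows $R$ is $O(size(M))$, since each row is dequeued once. Combining: preprocessing $O(size(M))$, total successful inner iterations $O(n)$, total failed probes $O(size(M))$, and the per-row overhead of reading off $R$'s columns $O(size(M))$ — the whole thing is $O(size(M))$.

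For the overlap-components claim, I would note that \texttt{OverlapBFS} as written explores exactly the connected component of $S$ in the overlap graph (Lemma~\ref{lem:OverlapBFSCorrect} shows it performs BFS there). To get all components, maintain a global array marking which rows have been enqueued, and the single shared collection of lists $\{\mL_i, \mR_i\}$ built once; then repeatedly pick any unmarked row $S'$, enqueue it, and run the \texttt{while}-loop body. Correctness follows because a row in a different component of $S$ is never enqueued during $S$'s search — by Lemma~\ref{lem:OverlapBFSCorrect} only rows reachable from $S$ in the overlap graph are enqueued — so it remains in its lists, available when its own component is later started from $S'$. The time analysis is unchanged: the global bound ``one successful inner iteration per row'' and ``one failed probe per (dequeued row, column) pair'' holds across all the restarts simultaneously, and the lists are built only once. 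The main obstacle I expect is making the failed-probe accounting airtight — in particular being careful that each \texttt{for}-loop iteration over a column $c_i$ of the current row $R$ contributes only $O(1)$ to the probe cost regardless of how many rows it successfully enqueues (those are already charged elsewhere), which is why the inner loop must re-read ``the first row in $\mR_i$'' each time rather than iterating down a static list.
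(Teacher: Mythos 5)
Your proposal is correct and follows essentially the same route as the paper's proof: build the lists $\mR_i$, $\mL_i$ by a linear-time bucket/radix sort, then charge $O(1)$ to each enqueue (each inner \texttt{while} iteration permanently removes a row from its two lists) and $O(|R|)$ of probe/scan overhead to each dequeued row $R$, which the paper states as the $O(|R|+q)$ bound, and handle all overlap components by restarting from unenqueued rows. The only difference is cosmetic (a column-scan bucket pass versus the paper's explicit radix sort), so no further comparison is needed.
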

\begin{proof}
In linear time, we may label each row of a consecutive-ones ordered matrix with its
left and right endpoints.  We may then radix sort the rows with right endpoint
as primary sort key and left endpoint as secondary sort key to obtain the
sorted lists $\mR_i$, in $O(n+m) = O(size(M))$ time.  Similarly, we may
obtain the lists $\mL_i$ in $O(size(M))$ time.  Since the lists are doubly-linked,
when a row is removed from $\mR_i$, it can be removed from the list $\mL_h$
corresponding to its left endpoint, $c_h$, in $O(1)$ time.

When $R$ is dequeued, the first inner while loop takes $O(1+k)$ time to process each list $\mR_i$,
where $k$ is the number of elements of $\mR_i$ that get enqueued by the step, and similarly
for $\mL_i$.  Processing $R$ therefore takes $O(|R|+q)$ time, where $q$ is the number
of vertices that get enqueued when $R$ is processed.  Over all rows in the set $\mR$ of rows
of the overlap component
that contains $S$, this takes time proportional to the sum of cardinalities
of members of $\mR$, since each member of $\mR$ is enqueued once.
Iteratively restarting it on a new unenqueued row $S$ each
time it finds an overlap component gives all overlap components in 
$O(size(M))$ time.
\end{proof}

\section{Finding a Tucker submatrix in a matrix that does not have the consecutive-ones property}

Our algorithm for finding a Tucker submatrix in a matrix $M$ that does not have the consecutive-ones
property is summarized in Algorithm~\ref{alg:TuckerSubmatrix}.

\begin{algorithm}
\KwData{A matrix $M$ that does not have the consecutive-ones property.}
\KwResult{The rows and columns of a Tucker submatrix have been returned.}
$M$ = {\tt TuckerRows($M,4$)} // Algorithm~\ref{alg:TuckerRows}\; 
\If {$M$ has $i \leq 4$ rows}{
    // The rows of $M$ are the rows of a Tucker submatrix \;
    Find the set $C$ of column vectors that make up a Tucker submatrix\;
}
\Else {
    // Every Tucker submatrix of $M$ contains the first five rows of $M$ \;
    $M \longleftarrow$  {\tt FindRows$(M)$}  // Algorithm~\ref{alg:FindRows} \;
    $C \longleftarrow$ {\tt FindColumns$(M)$} // Algorithm~\ref{alg:FindColumns} \;
}
return $M[C]$\;
\caption{\texttt{TuckerSubmatrix ($M$)}\label{alg:TuckerSubmatrix}}
\end{algorithm}

{\tt TuckerRows,} which it calls, takes as parameters a matrix $M$ that does not have the consecutive-ones
property and an integer $k$, which is 4 in this case.  It returns a matrix that
is an ordering of a subset of rows of $M$ and that does not have
the consecutive-ones property.  If it has at most $k=4$ rows,
these are the rows of every Tucker submatrix of its returned matrix,
and finding the columns in linear time is trivial.
Otherwise, every Tucker submatrix of this matrix contains the first
five rows, and possibly other rows.  

The next procedure, {\tt FindRows} takes as a parameter a matrix $M$
that fails to have the consecutive-ones property and such that every
Tucker submatrix of it contains the first five rows.  It returns 
a matrix $M'$ consisting of a subset of rows of $M$, where every 
Tucker submatrix of $M'$ contains all rows of $M'$.  Moreover, excluding
the last row of $M'$, its overlap graph is a path.

This matrix is then passed to {\tt FindColumns}, which has as a precondition
that its parameter satisfy the aforementioned conditions that $M'$ satisfies.
It returns a set $C$ of columns, such that $M'[C]$
is a Tucker matrix.  

\subsection{{\tt TuckerRows}}

In this section, we define {\tt TuckerRows} (Algorithm~\ref{alg:TuckerRows}),  which is called from 
Algorithm~\ref{alg:TuckerSubmatrix}.  It takes as a parameter
a matrix $M$ that does not have the consecutive-ones property, and a parameter $k$.
It returns a matrix $M'$ that is an ordering of a subset of rows of $M$, such that $M'$
does not have the consecutive-ones property.  If the number $i$ of rows
of $M'$ is at most $k$, then every Tucker matrix of $M'$ contains all $i$ rows.
Otherwise, every instance of a Tucker matrix of $M'$ contains the first $k+1$ rows
of $M'$, and possibly additional rows.  

It runs in $O(k*size(M))$ time.
We could find the rows of a Tucker submatrix in every case
by calling {\tt TuckerRows} with parameter $k=n$,
but that would take $O(n*size(M))$ time, which is not linear.
Since Algorithm~\ref{alg:TuckerSubmatrix} calls it with
with $k = 4$, this call takes $O(size(M))$ time.
If the returned matrix has more than five rows, then since every
Tucker submatrix of it contains at least five rows,
this excludes the possibility
of $M_{IV}$ or $M_V$, each of which has four rows.  This simplifies
the problem, and is one of the motivations for selecting value of 4
for the parameter $k$.

The strategy of the algorithm is based on the following lemma:

\begin{lemma}\label{lem:lastRowTucker}
If a set $\mR'$ of rows has the consecutive-ones property and $Z$ is a row such that
$\mR = \mR' \cup \{Z\}$ does not,
then $Z$ is one of the rows of every instance of a Tucker submatrix in $\mR$.
\end{lemma}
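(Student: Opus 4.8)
The plan is to prove the contrapositive at the level of individual Tucker submatrices: I would show that any Tucker submatrix $\mathcal{T}$ of $\mathcal{R}$ must use the row $Z$, by arguing that a Tucker submatrix that avoids $Z$ would have to be a Tucker submatrix sitting entirely inside $\mathcal{R}'$, contradicting the hypothesis that $\mathcal{R}'$ has the consecutive-ones property. So suppose for contradiction that some instance $\mathcal{T}$ of a Tucker submatrix in $\mathcal{R}$ does not contain $Z$. Then every row of $\mathcal{T}$ is a row of $\mathcal{R}' = \mathcal{R} \setminus \{Z\}$, and the column set of $\mathcal{T}$ is a subset of the columns of $M$. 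Restricting $\mathcal{R}'$ to that column set gives a submatrix of $\mathcal{R}'$ that has $\mathcal{T}$ as a row-and-column submatrix.

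The key step is then the observation that the consecutive-ones property is hereditary under taking submatrices (deleting rows and deleting columns both preserve it, essentially by restriction of any valid column ordering). Since $\mathcal{R}'$ has the consecutive-ones property by hypothesis, so does every submatrix obtained from it by deleting rows and columns; in particular the submatrix of $\mathcal{R}'$ on the rows and columns of $\mathcal{T}$ has the consecutive-ones property, hence $\mathcal{T}$ itself has the consecutive-ones property. But a Tucker submatrix is by definition a matrix that does \emph{not} have the consecutive-ones property (it is minimal with this property). This is the contradiction. Therefore every instance of a Tucker submatrix in $\mathcal{R}$ must contain $Z$ as one of its rows.

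I do not expect a real obstacle here; the argument is short and rests only on the hereditary nature of the consecutive-ones property plus the defining feature of Tucker submatrices. The one point that needs a little care is making sure ``submatrix'' is used consistently: a Tucker submatrix of $\mathcal{R}$ is specified by a choice of rows and a choice of columns, and I must check that restricting $\mathcal{R}'$ to exactly those chosen columns (rather than to all columns) is what lets me invoke heredity cleanly — but this is immediate since deleting the unused columns from $\mathcal{R}$ and deleting $Z$ commute. The statement that $Z$ occurs in \emph{every} instance, not just some instance, follows because the contradiction was derived for an arbitrary Tucker submatrix of $\mathcal{R}$ not containing $Z$.
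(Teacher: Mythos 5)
Your proposal is correct and is essentially the paper's own argument: assume a Tucker submatrix avoids $Z$, note it then lies entirely within $\mR'$, and derive a contradiction from the heredity of the consecutive-ones property (the paper phrases this as $\mR'$ failing the consecutive-ones property, while you phrase it as the Tucker submatrix inheriting it; these are the same observation). No gap.
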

\begin{proof}
Suppose there exists an instance $M_T$ of a Tucker matrix whose rows
are contained in rows of $\mR'$.  Then $\mR'$ does not have the
consecutive-ones property, a contradiction.
\end{proof}

\begin{algorithm}
\KwData{A matrix $M$ that does not have the consecutive-ones property,  $k \geq 1$.}
\KwResult{Postconditions are given by Lemma~\ref{lem:TuckerRowsCorrect}}
$i \longleftarrow 1$\;
\While{$i \leq k$ and $M$ has at least $i$ rows}{
$(R_1, R_2, \cdots , R_r, Z) \longleftarrow$ the minimal prefix of rows of $M$ that does not have \\ \quad \quad \quad \quad \quad \quad \quad \quad \quad \quad \quad the consecutive-ones property (Lemma~\ref{lem:BLPrefix})\;
$M \longleftarrow  (Z, R_1, R_2, \cdots , R_r)$\;
$i \longleftarrow i + 1$\; 
}
return $M$;
\caption{\texttt{TuckerRows$(M, k)$}
\label{alg:TuckerRows} 
}
\end{algorithm}

\begin{figure}
\centerline{\includegraphics[scale=.25]{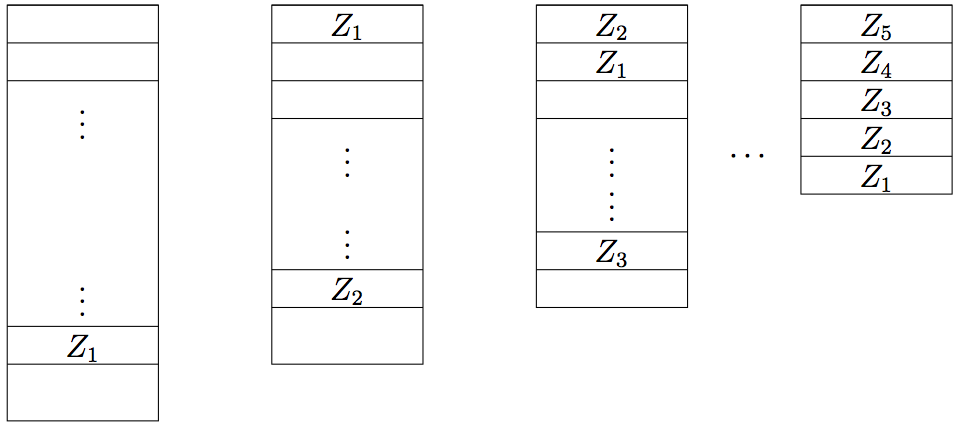}}
\caption{The first iteration of Algorithm~\ref{alg:TuckerRows} finds the minimal
prefix of the rows of $M$ that does not have the consecutive-ones property,
ending at some row $Z_1$.  After moving $Z_1$ to the
beginning of the matrix, the second iteration finds the minimal prefix that
does not have the consecutive-ones property, ending at some row $Z_2$.  Iterating
this operation, after $i$ iterations, we see by induction using 
Lemma~\ref{lem:lastRowTucker} that every instance of a Tucker matrix in the remaining
rows contains every row of $\{Z_i, Z_{i-1}, \ldots, Z_1\}$.
If that is all of the remaining rows, the algorithm can return them as the rows of a Tucker matrix.
If it halts after $k+1$ iterations, every Tucker submatrix
in the remaining rows of $M$ contains the first $k+1$ rows.
}\label{fig:fiveRows}
\end{figure}

\begin{lemma}\label{lem:TuckerRowsCorrect}
Suppose {\tt TuckerRows} (Algorithm~\ref{alg:TuckerRows}) is run with parameter $k$ and a matrix $M$ that does
not have the consecutive-ones property.  If the returned matrix $M'$ has at most
$k$ rows, then these are the rows of every Tucker submatrix in $M'$.
Otherwise, $M'$ fails to have the consecutive-ones property and every Tucker submatrix 
in $M'$ contains the first $k+1$ rows of $M'$.
\end{lemma}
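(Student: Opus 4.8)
The plan is to argue by induction on the iteration count, using Lemma~\ref{lem:lastRowTucker} as the engine. First I would set up notation: write $M^{(0)} = M$ for the input matrix, and for each completed iteration $j$ let $M^{(j)}$ denote the matrix at the end of that iteration, with $Z_j$ the row that was identified at the head of the minimal bad prefix during iteration $j$ and then moved to the front. The key structural fact, which I would isolate as the induction hypothesis, is: after $j$ iterations (with $0 \le j \le k$ and the loop not yet having terminated by running out of rows), $M^{(j)}$ does not have the consecutive-ones property, its first $j$ rows are exactly $Z_j, Z_{j-1}, \ldots, Z_1$ in that order, and every Tucker submatrix of $M^{(j)}$ contains all of $\{Z_1, \ldots, Z_j\}$.

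The base case $j=0$ is the hypothesis on $M$ itself (it fails consecutive-ones; the empty set of rows is trivially contained in every Tucker submatrix). For the inductive step, suppose the hypothesis holds for $j < k$ and the loop executes iteration $j+1$ because $M^{(j)}$ still has at least $i = j+1$ rows. Since $M^{(j)}$ fails the consecutive-ones property, Lemma~\ref{lem:BLPrefix} applied to its row ordering produces a nonempty minimal prefix $(R_1, \ldots, R_r, Z_{j+1})$ that fails the property, with $(R_1, \ldots, R_r)$ satisfying it; here minimality is exactly what guarantees that $\mR' = \{R_1, \ldots, R_r\}$ has the consecutive-ones property while $\mR' \cup \{Z_{j+1}\}$ does not. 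By Lemma~\ref{lem:lastRowTucker}, $Z_{j+1}$ lies in every Tucker submatrix of $\mR' \cup \{Z_{j+1}\}$. Now I need the two points that make the induction go through. The new matrix $M^{(j+1)} = (Z_{j+1}, R_1, \ldots, R_r)$ still fails the consecutive-ones property, because reordering and deleting rows does not affect which submatrices are present, and $\mR' \cup \{Z_{j+1}\}$ — a submatrix of $M^{(j+1)}$ — already fails it. Its first $j+1$ rows: the crucial observation is that the prefix $(R_1, \ldots, R_r, Z_{j+1})$ must include, among $R_1, \ldots, R_r$, the first $j$ rows $Z_j, \ldots, Z_1$ of $M^{(j)}$, because those $j$ rows together with $Z_{j+1}$ — more precisely, \emph{some} superset of rows ending at $Z_{j+1}$ — must contain a Tucker submatrix, and by the induction hypothesis every Tucker submatrix of $M^{(j)}$ contains all of $Z_1, \ldots, Z_j$; since the bad prefix fails consecutive-ones it contains a Tucker submatrix, which therefore uses all of $Z_1, \ldots, Z_j$, so all of them appear among $R_1, \ldots, R_r$. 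Hence after moving $Z_{j+1}$ to the front, the first $j+1$ rows of $M^{(j+1)}$ are $Z_{j+1}, Z_j, \ldots, Z_1$. Finally, every Tucker submatrix $T$ of $M^{(j+1)}$: $T$ is a submatrix of $M^{(j+1)}$, hence of $\mR' \cup \{Z_{j+1}\}$ possibly together with deletions — actually $T$'s rows lie among $\{Z_{j+1}, R_1, \ldots, R_r\}$; since this set fails consecutive-ones only because of the presence of $Z_{j+1}$ (as $\mR'$ alone is fine), Lemma~\ref{lem:lastRowTucker} gives $Z_{j+1} \in T$; and $T$ is then also a Tucker submatrix living inside $M^{(j)}$ restricted to the rows $\{Z_{j+1}, R_1, \ldots, R_r\}$, which are rows of $M^{(j)}$, so by induction $T$ contains $Z_1, \ldots, Z_j$ as well. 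This completes the inductive step.

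To finish, I would consider the two ways the loop terminates. If it terminates because $M'$ has at most $k$ rows — say it terminated after some iteration $i' \le k$ with $M'$ having $i' \le k$ rows — then by the induction hypothesis those $i'$ rows are precisely $Z_{i'}, \ldots, Z_1$ (all the rows that survived), and every Tucker submatrix of $M'$ contains all of them; since a Tucker submatrix of $M'$ is in particular a submatrix of $M'$, it cannot use more than the $i' \le k$ rows available, so in fact the rows of $M'$ \emph{are} the rows of every Tucker submatrix in $M'$. Also $M'$ still fails consecutive-ones, so at least one Tucker submatrix exists. If the loop instead runs all $k$ iterations with $M'$ still having more than $k$ rows, then the induction hypothesis at $j = k$ says $M'$ fails the consecutive-ones property, its first $k+1$ rows are $Z_{k+1}, Z_k, \ldots, Z_1$ — wait, I should be careful: after $k$ completed iterations we only know the first $k$ rows are $Z_k, \ldots, Z_1$; but the loop condition ``$M$ has at least $i$ rows'' with $i$ running up to $k$, combined with the structure, actually gives us that the first $k$ rows are fixed and there is at least a $(k+1)$-st row present, and every Tucker submatrix contains $Z_1, \ldots, Z_k$; to get the ``first $k+1$ rows'' statement one more application of the prefix-minimality argument on $M^{(k)}$ shows any Tucker submatrix, being a submatrix that fails consecutive-ones, must include a row beyond the $k$ fixed ones, and the minimal bad prefix identifies such a forced row — this is essentially the same reasoning packaged differently, so the claim follows.

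The step I expect to be the main obstacle is the ``first $k+1$ rows'' claim in the non-terminating case: showing not merely that every Tucker submatrix contains $Z_1, \ldots, Z_k$ but that it contains a specific $(k+1)$-st row common to all of them. The clean way to handle this is to observe that running the loop body one more (conceptual) time produces $Z_{k+1}$ via Lemma~\ref{lem:BLPrefix}, and the argument from the inductive step shows every Tucker submatrix of $M^{(k)}$ contains $Z_{k+1}$ as well — so the ``$k+1$ rows'' are $Z_1, \ldots, Z_{k+1}$, and one should phrase the loop and its postcondition so that this $(k+1)$-st distinguished row is exactly the head of the bad prefix found in the final iteration. I would therefore present the induction hypothesis to already include the statement about the first $\min(j,k)+1$ forced rows where applicable, or equivalently track $Z_1, \ldots, Z_{j}$ after $j$ iterations and note that when $j = k$ and the loop exits with surplus rows, the last iteration's prefix head $Z_k$ together with the $k-1$ earlier ones plus one further forced row gives the conclusion — matching the indexing in Figure~\ref{fig:fiveRows}. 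The rest is routine bookkeeping about row permutations and deletions not affecting the presence of submatrices.
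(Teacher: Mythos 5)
Your overall strategy is the same as the paper's (induction on the iterations, with Lemma~\ref{lem:lastRowTucker} supplying the forced row each time), but your inductive step has a genuine gap. You infer that, in iteration $j+1$, ``all of $Z_1,\ldots,Z_j$ appear among $R_1,\ldots,R_r$,'' i.e.\ that the minimal bad prefix extends \emph{beyond} the $j$ already-forced rows and its last row $Z_{j+1}$ is new. That can fail: nothing in your induction hypothesis prevents $\{Z_1,\ldots,Z_j\}$ itself from being the row set of a Tucker submatrix, in which case the first $j$ rows of $M^{(j)}$ already fail the consecutive-ones property, the minimal bad prefix is exactly those $j$ rows, its last row is $Z_1$, and $M^{(j+1)}$ has only $j$ rows. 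This is reachable: with rows $W=\{c_4\}$, $T_a=\{c_1,c_2\}$, $T_b=\{c_2,c_3\}$, $T_c=\{c_1,c_3\}$ in that order and $k=4$, the fourth iteration finds the bad prefix $(T_a,T_b,T_c)$ and the matrix shrinks to three rows. In that case your invariant ``the first $j+1$ rows of $M^{(j+1)}$ are $Z_{j+1},\ldots,Z_1$'' is false, and your terminal analysis (``terminated after iteration $i'$ with exactly $i'$ rows, namely $Z_{i'},\ldots,Z_1$'') does not cover it --- yet this shrinking case is exactly one of the ways the algorithm comes to return a matrix with at most $k$ rows. The paper's proof carries the extra disjunct precisely for this: at the end of iteration $i$, either $M$ has at least $i$ rows and every Tucker submatrix contains the first $i$ rows, or $M$ has only $i-1$ rows, all of which lie in every Tucker submatrix, and $M$ is then returned. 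The repair is short (every Tucker submatrix of the bad prefix contains all previously forced rows, so when the prefix lies inside them the new matrix is exactly the forced set), but as written your step asserts a false statement in a realizable case.

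Separately, your treatment of the ``first $k+1$ rows'' claim does not close. Showing that one further \emph{conceptual} execution of the loop body would identify a forced row $Z_{k+1}$ does not show that this row is among the first $k+1$ rows of the matrix the algorithm actually returns; it only lands in front if that iteration is really performed and $M$ is really replaced by the rearranged prefix, and the $(k+1)$-st row of $M^{(k)}$ need not be forced at all. You correctly sensed the off-by-one tension between the displayed loop bound and the statement, but the paper's intent (see the ``at most $k+1$ iterations'' in Lemma~\ref{lem:TuckerRowsTime} and the caption of Figure~\ref{fig:fiveRows}) is that the loop runs through iteration $k+1$, so the same induction applied at $i=k+1$ gives the conclusion directly; with that reading your argument finishes cleanly, whereas ``this is essentially the same reasoning packaged differently, so the claim follows'' is not a proof of the claim about the returned matrix's first $k+1$ rows.
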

\begin{proof}
By induction on $i$, $M$ does not have the consecutive-ones
property at the end of iteration $i$.
Also by induction on $i$, using Lemma~\ref{lem:lastRowTucker},
at the end of iteration $i$, either $M$ has at least $i$ rows
and every Tucker submatrix in $M$ 
contains the first $i$ rows of $M$, or else $M$ has
only $i-1$ rows and every Tucker submatrix in $M$ contains
these $i-1$ rows, in which case $M$ is returned before another iteration
takes place.
\end{proof}

\begin{lemma}\label{lem:TuckerRowsTime}
Algorithm~\ref{alg:TuckerRows} takes $O(k*size(M))$ time.
\end{lemma}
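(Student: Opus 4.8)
The plan is to bound the cost of each of the $k$ iterations of the \texttt{while} loop separately, and then sum. Each iteration does two things: it invokes the Booth--Lueker prefix procedure of Lemma~\ref{lem:BLPrefix} on the current matrix $M$, and it rearranges the rows by moving the last row $Z$ of the discovered prefix to the front. First I would observe that the row-rearrangement step is cheap: the matrix is stored as a list of its rows (each row a sparse list of the columns containing a $1$), so extracting the prefix $(R_1,\dots,R_r,Z)$ and forming $(Z,R_1,\dots,R_r)$ takes time proportional to the number of rows touched plus the total size of those rows, which is $O(size(M))$ in the worst case. By Lemma~\ref{lem:BLPrefix}, the prefix-finding step also takes $O(size(M))$ time. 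Since the matrix $M'$ at any point in the execution is an ordering of a \emph{subset} of the rows of the original input, its size is at most that of the original, so each of these two steps costs $O(size(M))$ where here $size(M)$ refers to the original input.

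Second I would note that the loop body executes at most $k$ times, because the loop guard is $i\le k$ and $i$ is incremented by one each pass. Multiplying the per-iteration bound $O(size(M))$ by the iteration count $k$ gives the claimed $O(k\cdot size(M))$ total, and the (constant) bookkeeping outside the loop is absorbed into this. One point worth making explicit in the write-up is that nothing in the loop can cause the matrix to grow: the operation in each iteration only permutes and (implicitly, by taking a prefix) discards rows, so $size$ is non-increasing across iterations; hence bounding every iteration by the original size is legitimate and not circular.

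The only mild obstacle is making precise the claim that finding the \emph{minimal} prefix without the consecutive-ones property costs $O(size(M))$: Lemma~\ref{lem:BLPrefix} as stated gives the \emph{maximal} prefix \emph{with} the consecutive-ones property, and the minimal bad prefix is obtained by appending the single next row to that maximal good prefix. So one extra row is read, at cost $O(|Z|)\le O(size(M))$, and the two notions line up. After that caveat the argument is routine, so I would keep this proof to essentially two sentences: each of the at most $k$ iterations runs the linear-time prefix routine (Lemma~\ref{lem:BLPrefix}) plus an $O(size(M))$ row-permutation, and $size(M)$ does not increase, giving $O(k\cdot size(M))$ overall.
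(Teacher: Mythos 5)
Your proposal is correct and follows essentially the same argument as the paper: each of the $O(k)$ loop iterations is dominated by one application of Lemma~\ref{lem:BLPrefix} at cost $O(size(M))$, giving $O(k \cdot size(M))$ overall. The extra observations you make (cheap row permutation, non-increasing matrix size, maximal good prefix versus minimal bad prefix) are sound but just elaborate on what the paper leaves implicit.
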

\begin{proof}
It has at most $k+1$ iterations of the loop, each of which takes
$O(size(M))$ time by Lemma~\ref{lem:BLPrefix}.
\end{proof}

\subsection{{\tt FindRows}}

Since Algorithm~\ref{alg:TuckerSubmatrix} only calls the remaining procedures
if {\tt TuckerRows} returns a submatrix with at least five rows,
we may assume the following henceforth,
by Lemma~\ref{lem:TuckerRowsCorrect}:

\begin{itemize}
\item $M$ is a matrix that does not have the consecutive-ones
property and where every Tucker submatrix contains the first five rows of $M$.
\end{itemize}

The purpose of {\tt FindRows} is to find the rows of a Tucker submatrix
in a matrix meeting this condition.
Since $M_{IV}$ and $M_V$ have only four rows, we may exclude them from
consideration.   
The only Tucker submatrices we need to consider henceforth are $M_I$,
$M_{II}$, and $M_{III}$.

\begin{proposition}\label{prop:cycle}
The overlap graphs of $M_I(k)$, $M_{II}(k)$, and $M_{III}(k)$ are simple cycles.
\end{proposition}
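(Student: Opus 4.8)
The statement is about the structure of the three infinite families of Tucker matrices $M_I(k)$, $M_{II}(k)$, $M_{III}(k)$, so the natural approach is simply to write down each matrix explicitly from Figure~\ref{fig:TuckMatrices}, read off which pairs of rows overlap (nonempty intersection, neither contained in the other), and verify that the resulting overlap graph is, in each case, a single cycle on all the rows. Since a simple cycle is precisely a connected $2$-regular graph, I would organize the verification around showing that every row overlaps exactly two others and that the overlap relation is connected.

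First I would recall the shapes. The matrix $M_{III}(k)$ is (up to the usual presentation) the vertex--edge incidence-type structure of a chordless $k$-cycle: its rows are the ``length-two consecutive blocks'' $\{1,2\},\{2,3\},\ldots,\{k-1,k\},\{k,1\}$ on $k$ columns. Two such blocks overlap iff they share exactly one column, i.e.\ iff they are cyclically adjacent, so the overlap graph is literally the $k$-cycle $C_k$ — this case is immediate. For $M_I(k)$ and $M_{II}(k)$ I would likewise write the rows as intervals (for the rows that are consecutive-ones orderable) together with the one distinguished last row that obstructs the consecutive-ones property, and check overlaps by comparing endpoints. For interval-type rows, ``$R$ and $R'$ overlap'' means their interval endpoints strictly interleave; the families are built precisely so that, around the circular arrangement forced by the matrix, consecutive rows interleave and non-consecutive ones are either disjoint or nested. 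The last (obstructing) row in $M_I$ and $M_{II}$ closes the cycle by overlapping the two ``end'' rows, which is exactly the geometric reason it cannot be placed consecutively.

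The key steps, in order: (1) fix explicit column sets for the rows of each of $M_I(k)$, $M_{II}(k)$, $M_{III}(k)$, following Figure~\ref{fig:TuckMatrices}; (2) for $M_{III}(k)$, observe the rows are cyclically consecutive pairs and conclude the overlap graph is $C_k$; (3) for $M_I(k)$ and $M_{II}(k)$, index the rows in their natural cyclic order and show row $R_i$ overlaps $R_{i-1}$ and $R_{i+1}$ (indices mod the number of rows, with the last row serving as the ``wrap-around'' that links the first and last interval rows); (4) show $R_i$ does \emph{not} overlap any $R_j$ with $j$ not cyclically adjacent to $i$, by exhibiting either disjointness or a containment — this uses the minimality of Tucker matrices, since any extra overlap edge would typically give a proper sub-obstruction; (5) conclude each overlap graph is connected and $2$-regular, hence a simple cycle.

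**Main obstacle.** Steps (1)–(2) are essentially bookkeeping, but steps (3)–(4) — especially step (4), ruling out ``chords'' in the overlap graph — are where the real work lies, because it requires carefully tracking endpoint inequalities in the two general families $M_I(k)$ and $M_{II}(k)$ rather than in a single fixed matrix. The cleanest way to handle this is probably not a brute-force endpoint case analysis but an appeal to minimality: if some $R_i$ overlapped a non-adjacent $R_j$, one could argue (using Lemma~\ref{lem:Meidanis} on the would-be overlap component, or directly) that a proper subset of rows already fails the consecutive-ones property, contradicting that $M_I(k)$, $M_{II}(k)$, $M_{III}(k)$ are minimal forbidden submatrices. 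I expect the proof to combine a short explicit check that the ``cyclic'' overlaps are present with this minimality argument to exclude all others.
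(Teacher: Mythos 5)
The paper gives no proof of Proposition~\ref{prop:cycle} at all: it is taken as immediate by inspection of the explicit matrices in Figure~\ref{fig:TuckMatrices}, and your steps (1)--(3) together with a direct endpoint/containment check in step (4) amount to exactly that inspection. (One labeling slip: the cyclic ``consecutive pairs'' matrix you attribute to $M_{III}(k)$ is $M_I(k)$ in the paper's notation; the other two families are a staircase of consecutive pairs plus one long last row.) The direct check is short: in each family non-adjacent staircase rows are disjoint, and the long last row contains each interior staircase row while overlapping only the two extreme ones, so every row has exactly two overlap neighbors and the graph is a single cycle.

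The genuine gap is the shortcut you prefer for step (4). From an extra overlap edge between non-consecutive rows you cannot conclude that a proper subset of the rows already fails the consecutive-ones property: a set of rows whose overlap graph contains a (even chordless) cycle may well have the consecutive-ones property. For example, the intervals $\{1,2,3\},\{2,3,4\},\{3,4,5\}$ pairwise overlap, and longer chordless overlap cycles are also realizable by intervals (interval overlap graphs are exactly the circle graphs). The paper itself flags this pitfall when it declines to shorten $\mP \cup \{Z\}$ to a smaller chordless cycle of the overlap graph, precisely because $\mP' \cup \{Z\}$ might still have the consecutive-ones property. So minimality of the Tucker matrices does not by itself exclude chords in their overlap graphs, and Lemma~\ref{lem:Meidanis} does not rescue the inference; you must (and easily can) exclude chords by the explicit disjointness/containment computation. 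With that replacement your argument is complete and coincides with what the paper implicitly relies on.
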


Let $\{Z_1, Z_2, \ldots, Z_5\}$ be the first five rows of $M$.
Let us choose $Z \in \{Z_1, Z_2, \ldots, Z_5\}$,
and let $\mR'$ be the remaining rows of $M$, excluding $Z$.
Since every instance of a Tucker submatrix contains $Z$, $\mR'$
has the consecutive-ones property, but $\mR' \cup \{Z\}$ does not.

Removal of one element from a chordless cycle gives
a chordless path.  Therefore, we seek a chordless path in the overlap
graph of $\mR'$ that has the consecutive-ones property, and such that when we add $Z$
to the rows on the path, they no longer have the consecutive-ones property.

\begin{definition}\label{def:suitable}
Let $\mR'$ be a set of rows of $M$ such that $\mR'$ has
the consecutive-ones property, but $\mR' \cup \{Z\}$ does not.
Rows $A,B \in \mR'$ are a {\em suitable pair  for $Z$}
if they are members of the same overlap component $\mR_Q$
of $\mR'$, each of $A$ and $B$ contains a 1 of $Z$, and in a consecutive-ones
ordering of $\mR_Q$, a 0 of $Z$ lies in between $A$ and $B$.  
(See Figure~\ref{fig:PP1P2}.)
\end{definition}

Our strategy is to find a suitable pair $\{A,B\}$ for some $Z$
and find a shortest, hence chordless, path
$\mP = (A = R_1, R_2, \ldots, R_k = B)$ between rows
$A$ and $B$ in the overlap graph.  $\mP$ must exist, since
$A$ and $B$ are members of the same overlap component $\mR_Q$.    
(See Figures~\ref{fig:PP1P2} and~\ref{fig:Path}.)
It is easy to see by Lemma~\ref{lem:Meidanis} that every consecutive-ones
ordering of $\mR_Q$ forces a 0 of $Z$ between two 1's,
and since these three columns are in distinct Venn classes
of $\mP$, this is true of the rows of $\mP$ also.  Therefore, $\mP$
has the consecutive-ones property but $\mP \cup \{Z\}$ does not, and
the overlap graph of $\mP$ is a path.

Unfortunately, the introduction of $Z$ may introduce chords in the overlap
graph between $Z$ and rows of $\mP$ other than its endpoints, $A$ and $B$.  
Therefore, Proposition~\ref{prop:cycle} does not imply
that  $\mP \cup \{Z\}$ is the set of rows of an instance
of a Tucker matrix.
We could find a smaller chordless cycle $\mP' \cup \{Z\}$ in the 
overlap graph of $\mP \cup \{Z\}$, but then we would run the
risk that $\mP' \cup \{Z\}$ would have the consecutive-ones property, defeating
our effort to find a minimal set of rows that does not have the consecutive-ones
property.  

Our solution is to show that if we find the minimal prefix $\mP_1$ of $\mP$ such that 
$\mP_1 \cup \{Z\}$ does not have the consecutive-ones property, and then
the minimal suffix $\mP_2$ of $\mP_1$ such that $\mP_2 \cup \{Z\}$ does not
have the consecutive-ones property, then $\mP_2 \cup \{Z\}$ is a minimal
set of rows that does not have the consecutive-ones property.  Therefore,
it must be the rows of a Tucker matrix.  

\begin{figure}
\centerline{\includegraphics[scale=.25]{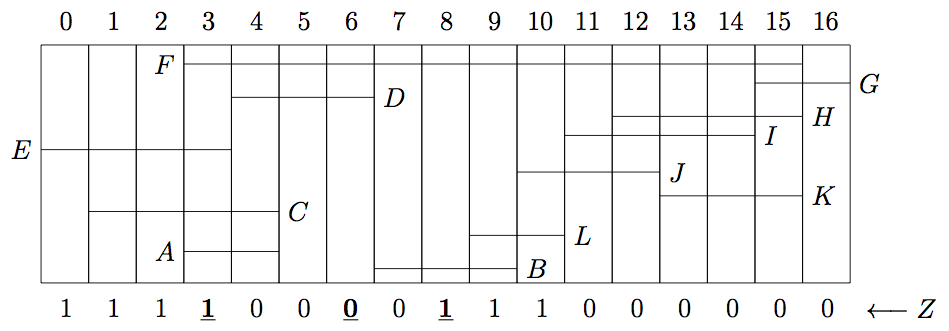}}
\caption{Example of finding a minimal set of rows that does not have
the consecutive-ones property.  The set $\mR_Q$ of rows, excluding $Z$,
has the consecutive-ones property, but $\mR_Q \cup \{Z\}$ does not.
Rows $A$ and $B$ are a suitable pair for $Z$, as shown 
by the boldface 1, 0, and 1.}\label{fig:PP1P2}
\end{figure}

{\bf Example:}  
{\em 
In Figure~\ref{fig:PP1P2},
$\mP = (A,E,F,G,H,J,L,B)$ is a shortest path from $A$ to $B$ in the overlap graph
of $\mR_Q$.  
In a consecutive-ones ordering of $\mP$ 
(Figure~\ref{fig:Path}),
the 0 in column 6 is forced to go between
the 1's in columns 3 and 8 because of where $A$ and $B$ must be placed in a consecutive-ones
ordering of $\mP$.  Therefore, $\mP \cup \{Z\}$ does not have the consecutive-ones property.

Next, $(A,E,F,G,H,J)$ is the smallest prefix $\mP_1$ of $\mP$ such that $\mP_1 \cup \{Z\}$
does not have the consecutive-ones property, and $(F,G,H,J)$ is the smallest
suffix $\mP_2$ of $\mP_1$ such that $\mP_2 \cup \{Z\}$ does not have the consecutive-ones
property.  $\mP_2 \cup \{Z\} = \{F,G,H,J,Z\}$ gives the rows of a Tucker matrix  
(Figure~\ref{fig:P1P2}). 
}

\begin{figure}
\centerline{\includegraphics[scale=.3]{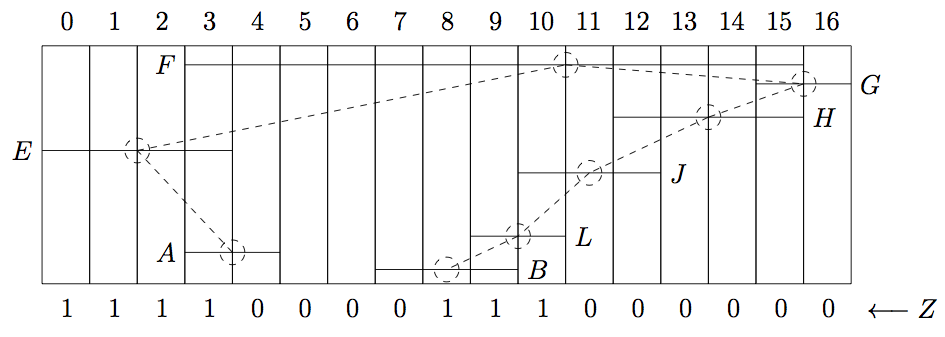}}
\caption{A shortest, hence chordless, path $\mP$ in the overlap graph of Figure~\ref{fig:PP1P2}
between $A$ and $B$.
}\label{fig:Path}
\end{figure}

\begin{figure}
\centerline{\includegraphics[scale=.27]{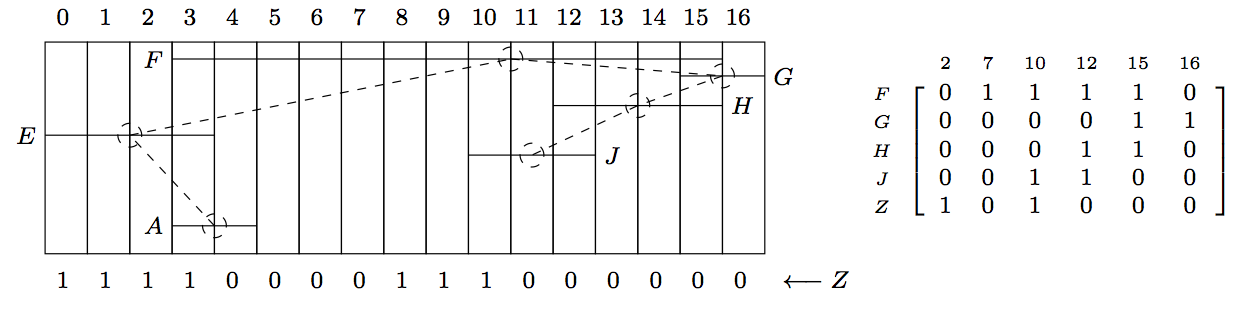}}
\centerline{\includegraphics[scale=.27]{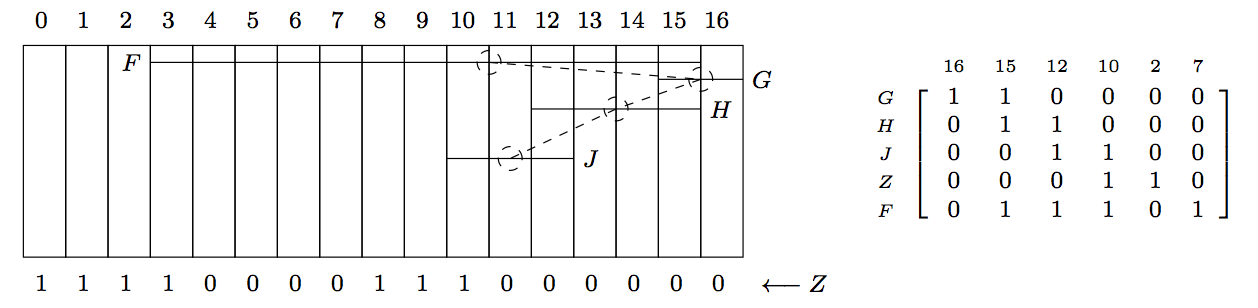}}
\caption{In Figure~\ref{fig:Path}, $(A,E,F,G,H,J)$ is the minimal prefix $\mP_1$ of $\mP$ such that
$\mP_1 \cup \{Z\}$ does not have the consecutive-ones property, and
$(F,G,H,J)$ is the minimal suffix $\mP_2$ of $\mP_1$ such that $\mP_2 \cup \{Z\}$
does not have the consecutive-ones property.  These can be found efficiently,
by Lemma~\ref{lem:BLPrefix}.  Then $\mP_2$
is a minimal subpath of $\mP$ whose union with $Z$ fails to have the consecutive-ones
property, hence $\mP_2 \cup \{Z\}$ gives the rows of a Tucker matrix.
The matrix at the top right is the submatrix we obtain after identifying
the columns (Algorithm~\ref{alg:FindColumns}, below),
and the permutation of rows and columns given at the bottom
right reveals that it is an instance of $M_{III}(5)$.}\label{fig:P1P2}
\end{figure}

A shortest path $(R_1, R_2, \ldots, R_k)$ between a suitable pair $A$ and $B$
can be found efficiently using {\tt OverlapBFS}.
Finding a minimal prefix whose union with $\{Z\}$
does not have the consecutive-ones property reduces to finding the
minimal prefix of $(Z, R_1, R_2, \ldots, R_k)$ that does not have the
consecutive-ones property.  
Finding a minimal suffix whose union with $\{Z\}$ does not have
the consecutive-ones property is solved similarly.  These problems can be
solved efficiently by Lemma~\ref{lem:BLPrefix}.
We give an algorithm for finding a suitable pair below.
The procedure is summarized as Algorithm~\ref{alg:FindRows}, where $\mP = (R_1, R_2, \ldots, R_k)$,
$\mP_1 = (R_1, R_2, \ldots, R_j)$, and $\mP_2 = (R_i, R_{i+1}, \ldots, R_j)$.

\begin{algorithm}
\KwData{A matrix $M$ that does not have the consecutive-ones property and such that every
Tucker submatrix contains the first five rows of $M$.}
\KwResult{A submatrix $M'$ of rows that does not have the consecutive-ones property, every Tucker submatrix of $M'$
contains all rows of $M'$, and, excluding the last row, the overlap graph of rows of $M'$ is a path.}
Let $(A,B,Z) \longleftarrow$ {\tt SuitablePair$(M)$} (Algorithm~\ref{alg:SuitablePair})\;
Let $\mR'$ be the rows of $M$ excluding $Z$\;
$(A = R_1, R_2, \cdots , R_k = B) \longleftarrow$  a shortest path from $A$ to $B$ in the overlap\\ \quad \quad  graph of $\mR'$ (Algorithm~\ref{alg:OverlapBFS})\; 

$(Z,R_1,R_2, \cdots ,R_j) \longleftarrow$ the minimal prefix of $(Z, R_1, R_2, \cdots , R_k)$ that does not\\ \quad \quad  have the consecutive-ones property (Lemma~\ref{lem:BLPrefix})\;
$(Z, R_j, R_{j-1}, \ldots, R_i) \longleftarrow$ the minimal prefix of $(Z, R_j, R_{j-1}, \cdots, R_1)$ that\\ \quad \quad does not have the consecutive-ones property (Lemma~\ref{lem:BLPrefix})\;
Return $(R_i, R_{i+1}, \cdots, R_j, Z)$\;
\caption{{\tt FindRows($M$)}}\label{alg:FindRows}
\end{algorithm}

\begin{lemma}\label{lem:FindRowsCorrect}
{\tt FindRows} returns a matrix $M'$ that does not have the consecutive-ones
property and where every Tucker submatrix in $M'$ contains all rows of $M'$.
If $p$ is the number of rows of $M'$, then the overlap graph of the first $p-1$
rows is a path.
\end{lemma}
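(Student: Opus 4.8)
The plan is to trace through the steps of \texttt{FindRows} and verify each postcondition in turn, relying on Lemma~\ref{lem:lastRowTucker}, Lemma~\ref{lem:Meidanis}, and the discussion preceding Definition~\ref{def:suitable}. First I would record the structural facts established before the algorithm is stated: the call \texttt{SuitablePair}$(M)$ returns $A,B \in \mathcal{R}'$ (where $\mathcal{R}'$ is $M$ minus the row $Z$) lying in a common overlap component $\mathcal{R}_Q$, with $A$ and $B$ each meeting a $1$ of $Z$ and a $0$ of $Z$ forced between them in every consecutive-ones ordering of $\mathcal{R}_Q$. Since $A,B$ lie in the same overlap component, a shortest path $\mathcal{P} = (A = R_1, \ldots, R_k = B)$ in the overlap graph of $\mathcal{R}'$ exists; being shortest it is chordless, so the overlap graph of $\mathcal{P}$ itself is a path. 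By the argument in the text (three distinct Venn classes of $\mathcal{P}$ carry the forced $1$, $0$, $1$, invoking Lemma~\ref{lem:Meidanis}), $\mathcal{P}$ has the consecutive-ones property but $\mathcal{P} \cup \{Z\}$ does not.

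Next I would handle the two ``minimal prefix'' steps. Because $\mathcal{P} \cup \{Z\}$ fails the consecutive-ones property while $\mathcal{P}$ alone has it, the minimal prefix of $(Z, R_1, \ldots, R_k)$ without the property is well-defined and, since $Z$ is a member of every Tucker submatrix in $\mathcal{P} \cup \{Z\}$ by Lemma~\ref{lem:lastRowTucker}, it has the form $(Z, R_1, \ldots, R_j)$ for some $j \le k$; set $\mathcal{P}_1 = (R_1, \ldots, R_j)$. Then $\mathcal{P}_1 \cup \{Z\}$ fails the property but $\mathcal{P}_1$ alone, being a sub-collection of $\mathcal{P}$, still has it. Applying Lemma~\ref{lem:BLPrefix} again to the reversed sequence $(Z, R_j, R_{j-1}, \ldots, R_1)$ yields a minimal prefix $(Z, R_j, \ldots, R_i)$; set $\mathcal{P}_2 = (R_i, \ldots, R_j)$, and note $\mathcal{P}_2$ is a contiguous subpath of $\mathcal{P}$, hence its overlap graph is a path, and $\mathcal{P}_2 \cup \{Z\}$ does not have the consecutive-ones property while every proper prefix-of-a-prefix does. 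The returned matrix is $M' = (R_i, \ldots, R_j, Z)$, so $p = j - i + 2$ and the first $p-1$ rows are exactly $\mathcal{P}_2$, whose overlap graph is a path; this gives the last postcondition.

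The main obstacle — and the part I would spell out most carefully — is the claim that \emph{every} Tucker submatrix in $M'$ contains \emph{all} rows of $M'$, i.e.\ that $\mathcal{P}_2 \cup \{Z\}$ is an inclusion-minimal set of rows failing the consecutive-ones property. By Lemma~\ref{lem:lastRowTucker}, $Z$ is in every Tucker submatrix, so it suffices to show no proper subset of $\mathcal{P}_2$ together with $Z$ fails the property. Here I would argue that a minimal such row set, other than $Z$, must consist of rows forming a chordless path in the overlap graph whose endpoints both meet a $1$ of $Z$ with a $0$ of $Z$ forced strictly between them (this is the content of Proposition~\ref{prop:cycle} applied to $M_I, M_{II}, M_{III}$, noting $M_{IV}, M_V$ are excluded by the five-row hypothesis). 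Since $\mathcal{P}_2$ is already a chordless subpath of $\mathcal{P}$, any sub-collection of it that, with $Z$, fails the property would have to be a \emph{contiguous} subpath $(R_a, \ldots, R_b)$ with $i \le a \le b \le j$ whose two ends straddle a forced $0$ of $Z$. The minimality of the prefix step forces $a = i$ (otherwise $(R_i,\ldots,R_j) \cup \{Z\}$ would not be the minimal prefix of the reversed sequence having the property), and the minimality of the first prefix step, carried back through $\mathcal{P}_1$, forces $b = j$; combining these gives $(R_a,\ldots,R_b) = \mathcal{P}_2$, so $\mathcal{P}_2 \cup \{Z\}$ is minimal and therefore the rows of a Tucker submatrix. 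The one technical point needing care is justifying that a minimal failing set of rows really must be contiguous along $\mathcal{P}$ rather than an arbitrary subset — this follows because a minimal failing set (minus $Z$) is an induced chordless path in the overlap graph of $\mathcal{R}'$ by Proposition~\ref{prop:cycle}, and an induced path inside the chordless path $\mathcal{P}$ must be a contiguous subpath.
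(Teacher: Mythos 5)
Your proposal is correct and follows essentially the same route as the paper: suitable pair plus shortest (hence chordless) path and the Venn-class argument give a path $\mathcal{P}$ with $\mathcal{P}\cup\{Z\}$ failing the consecutive-ones property, and minimality of $\mathcal{P}_2\cup\{Z\}$ is obtained exactly as in the paper, by noting a failing proper subset must be a contiguous subpath (Proposition~\ref{prop:cycle} plus chordlessness) and hence lies inside $\mathcal{P}_1$ minus its last row or $\mathcal{P}_2$ minus its first row, contradicting the two prefix minimalities. The only cosmetic remark is that your parenthetical derives $a=i$ before $b=j$, whereas the clean order (and the paper's) handles the two omitted-endpoint cases symmetrically, but the content is the same.
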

\begin{proof}
In the proof of correctness of {\tt SuitablePair} (Algorithm~\ref{alg:SuitablePair}),
below, we show that $(A,B,Z)$ exists.
Let $\mP = (A = R_1, R_2, \ldots, R_k = B)$.  Since $A$ and $B$ lie in the same 
overlap component, $\mR_Q$, $\mP$ exists.  It is a shortest path, hence a chordless
path in the overlap graph.  The first $p-1$ rows of the returned matrix
is a subpath of $\mP$, so their overlap graph is a path.

Since $\mR_Q$ has the consecutive-ones property, so does $\mP$, and
any consecutive-ones ordering of $\mR_Q$ is a consecutive-ones
ordering of $\mP$.  Every consecutive-ones ordering
of $\mR_Q$ forces the 0 that lies between $A$ and $B$ to lie between the two
1's in $A$ and $B$, by Lemma~\ref{lem:Meidanis}.  This is true of at least one consecutive-ones
ordering of $\mP$.  The order of Venn classes of $\mP$ is
uniquely determined up to reversal, and the 0 that lies between $A$ and $B$
and the 1's in $A$ and $B$ are in different constrained Venn classes of $\mP$,
so this 0 lies between the two 1's in every consecutive-ones ordering of $\mP$.
$\mP \cup \{Z\}$ does not have the consecutive-ones property.

Since $\mP \cup \{Z\}$ does not have the consecutive-ones property, 
the shortest prefix $\mP_1$ of $\mP$ such that $\mP_1 \cup \{Z\}$ 
does not have the consecutive-ones property exists.
By the definition of $(Z, R_1, R_2, \ldots, R_j)$, 
$\mP_1 = (R_1, R_2, \ldots, R_j)$.  Since $\mP_1 \cup \{Z\}$
does not have the consecutive-ones property, 
the shortest suffix $\mP_2$ of $\mP_1$ such that 
$\mP_2 \cup \{Z\}$ does not have the consecutive-ones property exists.
By the definition of $(Z, R_j, R_{j-1}, \ldots, R_i)$,
$\mP_2 = (R_i, R_{i+1}, \ldots, R_j)$.

Suppose there is a proper subset $\mR'$
of the rows on $\mP_2$ such that $\mR' \cup \{Z\}$ does not have the
consecutive-ones property.  Since $\mP_2$ is a shortest path, it is a chordless path,
so $\mR'$ is a subpath of $\mP_2$ by Proposition~\ref{prop:cycle}.
Let $\mR_1 = (R_1, R_2, \ldots, R_{j-1})$.  This is the result of removing
the last row from $\mP_1$.
Let $\mR_2$ $= (R_{i+1},$ $ R_{i+2}, \ldots, R_j)$.  This is the result
of removing the first row from $\mP_2$.
By the minimality of $\mP_1$ and $\mP_2$, $\mR_1 \cup \{Z\}$ and $\mR_2 \cup \{Z\}$
have the consecutive-ones
property.  Since $\mR'$ is a proper subpath of $\mP_2$,
$\mR' \subseteq \mR_1$ or $\mR' \subseteq \mR_2$, so $\mR' \cup \{Z\}$ 
has the consecutive-ones property, contradicting our assumption that it does not.
Therefore, $\mP_2 \cup \{Z\}$ is a minimal set of rows that does not have
the consecutive-ones property.  It must be 
a minimal set of rows that contains an instance of a Tucker matrix.
\end{proof}

\subsection{Finding a suitable pair}

The procedure {\tt SuitablePair}, which is called from {\tt FindRows} (Algorithm~\ref{alg:FindRows}),
takes as a parameter a matrix $M$ that does not have the consecutive-ones
property and such that every instance of a Tucker submatrix in $M$ contains the first five rows.

\begin{lemma}\label{lem:corral}
Suppose a matrix $M$ fails to have the consecutive-ones property,
has no Tucker submatrix with fewer than five rows and every Tucker
matrix contains the rows $\{Z_1, Z_2, Z_3, Z_4, Z_5\}$.
For each $Z_i \in \{Z_1, Z_2, \ldots, Z_5\}$, let $\mR_i$ be the rows
of $M$, excluding $Z_i$.  For one of the five choices of $Z_i$,
$\mR_i$ has an overlap component with a suitable pair $A,B$ for $Z_i$.
\end{lemma}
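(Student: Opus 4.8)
\noindent\emph{Proof plan.} The plan is to locate a single Tucker submatrix of $M$, delete one of the five distinguished rows from it, and read a suitable pair for that row off the forced Venn-class order supplied by Lemma~\ref{lem:Meidanis}.

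First I would record the reductions. Since every Tucker submatrix of $M$ contains $Z_i$, the matrix $\mR_i$ contains no Tucker submatrix, so by Tucker's characterization (Figure~\ref{fig:TuckMatrices}) each $\mR_i$ has the consecutive-ones property; however $M = \mR_i \cup \{Z_i\}$ does not. Let $M_T$ be any Tucker submatrix of $M$, with row set $\mT$ and column set $D$. Because $M$ has no Tucker submatrix with fewer than five rows, $M_T$ is neither $M_{IV}$ nor $M_V$ (each of which has four rows), so $M_T$ is one of $M_I(k)$, $M_{II}(k)$, $M_{III}(k)$; by hypothesis $\{Z_1,\dots,Z_5\}\subseteq\mT$, and by Proposition~\ref{prop:cycle} the overlap graph of $\mT$ computed on the columns $D$ is a simple cycle whose vertex set is $\mT$. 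One more observation, from minimality of $M_T$: no column of $D$ lies in only one row of $\mT$ (such a column could be deleted while keeping the matrix non-consecutive-ones, contradicting minimality), so every column of $D$ lies in at least one row of $\mT\setminus\{Z_j\}$ for each fixed $j$.

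Next I would fix one index, say $i$, and set $\mS = \mT\setminus\{Z_i\}$. Deleting the vertex $Z_i$ from the cycle leaves a connected overlap graph on $\mS$ (computed on $D$), and since two rows that overlap on $D$ also overlap on the full column set, all of $\mS$ lies in a single overlap component $\mR_Q$ of $\mR_i$; moreover $\mR_Q$, being a set of rows of $\mR_i$, has the consecutive-ones property, and so does $\mS$, while $\mS[D]\cup\{Z_i\}=M_T$ does not. Applying Lemma~\ref{lem:Meidanis} to $\mS$ on $D$ gives a sequence $V_1,\dots,V_t$ of its constrained Venn classes whose order is forced up to reversal in every consecutive-ones ordering of $\mS[D]$, and by the minimality observation these classes partition $D$. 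The key step is then to show that $Z_i$, restricted to $D$, cannot be made consecutive in any consecutive-ones ordering of $\mS[D]$ (else $M_T$ would have the consecutive-ones property), and to convert this into the required pattern. Let $J=\{\,j : V_j\cap Z_i\neq\emptyset\,\}$, $a=\min J$, $b=\max J$; one sees $Z_i\cap D\neq\emptyset$ and in fact $a<b$, since otherwise $Z_i\cap D$ would already be consecutive. If $J$ is not an interval, pick $m$ with $a<m<b$ and $m\notin J$. If $J=\{a,\dots,b\}$, then since $Z_i\cap D$ is not consecutive some $V_m$ with $a<m<b$ satisfies $V_m\not\subseteq Z_i$, and here $m-1,m+1\in J$. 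In either situation one extracts columns $c_-,c_0,c_+$ with $Z_i[c_-]=Z_i[c_+]=1$ and $Z_i[c_0]=0$ lying in three distinct constrained Venn classes of $\mS[D]$, with $c_0$'s class strictly between the other two in the order $V_1,\dots,V_t$. Picking rows $A,B\in\mS$ with $c_-\in A$ and $c_+\in B$, and applying Lemma~\ref{lem:Meidanis} to $\mR_Q$ (whose constrained Venn classes refine those of $\mS[D]$, and whose order restricted to $D$ must agree with $V_1,\dots,V_t$ up to reversal), one concludes that in every consecutive-ones ordering of $\mR_Q$ a $0$ of $Z_i$ lies, in a distinct constrained Venn class, between a $1$ of $Z_i$ in $A$ and a $1$ of $Z_i$ in $B$. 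Thus $\{A,B\}$ is a suitable pair for $Z_i$, which proves the lemma; in fact the argument works verbatim for each of the five choices of $Z_i$, not just one of them.

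The step I expect to be the main obstacle is the profile analysis of $Z_i$ against the forced Venn-class order: carrying out the case split on whether $J$ is an interval, extracting $c_-,c_0,c_+$ in genuinely distinct constrained classes in the correct relative position, and reconciling this with the precise meaning of ``a $0$ of $Z$ lies in between $A$ and $B$'' in Definition~\ref{def:suitable} (and with the slightly stronger reading---three distinct constrained Venn classes---that is needed for the later use in {\tt FindRows}). The remaining ingredients (the reduction to a single $M_T$, the minimality observation, the passage from overlap on $D$ to overlap on all columns, and the transfer of the forced order from $\mS[D]$ to $\mR_Q$ via Lemma~\ref{lem:Meidanis}) are routine once that analysis is in place.
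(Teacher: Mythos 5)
There is a genuine gap, and it sits exactly where you predicted the difficulty would be. Your parenthetical minimality claim is false: Tucker minimality says that deleting any row or column leaves a matrix \emph{with} the consecutive-ones property; it does not forbid columns of $M_T$ having a single 1, and the non-cycle Tucker matrices do have such columns (they are precisely the ``incomplete'' columns that get completed by simplicial vertices in Figure~\ref{fig:LBGraphsTucker}). Consequently the constrained Venn classes of $\mS[D]$ need not partition $D$, your claim that $Z_i[D]$ cannot be made consecutive relative to them is unjustified, and the failure of the consecutive-ones property of $M_T$ can be caused by the \emph{second} alternative of Lemma~\ref{lem:violation2} (a 0-1-0 configuration together with a 1 of $Z_i$ in the unconstrained class of $\mS[D]$) rather than by a 1-0-1 configuration. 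In that situation your extraction of $c_-,c_0,c_+$ breaks down, and in fact no suitable pair need exist for that choice of $Z_i$ at all. Concretely, let $M$ be the five-row Tucker matrix (an instance of $M_{II}$) with rows $R_0=\{1,2\}$, $R_1=\{2,3\}$, $R_2=\{3,4\}$, $R_3=\{4,5\}$, $Z'=\{2,3,4,6\}$; columns $1$, $5$, $6$ each carry a single 1. Its only Tucker submatrix is itself, so the hypotheses of the lemma hold with $\{Z_1,\ldots,Z_5\}$ equal to all five rows. For $Z_i=R_0$ the only rows containing a 1 of $R_0$ are $R_1$ and $Z'$, which are nested, so no suitable pair for $R_0$ exists; the same happens for $Z_i=R_3$ and for $Z_i=Z'$ (the forced order of the classes $\{1\},\ldots,\{5\}$ leaves no 0 of $Z'$ strictly between two disjoint rows carrying 1's of $Z'$). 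This refutes your closing assertion that the argument works verbatim for every one of the five choices, and it shows that some selection argument among the five rows is unavoidable, not a cosmetic weakening.

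That selection argument is exactly what the paper's proof supplies and what your proposal omits: by inspection of $M_I$, $M_{II}$, $M_{III}$ (Figure~\ref{fig:TuckRotations}), deleting a row in an \emph{interior} position $i\in\{2,\ldots,k-3\}$ of the overlap cycle leaves rows $i-1$ and $i+1$ as a suitable pair, and since at most four of the five distinguished rows can occupy the bad positions $\{0,1,k-2,k-1\}$, at least one $Z_i$ admits a suitable pair. So the pigeonhole over positions in a concrete Tucker matrix is the missing idea; a per-$i$ argument of the kind you outline cannot work. A secondary, smaller point: even when a 1-0-1 configuration in three distinct constrained classes does exist, choosing arbitrary rows $A\ni c_-$ and $B\ni c_+$ does not by itself yield two disjoint rows with a 0 of $Z_i$ strictly between them (either row may extend past $c_0$); the paper gets disjointness either from the explicit structure of the Tucker matrices in Lemma~\ref{lem:corral} or from the extremal choice of endpoints in Lemma~\ref{lem:findAB}, and your write-up would need one of these devices as well.
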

\begin{proof}
(See Figure~\ref{fig:TuckRotations}.)
Let $M_T$ be an instance of  Tucker submatrix in $M$.  Since 
$M_T$ must have at least five rows, it is an instance of
$M_I(k)$, $M_{II}(k)$ or $M_{III}(k)$ for $k \geq 5$.  

For each choice of $Z_i$, $\mR_i$ has the consecutive-ones
property, since every Tucker matrix contains $Z_i$ as one of its
rows.  Let $\mR_T$ be the rows of $M_T$, excluding $Z_i$.
A consecutive-ones ordering of $\mR_i$ imposes a
consecutive-ones ordering on $\mR_T$.
Since the overlap graph of $\mR_T$ is connected,
by Proposition~\ref{prop:cycle}, the ordering of Venn
classes of $\mR_T$ is unique, up to reversal, by Lemma~\ref{lem:Meidanis}. 

Let the rows of $M_T$ be numbered as in Figure~\ref{fig:TuckMatrices}.
As shown in Figure~\ref{fig:TuckRotations},
unless $Z_i$ contains row $i \in \{0,1,k-2,k-1\}$ of $M_T$,
rows $i-1$ and $i+1$ of $M_T$ are a suitable pair for $Z_i$.
There are at most four of the five choices of $Z_i$ that can fail to
have a suitable pair in $\mR_i$.
\end{proof}

\begin{figure}[!ht]
  \captionsetup[subfigure]{labelformat=empty}
		\centering
\subfloat[$M_I$]{\includegraphics[scale=.2]{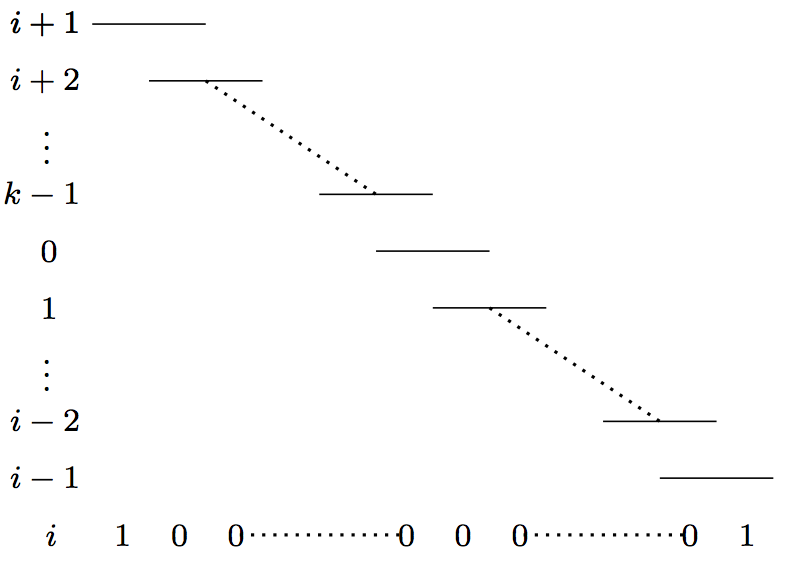}}\\
\subfloat[$M_{II}$]{\includegraphics[scale=.2]{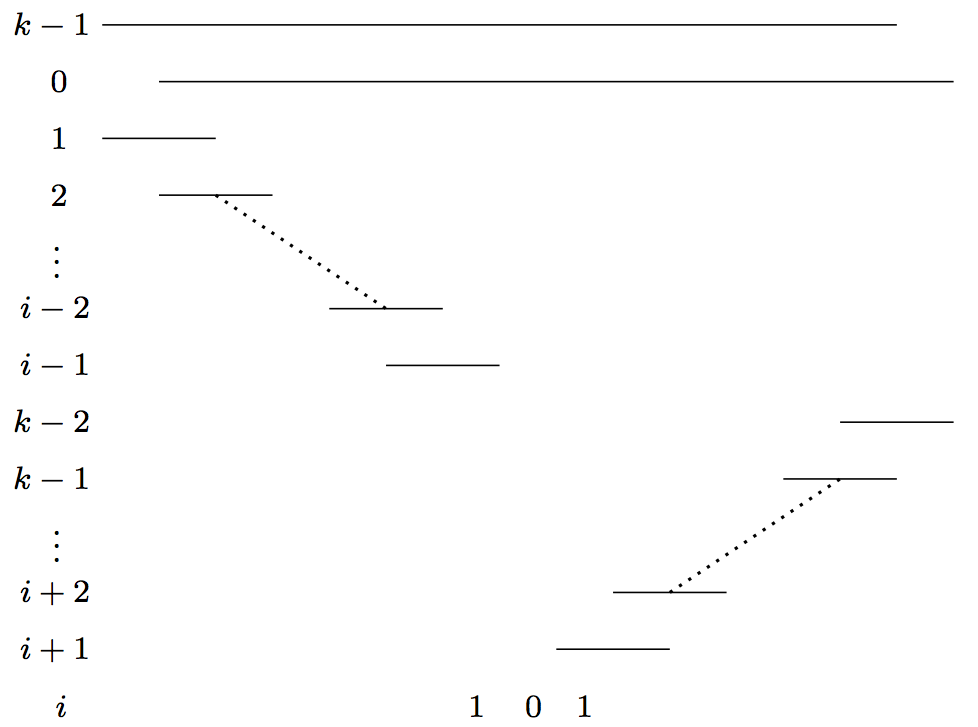}}\\
\subfloat[$M_{III}$]{\includegraphics[scale=.2]{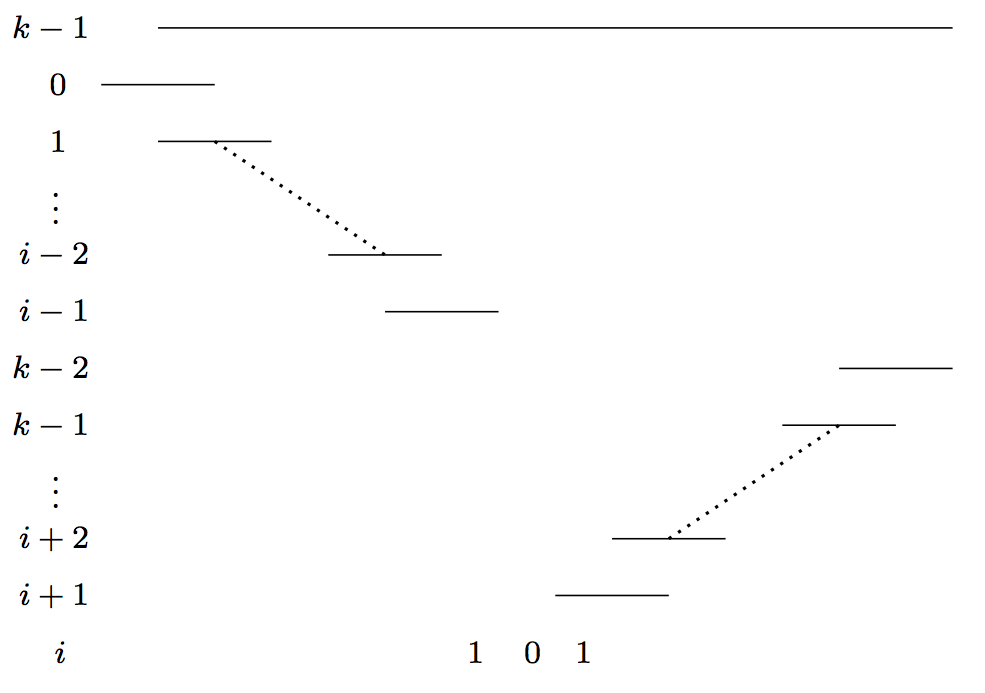}}
\caption{Consecutive-ones orderings of all but row $i$ for
any $i \in \{2, 3, \ldots, k-3\}$ in each of $M_I$, $M_{II}$
and $M_{III}$ gives rows $i-1$ and $i+1$ as a suitable pair for row
$i$ by Definition~\ref{def:suitable}}\label{fig:TuckRotations}
\end{figure}

\begin{lemma}\label{lem:findAB}
Suppose a set $\mR'$ of rows is consecutive-ones ordered, but
but $\mR' \cup \{Z\}$ does not have the consecutive-ones property.
Let $\mR_Q$ be the rows of an overlap component of $\mR'$.
Let $A'$ be a row with the leftmost right endpoint such
that $A'$ contains a column that has a 1 in row $Z$, and
let $B'$ be a row with the rightmost left endpoint such
that $B'$ contains a column that has a 1 of $Z$.
Then $A'$ and $B'$ are a suitable pair for $Z$ if 
$A'$ and $B'$ are disjoint, and a column that has a 0 in row $Z$ occurs in between $A'$ and $B'$.
Otherwise no suitable pair for $Z$ exists in $\mR_Q$.
\end{lemma}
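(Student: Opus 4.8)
The plan is to characterize exactly when a suitable pair exists in $\mR_Q$ and to show that if one exists, then the specific candidates $A'$ and $B'$ defined in the statement witness it. First I would set up notation: fix a consecutive-ones ordering of $\mR'$ (hence of $\mR_Q$), and let $a$ be the column position of the right endpoint of $A'$ and $b$ the position of the left endpoint of $B'$. Note that $A'$ and $B'$ both contain a $1$ of $Z$ by definition, so the two requirements in Definition~\ref{def:suitable} that remain to be checked for the pair $\{A', B'\}$ are (i) that they lie in the same overlap component $\mR_Q$ — which I would make sure to state as part of the hypothesis or argue follows from how they are chosen relative to $\mR_Q$ — and (ii) that a $0$ of $Z$ lies strictly between them in the ordering. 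So the "if" direction is essentially immediate once the candidates are defined: the stated side-conditions ($A'$ and $B'$ disjoint, and a column with a $0$ of $Z$ occurring between $A'$ and $B'$) are precisely conditions (i)/(ii), modulo checking disjointness forces "between" to be well-defined (i.e., $a < b$), so $A'$ and $B'$ form a suitable pair for $Z$.

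The substance is the "otherwise" direction: if no suitable pair for $Z$ exists in $\mR_Q$, then the stated conditions on $A', B'$ must fail. I would prove the contrapositive: assume $A'$ and $B'$ are disjoint and a $0$ of $Z$ lies between them; produce a suitable pair. But that is just the "if" direction again. So really the lemma's content is the \emph{converse}: if \emph{some} suitable pair $\{A, B\}$ exists in $\mR_Q$, then $A'$ and $B'$ (the extreme choices) also work — i.e., $A'$ and $B'$ are disjoint and a $0$ of $Z$ occurs between them. This is where the extremal choice of $A'$ and $B'$ does the work. Suppose $\{A,B\}$ is suitable, with (WLOG in the fixed ordering) $A$ to the left of a $0$-column $c$ of $Z$, and $B$ to the right of $c$, and both $A, B$ containing a $1$ of $Z$. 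Since $A$ contains a column with a $1$ of $Z$, the row $A'$ with the leftmost right endpoint among all rows of $\mR_Q$ containing a $1$-column of $Z$ has its right endpoint at position $a \le$ (right endpoint of $A$) $< c$ — wait, one must be careful: $A'$ is chosen over \emph{all} of $\mR'$ or over $\mR_Q$? The statement says "a row", and the preceding sentence fixes $\mR_Q$, so I would read $A'$, $B'$ as chosen within $\mR_Q$; I would state this explicitly. Then the right endpoint of $A'$ is $\le$ that of $A$; the left endpoint of $B'$ is $\ge$ that of $B$. Since $A$ is entirely left of $c$ (as $A$ contains a $1$ of $Z$ and $c$ has a $0$ of $Z$... no — $A$ need not be entirely left of $c$; "a $0$ of $Z$ lies between $A$ and $B$" per Definition~\ref{def:suitable} should mean $A$'s interval ends before $c$ and $B$'s starts after $c$, i.e. $c$ is strictly between the intervals). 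I would nail down this reading of "between $A$ and $B$" and then: right endpoint of $A' \le$ right endpoint of $A < c < $ left endpoint of $B \le$ left endpoint of $B'$. Hence right endpoint of $A' < $ left endpoint of $B'$, so $A'$ and $B'$ are disjoint, and the column $c$ (a $0$ of $Z$) lies strictly between them. That gives the contrapositive.

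The main obstacle I anticipate is not any deep combinatorics but rather pinning down the precise meaning of "a $0$ of $Z$ lies in between $A$ and $B$" and "occurs in between $A'$ and $B'$" — whether "between" refers to column position lying strictly between the two rows' intervals, and whether the extremal rows $A', B'$ range over $\mR_Q$ or all of $\mR'$. Once those conventions are fixed, the argument is a short endpoint-comparison chain: the extremal right endpoint can only be further left, the extremal left endpoint can only be further right, so the gap between $A'$ and $B'$ contains the gap between any suitable pair, in particular contains a $0$-column of $Z$; and disjointness of $A', B'$ is immediate from that gap being nonempty. A secondary point to handle cleanly is the degenerate case where no row of $\mR_Q$ contains a $1$ of $Z$ at all, in which case $A'$ and $B'$ are undefined (or the set is empty) and trivially no suitable pair exists — I would dispatch this at the start. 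I would also remark that since $A'$ and $B'$ are, by construction, members of $\mR_Q$, requirement (i) of Definition~\ref{def:suitable} is automatic, so nothing further is needed there.
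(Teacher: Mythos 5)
Your proposal is correct and matches the paper's proof in essence: the ``if'' direction is immediate from Definition~\ref{def:suitable}, and the converse is the same extremal substitution argument --- given any suitable pair $\{A,B\}$, the right endpoint of $A'$ is at most that of $A$ and the left endpoint of $B'$ is at least that of $B$, so the $0$ of $Z$ between $A$ and $B$ also lies between $A'$ and $B'$. Your endpoint-inequality chain simply spells out what the paper states as ``$A'$ can be substituted for $A$ and $B'$ for $B$,'' so no further comparison is needed.
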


\begin{proof}
If $A'$ and $B'$ satisfy the conditions, they are
a suitable pair for $Z$, by definition.
Conversely, suppose there exist rows $A, B \in \mR_Q$
that are a suitable pair for $Z$.  Suppose without loss of generality that $A$ is to the
left of $B$.   Then
$A'$ can be substituted for $A$ and $B'$ can be substituted for $B$, and
the 0 of row $Z$ that lies between $A$ and $B$ will also 
lie between $A'$ and $B'$, hence $\{A',B'\}$ is a suitable pair for $Z$.
\end{proof}

The procedure is given as Algorithm~\ref{alg:SuitablePair}.

\begin{algorithm}
\KwData{A matrix $M$ that does not have the consecutive-ones property,
and in which every Tucker submatrix contains the first five rows of $M$}
\KwResult{A row $Z$ and a suitable pair $A, B$ for $Z$ (Definition~\ref{def:suitable}) have been returned}
Let $(Z_1, Z_2, \ldots, Z_5)$ be the first five rows of $M$ \;
\For {$i = 1$ to 5}{
   Let $\mR_i$ be the rows of $M$, excluding $Z_i$\;
   Let $M'$ be a consecutive-ones ordering of $\mR_i$\;
   Use {\tt OverlapBFS} to find the overlap components of $M'$ (Algorithm~\ref{alg:OverlapBFS})\;
   \For {each overlap component $\mR_Q$ of $M'$}{
      Let $A'$ be the member of $\mR_Q$ with the leftmost right endpoint 
          among\\ \quad \quad rows that contain a 1 of $Z_i$ \;
      Let $B'$ be the member of $\mR_Q$ with the rightmost left endpoint
          among\\ \quad \quad  rows that contain a 1 of $Z_i$\;
      \If {a 0 of $Z_i$ occurs in between $A'$ and $B'$}{
          return $(A', B', Z_i)$\;
      }
   }
}
\caption{{\tt SuitablePair($M$)}\label{alg:SuitablePair}}
\end{algorithm}

\begin{lemma}\label{lem:SuitablePairCorrect}
{\tt SuitablePair$(M)$} returns a triple $(A,B,Z)$ such that $\{A,B\}$
are a suitable pair for $Z$.
\end{lemma}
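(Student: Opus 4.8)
The plan is to prove that \texttt{SuitablePair} terminates by returning a valid triple, combining the existence guarantee of Lemma~\ref{lem:corral} with the correctness of the test performed in the inner loop, which is exactly Lemma~\ref{lem:findAB}. First I would observe that the preconditions of the lemma match the preconditions assumed on the input matrix $M$: it does not have the consecutive-ones property, it has no Tucker submatrix with fewer than five rows (since every Tucker submatrix contains the first five rows, it has at least five rows), and every Tucker submatrix contains the rows $\{Z_1, Z_2, Z_3, Z_4, Z_5\}$. Hence Lemma~\ref{lem:corral} applies and tells us that for at least one index $i \in \{1,\dots,5\}$, the matrix $\mR_i$ obtained by deleting $Z_i$ has an overlap component containing a suitable pair for $Z_i$.

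Next I would argue that the algorithm does not fail before reaching that good index, and that when it does reach it, it succeeds. For each $i$, since every Tucker submatrix contains $Z_i$, the set $\mR_i$ has the consecutive-ones property, so the call producing the consecutive-ones ordering $M'$ of $\mR_i$ is well-defined; then \texttt{OverlapBFS} correctly produces the overlap components of $M'$ by Lemma~\ref{lem:OverlapBFSCorrect}. For each overlap component $\mR_Q$, the algorithm selects $A'$ as the row of $\mR_Q$ with leftmost right endpoint among rows containing a $1$ of $Z_i$, and $B'$ as the row with rightmost left endpoint among such rows, and returns $(A',B',Z_i)$ exactly when a $0$ of $Z_i$ lies between $A'$ and $B'$. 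By Lemma~\ref{lem:findAB}, whenever that test succeeds, $\{A',B'\}$ is a suitable pair for $Z_i$ (note that the test ``a $0$ of $Z_i$ occurs between $A'$ and $B'$'' forces $A'$ and $B'$ to be disjoint, since otherwise they would share a column and there would be no column strictly between them in the consecutive-ones ordering of $\mR_Q$; so the full hypothesis of Lemma~\ref{lem:findAB} is met). Thus any triple the algorithm returns is correct.

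Finally I would argue termination: by Lemma~\ref{lem:corral} there is some index $i^*$ for which $\mR_{i^*}$ has an overlap component $\mR_Q$ with a suitable pair $A,B$ for $Z_{i^*}$. If the algorithm has not already returned for a smaller index, it reaches iteration $i^*$, and by the converse direction of Lemma~\ref{lem:findAB}, the particular rows $A'$ and $B'$ it computes for that component $\mR_Q$ form a suitable pair; in particular a $0$ of $Z_{i^*}$ lies between $A'$ and $B'$, so the \texttt{If} test succeeds and the algorithm returns $(A', B', Z_{i^*})$. Hence the algorithm always returns, and by the previous paragraph whatever it returns is a valid triple.

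The main obstacle, such as it is, is bookkeeping rather than depth: one must make sure the invocation of Lemma~\ref{lem:findAB} is legitimate, i.e.\ that the ``$A'$ with leftmost right endpoint / $B'$ with rightmost left endpoint'' rows chosen by the algorithm are precisely the rows $A', B'$ in the statement of that lemma, and that the guard condition checked in the \texttt{If} coincides with the disjointness-plus-intervening-$0$ hypothesis there. Once that identification is made, the proof is just a two-line composition of Lemmas~\ref{lem:corral} and~\ref{lem:findAB} together with the correctness of \texttt{OverlapBFS}.
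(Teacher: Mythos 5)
Your proposal is correct and follows the same route as the paper's proof: existence of a suitable pair for some $Z_i$ via Lemma~\ref{lem:corral}, and correctness of the inner test via Lemma~\ref{lem:findAB}, with the termination/guard details you add being exactly the bookkeeping the paper leaves implicit.
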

\begin{proof}
By Lemma~\ref{lem:corral}, a suitable pair exists for one of the five
choices $Z_i \in \{Z_1, Z_2, \ldots, Z_5\}$.  By Lemma~\ref{lem:findAB},
the procedure finds a suitable pair $\{A',B'\}$ for $Z_i$ returning $(A',B',Z_i)$.
\end{proof}

\begin{lemma}\label{lem:SuitablePairTime}
{\tt SuitablePair$(M)$} takes $O(size(M))$ time.
\end{lemma}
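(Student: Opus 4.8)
The plan is to bound the cost of the outer loop by bounding the cost of a single iteration and multiplying by the constant number $5$ of iterations. Fix one value of $i$. First I would account for the cost of forming $\mR_i$ and a consecutive-ones ordering $M'$ of it: by Lemma~\ref{lem:BLPrefix} (or the Booth--Lueker algorithm directly), computing a consecutive-ones ordering of $\mR_i$ takes $O(size(M))$ time; since $\mR_i$ is $M$ with one row deleted, $size(\mR_i) = O(size(M))$, and we may also record, in $O(size(M))$ time, the left and right endpoint of each row of $M'$ in this ordering, as noted in Section~\ref{sect:BFS}. Next, the call to {\tt OverlapBFS} on $M'$ takes $O(size(M'))=O(size(M))$ time and produces all overlap components of $M'$, by Lemma~\ref{lem:OverlapBFSTime}.

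The remaining work in iteration $i$ is the inner loop over the overlap components $\mR_Q$ of $M'$. The key observation is that the overlap components partition the rows of $M'$, so their total size is $O(size(M'))=O(size(M))$, and it suffices to show that the work charged to a single component $\mR_Q$ is linear in $size(\mR_Q)$ (plus an $O(1)$ overhead per component, which sums to $O(n)$). For a fixed $\mR_Q$: identifying which rows of $\mR_Q$ contain a $1$ of $Z_i$ can be done by scanning the rows of $\mR_Q$ and, using the columns of $Z_i$ marked once in advance in $O(|Z_i|)=O(size(M))$ time, testing membership; selecting among those the row $A'$ with the leftmost right endpoint and the row $B'$ with the rightmost left endpoint is then a minimum/maximum scan over the already-computed endpoints of those rows, in $O(size(\mR_Q))$ time. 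Finally, deciding whether a $0$ of $Z_i$ lies strictly between $A'$ and $B'$ amounts to checking whether some column strictly between the right endpoint of $A'$ and the left endpoint of $B'$ fails to lie in $Z_i$; this can be decided in $O(1)$ time once, for each column, we have precomputed a prefix count of the number of columns of $Z_i$ among $c_1,\ldots,c_t$ in the ordering of $M'$ — an $O(m)=O(size(M))$ preprocessing per iteration $i$ — since then ``all columns between them are $1$'s of $Z_i$'' is equivalent to an equality of two prefix counts.

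Summing over a single iteration $i$: the preprocessing (ordering, endpoints, marking $Z_i$, prefix counts) is $O(size(M))$; {\tt OverlapBFS} is $O(size(M))$; and the inner loop is $\sum_{\mR_Q} O(size(\mR_Q)+1) = O(size(M)+n) = O(size(M))$. Multiplying by the constant number $5$ of iterations of the outer loop gives the claimed $O(size(M))$ bound. The only mildly delicate point, and the one I would be most careful about, is the ``$0$ of $Z_i$ occurs in between $A'$ and $B'$'' test: done naively by scanning the interval between the two endpoints it would cost $O(m)$ per component and hence $O(mn)$ overall, which is not linear, so the prefix-count preprocessing (or an equivalent $O(1)$-query structure) is essential to keep each such test to $O(1)$.
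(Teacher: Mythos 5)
Your accounting of the outer loop, the consecutive-ones ordering, the endpoint labels, the call to \texttt{OverlapBFS}, and the per-component search for $A'$ and $B'$ matches the paper's proof, and your total of $O(5\cdot size(M)) = O(size(M))$ is the same. The one place you diverge is the ``0 of $Z_i$ between $A'$ and $B'$'' test: you introduce per-iteration prefix counts of the columns of $Z_i$ so that each test costs $O(1)$, whereas the paper simply labels every column with its value in $Z_i$ and scans the interval between $A'$ and $B'$ directly. Your prefix-count scheme is correct, but your stated reason for needing it --- that a direct scan would cost $O(m)$ per component and hence $O(mn)$ overall --- is not right. Since $A'$ and $B'$ belong to the overlap component $\mR_Q$ and, by Lemma~\ref{lem:Meidanis}, the union of the constrained Venn classes of $\mR_Q$ (i.e., the columns covered by rows of $\mR_Q$) is consecutive in the ordering, every column lying between $A'$ and $B'$ is covered by some row of $\mR_Q$; hence the direct scan costs time proportional to the sum of cardinalities of the rows of $\mR_Q$, and since the overlap components partition the rows, these scans sum to $O(size(M))$ per iteration. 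So your extra data structure buys a cleaner $O(1)$-per-query bound at the price of $O(m)$ preprocessing per iteration, but it is not needed for linearity; both routes give the claimed bound.
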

\begin{proof}
On each iteration $i$, we may find a consecutive-ones ordering of $\mR_i$
in $O(size(M))$ time.  We may then 
label all columns of $M$ with the
value of $Z_i$ in the column in $O(size(M))$ time.
The calls to {\tt OverlapBFS} take $O(size(M))$ time by Lemma~\ref{lem:OverlapBFSTime}.
Since each column is labeled with the value of $Z$ that it contains,
it takes $O(|R|)$ time to find whether a set $R$ of columns contains a 1 in $Z$ or a 0 in $Z$.
It takes time proportional to the sum of cardinalities of rows in an overlap
component to find $A'$ and $B'$, and to determine whether the columns in
the interval between $A'$ and $B'$ contain a 0 in $Z$.  Over all connected
components, this check takes $O(size(M))$ time.  

Therefore, over the five iterations, the algorithm takes $O(5*size(M)) = O(size(M))$ time.
\end{proof}

\begin{lemma}\label{lem:FindRowsTime}
{\tt FindRows$(M)$} (Algorithm~\ref{alg:FindRows}) takes $O(size(M))$ time.
\end{lemma}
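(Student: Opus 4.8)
The statement to prove is Lemma~\ref{lem:FindRowsTime}, that {\tt FindRows$(M)$} runs in $O(size(M))$ time. The plan is simply to walk down the five lines of Algorithm~\ref{alg:FindRows} and bound each one using results already established earlier in the excerpt. Line 1 calls {\tt SuitablePair$(M)$}, which takes $O(size(M))$ time by Lemma~\ref{lem:SuitablePairTime}. Line 2 merely forms $\mR'$ by deleting one row $Z$ from $M$; this costs $O(size(M))$. Line 3 asks for a shortest path from $A$ to $B$ in the overlap graph of $\mR'$; using {\tt OverlapBFS} (Algorithm~\ref{alg:OverlapBFS}), Lemma~\ref{lem:OverlapBFSTime} gives a BFS tree of the relevant overlap component in $O(size(M))$ time, from which the shortest $A$--$B$ path is extracted by following parent pointers, which costs $O(n) = O(size(M))$.

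The remaining two lines, 4 and 5, each invoke Lemma~\ref{lem:BLPrefix} to find a minimal prefix of an explicitly listed sequence of rows that fails the consecutive-ones property. The sequences $(Z, R_1, \ldots, R_k)$ and $(Z, R_j, R_{j-1}, \ldots, R_1)$ are each reorderings of a subset of the rows of $M$, so each has size $O(size(M))$, and hence each prefix computation costs $O(size(M))$ by Lemma~\ref{lem:BLPrefix}. (One should note that to apply Lemma~\ref{lem:BLPrefix} we need a consecutive-ones ordering of the columns for $\mR'$; this is the same ordering used by {\tt SuitablePair} and by {\tt OverlapBFS}, and can be computed once at the start in $O(size(M))$ time by Booth and Lueker's algorithm. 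The prefix $(R_1,\ldots,R_j)$ has the consecutive-ones property as shown in Lemma~\ref{lem:FindRowsCorrect}, and Lemma~\ref{lem:BLPrefix} does not require the input to have the consecutive-ones property anyway.) Finally, line 6 returns a sublist of rows, which costs $O(size(M))$ to write out.

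Summing the six lines, the total running time is a constant number of $O(size(M))$ terms, hence $O(size(M))$, proving the lemma.

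I do not expect any genuine obstacle here: this is a bookkeeping argument, and every subroutine's running time has already been bounded in an earlier lemma. The only point requiring the slightest care is making explicit that the intermediate row sequences handed to Lemma~\ref{lem:BLPrefix} have size $O(size(M))$ — which is immediate since they are permutations of subsets of the rows of $M$ over the same column set — and that the single consecutive-ones ordering of the columns needed by the prefix queries and the BFS is computed once, up front, within the budget.
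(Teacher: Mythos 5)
Your proof is correct and follows essentially the same route as the paper's: charge each line of {\tt FindRows} to the previously established bounds of Lemma~\ref{lem:SuitablePairTime}, Lemma~\ref{lem:OverlapBFSTime}, and Lemma~\ref{lem:BLPrefix}, and sum a constant number of $O(size(M))$ terms. The extra remarks you add (computing one consecutive-ones ordering up front, extracting the path from parent pointers, noting the intermediate sequences have size $O(size(M))$) are harmless amplifications of the same bookkeeping argument.
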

\begin{proof}
The call to {\tt SuitablePair} takes $O(size(M))$ time by Lemma~\ref{lem:SuitablePairTime}.
It takes $O(size(M))$ time to find a shortest path from $A$ to $B$ by Lemma~\ref{lem:OverlapBFSTime}.
It takes $O(size(M))$ time to find each of $(Z, R_1, R_2, \ldots, R_j)$
and $(Z, R_j, R_{j-1}, \ldots, R_i)$ by Lemma~\ref{lem:BLPrefix}.
\end{proof}

\subsection{{\tt FindColumns}}

In this section, we develop {\tt FindColumns} (Algorithm~\ref{alg:FindColumns}), which 
is called from Algorithm~\ref{alg:TuckerSubmatrix}.  Since it is called on
a matrix $M$ that is returned by {\tt FindRows}, we may assume henceforth
that $M$ satisfies the conditions of Lemma~\ref{lem:FindRowsCorrect}, that
is, that it fails to have the consecutive-ones property, every Tucker submatrix
of $M$ contains all rows of $M$, and, excluding the last row, $Z$, the overlap
graph of the rows of $M$ is a path, which we will denote by $\mP'$.

\begin{definition}\label{def:configs}
(See Figure~\ref{fig:configs}.)
Let $\mR_Q$ be a set of rows such that the overlap graph of $\mR_Q$
is connected, let $(c_1, c_2, \ldots, c_k)$
be the left-to-right order of columns in a consecutive-ones ordering of $\mR_Q$,
and let $Z$ be a row that is not in $\mR_Q$.
A {\em 1-0-1 configuration} is a sequence of three columns $(c_h, c_i, c_j)$
in three separate constrained Venn classes of $\mR_Q$ such that $h < i < j$,
$c_h$ contains a 1, $c_i$ contains a 0, and $c_j$ contains a 1 in row $Z$.
A {\em 0-1-0 configuration for $Z$} is defined in the same way, except that
$c_h$ contains a 0, $c_i$ contains a 1, and $c_j$ contains a 0 in row $Z$.
\end{definition}

\begin{figure}
\centering
\subfloat[]{\includegraphics[scale=.5]{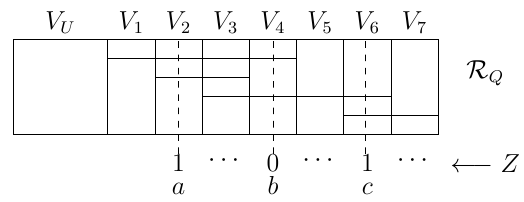}}\\
\subfloat[]{\includegraphics[scale=.5]{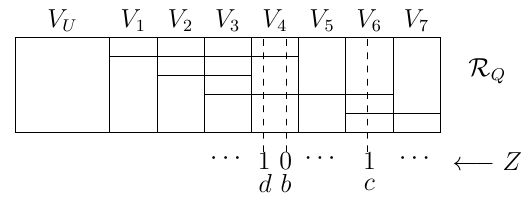}}\\
\subfloat[]{\includegraphics[scale=.5]{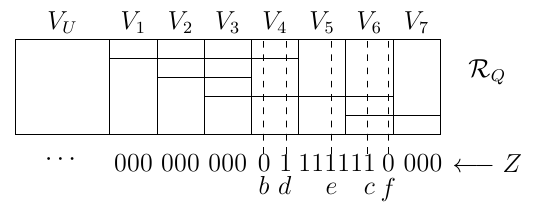}}
\caption{
(See Lemma~\ref{lem:violation2}).  $\mR_Q$ is a set of rows that has the consecutive-ones property and 
whose overlap graph is connected,  and $Z$ is
an additional row.  A 1-0-1 configuration for $Z$
is three columns in a consecutive-ones ordering of $\mR_Q$, such as $(a,b,c)$ 
in the top figure.
They must reside in three separate constrained  Venn classes of $\mR_Q$.
The 1-0-1 configuration proves that $\mR_Q \cup \{Z\}$
does not have the consecutive-ones property, since the 0 in $b$
is forced to be between the two 1's in $a$ and $c$ in every consecutive-ones ordering of $\mR_Q$,
by Lemma~\ref{lem:Meidanis}. 
Columns $(d,b,c)$ in the middle figure are not a 1-0-1 configuration, since they do not reside in three
separate Venn classes.   Since a Venn class can be reordered to give
a new consecutive-ones ordering, they do not exclude a consecutive-ones ordering
such as one that is consistent with the bottom figure.
A 0-1-0 configuration  is defined similarly, and $(b,e,f)$ is
an example in the bottom figure.  The 0-1-0 configuration
proves that $\mR_Q \cup \{Z\}$ does not have the consecutive-ones property if
if $V_U$ contains a 1 of Row $Z$:  the 1 in $e$ is separated from it by
0's in $b$ and $f$ in every consecutive-ones ordering of $\mR_Q$.
}\label{fig:configs}
\end{figure}

The sufficiency of the following is implicit in Booth and Lueker's
algorithm.  The necessity in the case where the overlap graph of the
rows is observed in~\cite{McCSODA04}.

\begin{lemma}\label{lem:violation2}
If $\mR_Q$ is a set of rows that has the consecutive-ones property and whose overlap
graph is connected, and $Z$ is a
row not in $\mR_Q$, then $\mR_Q \cup \{Z\}$ fails to have the consecutive-ones
property if and only if one of the following cases applies:

\begin{enumerate}
\item \label{firstCase} $\mR_Q$ has a 1-0-1 configuration for $Z$;
\item\label{secondCase}  $\mR_Q$ has a 0-1-0 configuration for $Z$ and $Z$
has a 1 in the unconstrained Venn class of $\mR_Q$.
\end{enumerate}
\end{lemma}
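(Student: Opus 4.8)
The plan is to prove both directions by reasoning about what the connected overlap graph of $\mR_Q$ forces on any consecutive-ones ordering of $\mR_Q \cup \{Z\}$, using Lemma~\ref{lem:Meidanis} as the main tool. For the \emph{if} direction, suppose first that case~\ref{firstCase} holds, so there are columns $c_h, c_i, c_j$ in three distinct constrained Venn classes of $\mR_Q$, with $h < i < j$, carrying a $1$, a $0$, and a $1$ of $Z$ respectively. By Lemma~\ref{lem:Meidanis}, in \emph{every} consecutive-ones ordering of $\mR_Q$ these three classes appear in the same relative order, up to reversal; hence in any consecutive-ones ordering of $\mR_Q \cup \{Z\}$ (which restricts to one of $\mR_Q$), the column with the $0$ of $Z$ lies strictly between two columns with $1$'s of $Z$. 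That makes the $1$'s of $Z$ non-consecutive, so $\mR_Q \cup \{Z\}$ has no consecutive-ones ordering. If instead case~\ref{secondCase} holds, pick the $0$-$1$-$0$ configuration $(c_h, c_i, c_j)$ together with a column $c_U$ in the unconstrained Venn class that carries a $1$ of $Z$. In any consecutive-ones ordering of $\mR_Q$, the constrained classes form a consecutive block with a fixed internal order up to reversal, while $c_U$ sits entirely outside that block (on one end or the other). So the $1$ of $Z$ at $c_i$ is separated from the $1$ of $Z$ at $c_U$ by the $0$ at whichever of $c_h, c_j$ lies between them, and also (on the $c_U$ side of the block) there is no way to reach $c_i$ without crossing that $0$; I would spell out the two sub-cases (whether $c_U$ is placed before $c_h$ or after $c_j$) to see that in each the $1$'s of $Z$ cannot be made consecutive. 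Again $\mR_Q \cup \{Z\}$ fails the property.

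For the \emph{only if} direction, assume $\mR_Q \cup \{Z\}$ fails to have the consecutive-ones property and derive one of the two configurations. Fix any consecutive-ones ordering $(c_1, \ldots, c_k)$ of $\mR_Q$; by Lemma~\ref{lem:Meidanis} the constrained Venn classes occupy a consecutive block $B$ with a fixed order up to reversal, and the unconstrained class $V_U$ is everything outside $B$. Consider the positions of the $1$'s of $Z$. If all $1$'s of $Z$ lie within constrained classes and they do \emph{not} all fall in one contiguous run of consecutive constrained classes, then there is a constrained class strictly between two $1$-bearing constrained classes that is forced (in this ordering, hence in all of them by the up-to-reversal invariance) to contain a $0$ of $Z$ — actually I need a column in that class, which exists since every Venn class is nonempty — giving a $1$-$0$-$1$ configuration. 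The remaining possibility is that some $1$ of $Z$ lies in $V_U$, or all $1$'s of $Z$ are confined to a contiguous run of constrained classes. In the latter confined case one can complete the ordering by first ordering $\mR_Q$ so that those classes sit at one end of $B$, then appending all of $V_U$ right next to them, and verifying the $1$'s of $Z$ become consecutive — contradicting our assumption; so this case cannot actually occur, which forces the former: $Z$ has a $1$ in $V_U$. Given that, since $\mR_Q \cup \{Z\}$ still fails, the $1$'s of $Z$ inside the constrained block cannot be made consecutive with the $V_U$ $1$, and a short argument (the constrained order is rigid up to reversal, $V_U$ goes on one end) produces a constrained class carrying a $1$ of $Z$ that is flanked on both sides, within $B$, by constrained classes carrying $0$'s of $Z$ — equivalently, after possibly reversing, a $0$ of $Z$, then a $1$ of $Z$, then a $0$ of $Z$ in three distinct constrained classes, i.e.\ a $0$-$1$-$0$ configuration for $Z$, together with the $1$ in $V_U$, which is exactly case~\ref{secondCase}.

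The main obstacle I anticipate is the bookkeeping in the \emph{only if} direction: carefully ruling out the ``confined run'' case by exhibiting an explicit consecutive-ones ordering of $\mR_Q \cup \{Z\}$, and then, once a $1$ of $Z$ is known to lie in $V_U$, extracting the flanking $0$'s that give the $0$-$1$-$0$ pattern. The subtlety is that ``the $1$'s of $Z$ are not consecutive'' must be converted into the existence of three columns in \emph{three distinct} constrained Venn classes with the right pattern, and one has to use the nonemptiness of every Venn class and the rigidity (up to reversal) from Lemma~\ref{lem:Meidanis} to know that this pattern persists across \emph{all} consecutive-ones orderings, not just the one we fixed. The two directions of the \emph{if} part and the sufficiency half are comparatively routine applications of Lemma~\ref{lem:Meidanis}; I would present them first and then devote the bulk of the write-up to the necessity argument.
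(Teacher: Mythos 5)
Your \emph{if} direction is fine and is essentially the paper's argument. The gap is in the \emph{only if} direction, and it comes from overlooking \emph{mixed} Venn classes, i.e.\ constrained classes that contain both 1's and 0's of $Z$. Your dichotomy is: either the 1-bearing constrained classes fail to be contiguous (then a gap class with no 1's of $Z$ yields a 1-0-1), or they are contiguous, in which case you claim you can build a consecutive-ones ordering of $\mR_Q\cup\{Z\}$ and derive a contradiction, forcing a 1 of $Z$ into $V_U$ and then a 0-1-0. Both branches of the second alternative break down. Take $R_1=\{a,b,b'\}$, $R_2=\{b,b',c\}$, so the constrained classes are $V_1=\{a\}$, $V_2=\{b,b'\}$, $V_3=\{c\}$, and let $Z$ have 1's in $a,b,c$ (and possibly in a column of $V_U$) and a 0 in $b'$. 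The 1-bearing classes $V_1,V_2,V_3$ are contiguous, yet no consecutive-ones ordering of $\mR_Q\cup\{Z\}$ exists, because $b'$ is trapped between $a$ and $c$: your claimed completion of the ordering does not exist, so this case does not lead to a contradiction. Moreover, if $Z$ does have a 1 in $V_U$, your final step promises a 0-1-0 configuration, but here there is none (all constrained 0's of $Z$ lie in the single class $V_2$); the failure is witnessed only by the 1-0-1 configuration $(a,b',c)$, whose middle column lives \emph{inside} the contiguous run. So your argument, as structured, would either wrongly rule out a case that occurs or go hunting for a configuration that need not exist, even though the lemma's conclusion (case~1) is true in the example. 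A smaller point: ``ordering $\mR_Q$ so that those classes sit at one end of $B$'' is not available to you, since by Lemma~\ref{lem:Meidanis} the order of the constrained classes is rigid up to reversal.

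The paper sidesteps all of this by proving the contrapositive constructively: assuming there is no 1-0-1 configuration, it concludes both that the 1-bearing classes form a consecutive block $(V_i,\ldots,V_j)$ \emph{and} that only the end classes $V_i,V_j$ may be mixed; assuming in addition that condition~2 fails, either $Z$ has no 1 in the unconstrained class or there is no 0-1-0, which forces the block to be a prefix or suffix of $(V_1,\ldots,V_k)$; it then exhibits a consecutive-ones ordering by sorting the columns inside $V_i$ and $V_j$ so their 1's of $Z$ face inward and placing the unconstrained 1-columns against the end of the block. To repair your proof you would need to add the mixed-class analysis (interior mixed class $\Rightarrow$ 1-0-1; only then does ``contiguous and pure interior'' license the ordering construction) and, in the $V_U$ branch, rule out the prefix/suffix placement by an explicit ordering before asserting flanking 0's in \emph{distinct} classes.
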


\begin{proof}
Let $(V_1, V_2, \ldots, V_k)$ be the ordering of the constrained Venn classes
in a consecutive-ones ordering of $\mR_Q$.  
Let $C$ be the columns of $M$ and
let $C_Q = \bigcup \mR_Q = \bigcup_{i=1}^k V_i$.
Note that $C \setminus C_Q$ is the unconstrained Venn class.
By Lemma~\ref{lem:Meidanis},
the ordering of $(V_1, V_2, \ldots, V_k)$ unique, up to reversal, $C_Q$ is consecutive,
and the columns within each Venn class can be reordered arbitrarily
to obtain a consecutive-ones ordering of $\mR_Q$.

Suppose one of the conditions applies for $Z$.
If the first condition applies, then, since the three columns of the
1-0-1 configuration lie in distinct members of $\{V_1, V_2, \ldots, V_k\}$,
the 0 of the configuration is forced between the two 1's in every
consecutive-ones ordering of $\mR_Q$, hence  
there can be no consecutive-ones ordering of $\mR_Q \cup \{Z\}$.
Similarly, if the second condition applies, the 1 of the 0-1-0
configuration is forced between the two zeros in every
consecutive-ones ordering of $\mR_Q$, so in every consecutive-ones
ordering of $\mR_Q$, it is separated
from the 1 in the unconstrained class by one of the two zeros in 
the configuration.  Therefore,
$(\mR_Q \cup \{Z\})$ does not have the consecutive-ones property.

Conversely, suppose neither of the two conditions applies.  If $C_Q$ contains
no 1's of $Z$, then the columns in $C \setminus C_Q$ can be freely ordered
to give a consecutive-ones ordering.  Assume henceforth that $C_Q$
contains a 1 of $Z$.

If $|\mR_Q| = 1$,
then $(\mR_Q \setminus Z, \mR_Q \cap Z, Z \setminus \mR_Q)$ is
a consecutive-ones ordering.
If $|\mR_Q| > 1$, then since two rows that are adjacent in the overlap
graph have three Venn classes, the number $k$ of Venn classes is at least 3.
To avoid a 1-0-1 configuration for $Z$, the Venn classes where
$Z$ has 1's must be a consecutive block $(V_i, V_{i+1}, \ldots, V_j)$ 
in $(V_1, V_2, \ldots, V_k)$,
and only $V_i$ and $V_j$
can have both 0's and 1's in $Z$.  If $V_i = V_j$, $V_i$ can be freely
ordered to give a consecutive-ones ordering of $(\mR_Q \cup \{Z\})[C_Q]$.
If $i < j$, the columns in $V_i$ can be freely
ordered so that its its 1's in $Z$ are consecutive with $V_{i+1}$
and the columns in $V_j$ can be freely ordered so that its 1's in $Z$
are consecutive with $V_{j-1}$, giving a consecutive-ones ordering
of $(\mR_Q \cup \{Z\})[C_Q]$.

If $Z$ has no 1 in $C \setminus C_Q$,
this is a consecutive-ones ordering of $\mR_Q \cup \{Z\}$.
Therefore, suppose there is no 1-0-1 configuration and $Z$ has a 1 in some column $c' \in C \setminus C_Q$.
Because of $c'$ and because the second condition of Lemma~\ref{lem:violation2} does not apply,
there is no 0-1-0 configuration for $Z$.  To avoid a 0-1-0 configuration, it must be that $(V_i, V_{i+1}, \ldots, V_j)$
is a prefix or a suffix of $(V_1, V_2, \ldots, V_k)$.
Without loss of generality, suppose it is a prefix, namely, $(V_1, V_2, \ldots V_j)$.
The columns in $V_j$ can be freely ordered so that those that contain 1's of $Z$ are
to the left of those that contain 0's, and the columns
containing $1$'s in unconstrained class can be placed so that they are consecutive
with $V_1$, giving a consecutive-ones ordering of $\mR_Q \cup \{Z\}$.
\end{proof}

\begin{lemma}\label{lem:minCols}
Let $M$ be a matrix that does not have the consecutive-ones property,
where every Tucker submatrix contains all rows of $M$, and where,
excluding the last row $Z$ of $M$, the overlap graph of rows of $M$
is a path, $\mP'$.
Let $C$ be be a subset of the columns of $M$.  Then
$M[C]$ is a Tucker submatrix of $M$ if and only if
$C$ is a minimal set of columns such that the following conditions apply:

\begin{enumerate}
\item  The overlap graph of $\mP'[C]$ is connected;
\item $(\mP' \cup \{Z\})[C]$ satisfies one of the conditions of Lemma~\ref{lem:violation2}
for $Z$.
\end{enumerate}
\end{lemma}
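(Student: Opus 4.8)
The plan is to prove both directions by connecting the two displayed conditions to the combinatorial structure of the three relevant Tucker matrices $M_I(k)$, $M_{II}(k)$, $M_{III}(k)$, whose rows are (by hypothesis, inherited from {\tt FindRows} via Lemma~\ref{lem:FindRowsCorrect}) exactly the rows of $M$. First I would establish the \emph{sufficiency-of-the-conditions} direction more carefully: suppose $C$ is a minimal set of columns satisfying (1) and (2). By condition (1), $\mP'[C]$ has connected overlap graph, so Lemma~\ref{lem:Meidanis} applies to $\mP'[C]$; by condition (2) and Lemma~\ref{lem:violation2}, $(\mP'\cup\{Z\})[C]$ fails the consecutive-ones property. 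Hence $M[C]$ fails it too. Since every Tucker submatrix of $M$ uses all rows of $M$, and $M[C]$ does not have the consecutive-ones property, $M[C]$ \emph{contains} some Tucker submatrix on the full row set $\mP'\cup\{Z\}$ but on a subset $C'\subseteq C$ of columns. I must then argue $C'=C$. For this I would show that $C'$ itself satisfies conditions (1) and (2): a Tucker matrix on rows $\mP'\cup\{Z\}$ has, by Proposition~\ref{prop:cycle}, overlap graph a simple cycle, so deleting the row $Z$ leaves a path, i.e.\ $\mP'[C']$ has connected (path) overlap graph, giving (1); and since $M[C']$ lacks the consecutive-ones property while $\mP'[C']$ has it (as $\mP'[C']\subseteq\mR_Q$-type structure has connected overlap graph and is a proper subset of a Tucker matrix's rows, hence C1P — here I invoke that the rows of $M$ excluding $Z$ have the consecutive-ones property by the standing hypothesis), Lemma~\ref{lem:violation2} forces one of its two cases, giving (2). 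Minimality of $C$ then yields $C'=C$, so $M[C]$ is itself a Tucker submatrix.

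For the \emph{necessity} direction, suppose $M[C]$ is a Tucker submatrix. It uses all rows $\mP'\cup\{Z\}$. By Proposition~\ref{prop:cycle} its overlap graph is a simple cycle; since $Z$'s removal turns a cycle into a path, $\mP'[C]$ has a path overlap graph, in particular connected — this is condition (1). Because $M[C]=(\mP'\cup\{Z\})[C]$ lacks the consecutive-ones property while $\mP'[C]$, being a proper sub-collection of the rows of a Tucker matrix, has it, Lemma~\ref{lem:violation2} applies to $\mP'[C]$ and $Z$ and gives condition (2). It remains to show $C$ is \emph{minimal} with respect to (1) and (2). Suppose $C''\subsetneq C$ also satisfied (1) and (2). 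Then by the sufficiency direction just proved, $M[C'']$ would contain a Tucker submatrix, contradicting the minimality of $M[C]$ as a Tucker submatrix (Tucker matrices are, by definition, minimal forbidden submatrices: deletion of any row or column yields the consecutive-ones property). Hence $C$ is minimal.

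The main obstacle I anticipate is the bookkeeping around restrictions: when we pass from $\mP'$ to $\mP'[C]$ to $\mP'[C']$, we must check that the overlap graph stays connected (or becomes a path) and that Lemma~\ref{lem:Meidanis}/Lemma~\ref{lem:violation2} can legitimately be reapplied, since restricting to a column subset can destroy overlaps (two rows that overlapped in $M$ may become nested or disjoint in $M[C]$). The key technical point is that inside a Tucker matrix the overlap structure is rigidly a cycle (Proposition~\ref{prop:cycle}), and the rows of $M$ minus $Z$ have the consecutive-ones property by hypothesis — so every column subset whose $\mP'$-restriction is connected is governed by Lemma~\ref{lem:violation2}. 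A secondary subtlety is verifying that the Tucker submatrix $M[C']$ obtained inside $M[C]$ really has \emph{all} of $\mP'\cup\{Z\}$ as its rows; this follows because the standing hypothesis on $M$ (inherited from {\tt FindRows}) says every Tucker submatrix of $M$ contains all rows of $M$, and $M[C']$ is a submatrix of $M$. Once these restriction-stability facts are pinned down, both implications follow by combining Proposition~\ref{prop:cycle}, Lemma~\ref{lem:Meidanis}, and Lemma~\ref{lem:violation2} with the definitional minimality of Tucker matrices.
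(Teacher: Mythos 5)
Your proposal is correct and follows essentially the same route as the paper's proof: both directions reduce to the equivalence ``conditions (1)+(2) hold iff $(\mP'\cup\{Z\})[C]$ contains a Tucker submatrix,'' established via Lemma~\ref{lem:violation2}, Proposition~\ref{prop:cycle} applied to a Tucker matrix on the full row set $\mP'\cup\{Z\}$, and the standing hypothesis that every Tucker submatrix uses all rows, with the two notions of minimality then matched up. Your write-up is merely somewhat more explicit than the paper's about the bookkeeping (showing the inner column set $C'$ itself satisfies (1)--(2) so that $C'=C$, and deriving minimality of $C$ from the fact that deleting any column of a Tucker matrix restores the consecutive-ones property), but it is not a different argument.
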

\begin{proof}
Since the overlap graph of $\mP'$ is a path, it is connected.  Since $\mP' \cup \{Z\}$
does not have the consecutive-ones property, it must satisfy
one of the conditions of Lemma~\ref{lem:violation2} for $Z$.
If $\mP'[C'']$ is connected and $(\mP' \cup \{Z\})[C'']$
satisfies one of the two conditions of the lemma, then 
$(\mP' \cup \{Z\})[C'']$ contains a Tucker submatrix since it does
not have the consecutive-ones property.
Conversely,
let $M_T$ be a Tucker matrix contained in $\mP' \cup \{Z\}$.   
By Lemma~\ref{lem:FindRowsCorrect}, $\mP' \cup \{Z\}$ is a minimal set of rows that contains a Tucker matrix,
so the rows of $M_T$ are $\mP' \cup \{Z\}$.
Let $C'$ be the columns of $M_T$.
Since the overlap graph of a Tucker matrix with at least five rows is a chordless cycle,
by Proposition~\ref{prop:cycle}, 
the overlap graph of $\mP'[C']$ is a chordless path.
Therefore, $C'$ satisfies the conditions of Lemma~\ref{lem:violation2},
and so does every set $C''$ of columns such that $C' \subset C''$.

Therefore, necessary and sufficient conditions for $(\mP' \cup \{Z\})[C'']$
to contain a Tucker submatrix is for
the overlap graph of $\mP'[C'']$ to be connected and for one of the conditions
of Lemma~\ref{lem:violation2} to apply for $Z$ in $(\mP' \cup \{Z\})[C'']$.
A minimal such set $C$ is a minimal set of columns that contains the columns
of a Tucker matrix in $\mP' \cup \{Z\}$.  Since $\mP' \cup \{Z\}$ is its set
of rows, $(\mP' \cup \{Z\})[C]$ is a Tucker matrix.
\end{proof}

Our algorithm for finding the columns of a Tucker matrix removes columns one at a time
from the set $C$ of columns,
except when doing so would undermine the requirements of Lemma~\ref{lem:minCols}
on rows of $(\mP' \cup \{Z\})[C]$.  
To obtain a linear time bound, we must describe an efficient
test of whether the removal of a column would undermine one of the requirements
of Lemma~\ref{lem:minCols}.

\begin{definition}\label{def:A}
Let $(R_1, R_2, \ldots, R_h)$ be the sequence rows of $\mP'$.
For $i \in \{1, 2, \ldots, h-1\}$,
let $\mA_i = \{R_i \setminus R_{i+1}, R_i \cap R_{i+1}, R_{i+1} \setminus R_i\}$.
Let $\mA = \bigcup_{i=1}^{h-1} \mA_i$.
\end{definition}

\begin{lemma}\label{lem:ASize}
$\sum_{i=1}^{h-1} (| R_i \setminus R_{i+1}| +  |R_i \cap R_{i+1}| + |R_{i+1} \setminus R_i|) = O(size(M))$.
\end{lemma}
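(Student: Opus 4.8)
The plan is a direct double-counting argument, with no real obstacle. First I would note that for each $i \in \{1, 2, \ldots, h-1\}$, the three sets $R_i \setminus R_{i+1}$, $R_i \cap R_{i+1}$, and $R_{i+1} \setminus R_i$ partition $R_i \cup R_{i+1}$, so that $| R_i \setminus R_{i+1}| + |R_i \cap R_{i+1}| + |R_{i+1} \setminus R_i| = |R_i \cup R_{i+1}| \leq |R_i| + |R_{i+1}|$.

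Next I would sum this inequality over $i = 1, \ldots, h-1$ and telescope. On the right-hand side, the cardinality $|R_j|$ of an interior row appears at most twice (once in the term for the pair $(R_{j-1}, R_j)$ and once in the term for the pair $(R_j, R_{j+1})$), while the endpoints $|R_1|$ and $|R_h|$ each appear at most once. Hence $\sum_{i=1}^{h-1}\big(|R_i| + |R_{i+1}|\big) \leq 2 \sum_{i=1}^{h} |R_i|$.

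Finally, since $R_1, \ldots, R_h$ are the (distinct) rows of $\mP'$, which are rows of $M$, and each $R_i$ is identified with the set of columns in which that row has a $1$, the quantity $\sum_{i=1}^{h} |R_i|$ is at most the total number of $1$'s in $M$, which is at most $size(M)$. Combining the three steps gives $\sum_{i=1}^{h-1} \big(| R_i \setminus R_{i+1}| + |R_i \cap R_{i+1}| + |R_{i+1} \setminus R_i|\big) \leq 2\,size(M) = O(size(M))$.

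The only point requiring any care is the appeal to distinctness of $R_1, \ldots, R_h$ in the last step, which is immediate because they are the vertices of the path $\mP'$ in the overlap graph of the rows of $M$; each therefore contributes its number of $1$'s to the count for $M$ exactly once, and the bound on the telescoped sum goes through. This argument also makes clear that the constant hidden in the $O(\cdot)$ is simply $2$.
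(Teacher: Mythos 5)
Your proof is correct and follows essentially the same route as the paper's: observe that the three sets are disjoint subsets of $R_i \cup R_{i+1}$, bound each term by $|R_i| + |R_{i+1}|$, and sum to get at most twice the total number of 1's, which is $O(size(M))$. The extra remarks on telescoping and distinctness of the rows are fine but add nothing beyond the paper's two-line argument.
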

\begin{proof}
For each $i \in \{1, 2, \ldots, h-1\}$, the members of 
$\{R_i \setminus R_{i+1}, R_i \cap R_{i+1}, R_{i+1} \setminus R_i\}$ are disjoint
and each is a subset of $R_i$ or of $R_{i+1}$.   The sum 
is at most twice the sum of cardinalities of members of $\{R_1, R_2, \ldots, R_h\}$.
\end{proof}

\begin{lemma}\label{lem:A}
Let $C \subseteq C_Q$.  The overlap graph of $\mP'[C]$ is
connected if and only if every element of $\mA$ contains an element
of $C$.
\end{lemma}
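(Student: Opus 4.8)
The claim is an "if and only if" characterizing when the overlap graph of $\mP'[C]$ is connected, in terms of the family $\mA$ of Definition~\ref{def:A}. Recall that $\mP' = (R_1, R_2, \ldots, R_h)$ has the property that its overlap graph is a path, so consecutive rows $R_i, R_{i+1}$ overlap, meaning $R_i \setminus R_{i+1}$, $R_i \cap R_{i+1}$, and $R_{i+1} \setminus R_i$ are all nonempty. The plan is to prove the two directions separately, and in both directions the key link is the elementary observation: for two rows $R, R'$ restricted to a column set $C$, $R[C]$ and $R'[C]$ overlap in the overlap graph of the restriction if and only if each of $R \setminus R'$, $R \cap R'$, $R' \setminus R$ meets $C$. (Here I should be a little careful: overlap of $R[C]$ and $R'[C]$ requires $R[C] \cap R'[C] \neq \emptyset$ and neither contained in the other, which is exactly $C \cap (R\cap R') \neq \emptyset$, $C \cap (R \setminus R') \neq \emptyset$, $C \cap (R' \setminus R) \neq \emptyset$.) So membership of an element of $C$ in each of the three parts of $\mA_i$ is precisely the condition that $R_i[C]$ and $R_{i+1}[C]$ overlap.

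**Forward direction (connected $\Rightarrow$ every element of $\mA$ hit).** Suppose some $\mA_i$-member, say $R_i \setminus R_{i+1}$, contains no element of $C$. By the observation above, $R_i[C]$ and $R_{i+1}[C]$ do not overlap. I want to conclude the overlap graph of $\mP'[C]$ is disconnected. The point is that in the original path, $R_i$ and $R_{i+1}$ are consecutive, so deleting that edge would disconnect $\{R_1,\ldots,R_i\}$ from $\{R_{i+1},\ldots,R_h\}$ — but I must rule out that some \emph{non-consecutive} pair $R_a, R_b$ with $a \le i < b$ overlaps after restriction. Here I would invoke Lemma~\ref{lem:Meidanis}: the columns of $C_Q = \bigcup \mP'$ split into constrained Venn classes $(V_1, \ldots, V_k)$ whose order is fixed up to reversal, and each row $R_a$ occupies a consecutive block of these classes; since $\mP'$'s overlap graph is a path, the blocks can be ordered so that $R_a$'s block is, roughly, an interval that slides monotonically as $a$ increases. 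Then $R_a \cap R_b \subseteq R_i \cap R_{i+1}$-region type containments force that any column in $R_a \cap R_b$ with $a \le i < b$ actually lies in the "bridge" region between $R_i$ and $R_{i+1}$; more carefully, $R_a[C]$ for $a \le i$ and $R_b[C]$ for $b > i$ both being nonempty and overlapping would require $C$ to contain a column witnessing their overlap, and tracing through the interval structure that column must lie in one of $R_i \setminus R_{i+1}, R_i \cap R_{i+1}, R_{i+1}\setminus R_i$ — contradicting our assumption for the first part, and handled symmetrically for the others. So the restricted graph is disconnected.

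**Reverse direction (every element of $\mA$ hit $\Rightarrow$ connected).** This is the easy direction. If every member of $\mA_i$ contains an element of $C$ for each $i$, then by the observation $R_i[C]$ overlaps $R_{i+1}[C]$ for all $i \in \{1, \ldots, h-1\}$, so $(R_1[C], R_2[C], \ldots, R_h[C])$ is a walk in the overlap graph of $\mP'[C]$ covering all vertices, hence that graph is connected. (One subtlety: if two consecutive restricted rows happen to be equal as sets they would not "overlap," but equality would force $R_i \setminus R_{i+1}$ or $R_{i+1} \setminus R_i$ to miss $C$ — wait, both could be nonempty-in-$\mA$ yet $R_i[C] = R_{i+1}[C]$ only if $R_i[C] \setminus R_{i+1}[C] = \emptyset$, i.e.\ $C$ misses $R_i \setminus R_{i+1}$; so the hypothesis rules this out.)

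**Main obstacle.** The genuinely delicate step is the forward direction's claim that no non-consecutive restricted pair can "jump the gap" at index $i$. This is where I expect to lean hardest on the Venn-class interval structure from Lemma~\ref{lem:Meidanis} and on $\mP'$ being a path (not just connected) in the overlap graph — a path gives a linear order on the Venn-class intervals, so the rows behave like intervals on a line and the "separator" argument goes through just as it would for interval graphs. I would phrase it as: order the constrained Venn classes $V_1 < \cdots < V_k$ and note each $R_a$ spans a contiguous sub-block; the sequence of these blocks, being a path in the overlap graph, can be indexed so that left endpoints and right endpoints are each monotone; then the columns of $R_i \setminus R_{i+1}$, $R_i \cap R_{i+1}$, $R_{i+1} \setminus R_i$ form exactly the Venn-class separator between the "$\le i$" rows and the "$>i$" rows, so if $C$ misses one of these three parts and $\mP'[C]$ were still connected, some edge of the restricted path-walk would have to cross that separator, which is impossible. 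Everything else is bookkeeping with the three-part partition and the containment-in-$R_i$-or-$R_{i+1}$ fact already recorded in Lemma~\ref{lem:ASize}.
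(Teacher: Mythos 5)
Your reverse direction and your opening observation (that $R_a[C]$ and $R_b[C]$ overlap precisely when $C$ meets each of $R_a\setminus R_b$, $R_a\cap R_b$, $R_b\setminus R_a$) are correct and match the paper. The problem is the forward direction, where your argument as written has a genuine gap. You assume only that $C$ misses \emph{one} member of $\mA_i$, say $R_i\setminus R_{i+1}$, and then try to rule out a restricted overlap between some $R_a$, $a\le i$, and $R_b$, $b>i$, by arguing that a witnessing column ``must lie in one of $R_i\setminus R_{i+1}$, $R_i\cap R_{i+1}$, $R_{i+1}\setminus R_i$ --- contradicting our assumption.'' That is not a contradiction: the latter two sets are perfectly free to meet $C$ under your hypothesis, so locating the witness in the ``bridge'' proves nothing. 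Moreover, the Venn-class/monotone-blocks machinery you lean on (Lemma~\ref{lem:Meidanis}, ``blocks that slide monotonically'') is left at the level of a sketch exactly at the step that carries the burden of the proof.

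The missing idea is much simpler, and it is what the paper uses: restriction to a column set can never \emph{create} an overlap. Your own three-part criterion already shows this --- if $R_a[C]$ and $R_b[C]$ overlap, then $R_a\cap R_b$, $R_a\setminus R_b$ and $R_b\setminus R_a$ are all nonempty, i.e.\ $R_a$ and $R_b$ already overlapped in $M$. (Equivalently: non-adjacent rows of $\mP'$ are disjoint or nested, and both properties are preserved by intersecting with $C$.) Since $\mP'$ is a \emph{chordless} path, only consecutive pairs can be adjacent in the overlap graph of $\mP'[C]$; so if $C$ misses a member of $\mA_i$, the edge between $R_i[C]$ and $R_{i+1}[C]$ is gone, no other edge can join $\{R_1,\ldots,R_i\}$ to $\{R_{i+1},\ldots,R_h\}$, and the graph is disconnected. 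With this observation in place of your separator argument, no appeal to Lemma~\ref{lem:Meidanis} is needed, and the proof collapses to the paper's two sentences.
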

\begin{proof}
For $i \in \{1, 2, \ldots, h-1\}$,
$R_i[C]$ and $R_{i+1}[C]$ overlap if and only if each member of $\mA_i$
contains an element of $C$.
Since the overlap graph of $\mP' = (R_1, R_2, \ldots, R_h)$ is a chordless
path, the result follows.
\end{proof}

\begin{lemma}\label{lem:101Time}
If the overlap graph of $\mR_Q$ is connected, $\mR_Q$ is consecutive-ones
ordered, and $Z$ is a row not in $\mR_Q$,
it takes $O(size(M))$ time either to find a 1-0-1 configuration for $Z$
in $\mR_Q \cup \{Z\}$ or else to determine that no such configuration exists. 
Similarly, it takes $O(size(M))$ time either to find a 0-1-0 configuration for $Z$
in $\mR_Q \cup \{Z\}$, or else to determine that no such configuration exists. 
\end{lemma}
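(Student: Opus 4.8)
The plan is to give constructive algorithms for both configuration-searches that run in a single left-to-right sweep over the consecutive-ones ordering, using the Venn-class structure guaranteed by Lemma~\ref{lem:Meidanis}. First I would preprocess: compute a consecutive-ones ordering of $\mR_Q$, and then compute the constrained Venn classes and their left-to-right order $(V_1, V_2, \ldots, V_k)$. This can be done in $O(size(M))$ time, since two columns lie in the same Venn class iff they have the same set of rows of $\mR_Q$ containing them, and in a consecutive-ones ordering these are exactly the maximal blocks of columns that agree on every row --- a boundary between Venn classes occurs precisely at a position where some row of $\mR_Q$ has a left or right endpoint. I would also label, for each column, whether $Z$ has a $1$ or a $0$ there, and compute which Venn class the unconstrained columns form; all of this is $O(size(M))$ as in the proof of Lemma~\ref{lem:SuitablePairTime}.

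For the 1-0-1 search, the key observation is that we only need to know, for each Venn class $V_\ell$, whether $V_\ell$ contains a $1$ of $Z$ and whether it contains a $0$ of $Z$; call these two bits $\textrm{one}(\ell)$ and $\textrm{zero}(\ell)$. These $2k$ bits can be computed in one pass in time proportional to $|C_Q|$, hence $O(size(M))$. A 1-0-1 configuration exists iff there are indices $h < i < j$ with $\textrm{one}(h)$, $\textrm{zero}(i)$, and $\textrm{one}(j)$ all true. To decide this and exhibit a witness, let $h^*$ be the smallest index with $\textrm{one}(h^*)$ true and $j^*$ the largest index with $\textrm{one}(j^*)$ true; a configuration exists iff $h^* < j^*$ and there is some $i$ with $h^* < i < j^*$ and $\textrm{zero}(i)$ true, which we check by scanning. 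If found, pick any column of $V_{h^*}$ with a $1$ of $Z$, any column of $V_i$ with a $0$ of $Z$, and any column of $V_{j^*}$ with a $1$ of $Z$; these three columns lie in three distinct constrained Venn classes with the required pattern. The 0-1-0 search is symmetric, swapping the roles of $1$ and $0$: it exists iff, with $h^*$ the smallest and $j^*$ the largest index with $\textrm{zero}$ true, we have $h^* < j^*$ and some intermediate $i$ with $\textrm{one}(i)$ true. Correctness in both cases follows directly from Definition~\ref{def:configs}: the candidate triple has the prescribed $Z$-values and lies in three separate constrained Venn classes.

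The one subtlety I expect to have to address is the reduction from "some triple $h<i<j$" to "the extreme indices $h^*$ and $j^*$": I would justify that if any valid triple $(h,i,j)$ exists, then $(h^*, i, j^*)$ is also valid, because $h^* \le h < i < j \le j^*$ preserves both the ordering and the required bits. This monotonicity is the crux of getting linear time --- it lets us avoid examining quadratically many pairs of $1$-columns. Everything else is bookkeeping: the Venn-class decomposition is linear by Lemma~\ref{lem:Meidanis} and endpoint-scanning, the $\textrm{one}/\textrm{zero}$ arrays are filled by one pass over $C_Q$ using the precomputed $Z$-labels, and the final witness extraction touches at most three Venn classes. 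Summing, each search is $O(size(M))$, which is the claim.
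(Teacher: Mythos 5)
Your proposal is correct and follows essentially the same route as the paper's proof: linear-time computation of the constrained Venn classes from row endpoints, labelling columns with their $Z$-values, and then a greedy/extremal scan justified by an exchange argument (your class-level bits with extreme indices $h^*, j^*$ versus the paper's leftmost 1 followed by rightward scans for a 0 and a 1 in different classes are only cosmetic variants). No gaps to report.
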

\begin{proof}
Let $(c_1, c_2, \ldots, c_k)$ be the consecutive-ones ordering of 
$\mR_Q$.  It takes $O(size(M))$ time to partition $(c_1, c_2, \ldots, c_k)$
into intervals corresponding to its constrained Venn classes; the
boundaries of the partition classes occur between consecutive columns such
that the first column is the right endpoint or the second column is a left endpoint
of a member of $\mR_Q$.

Label each column of $(c_1, c_2, \ldots, c_k)$ with the value of $Z$
in the column.
Suppose there exists a 1-0-1 configuration $(c_h, c_i, c_j)$ for $Z$.
If $c_p$ is the leftmost column of $(c_1, \ldots, c_k)$ that contains
a 1 of $Z$, then $(c_p, c_i, c_j)$ is a 1-0-1 configuration where $p \leq h$.  Find
the next column $c_q$ to the right of $c_p$ that resides in a different
Venn class from $c_p$ and contains a 0 of $Z$.  Because $(c_p, c_i, c_j)$
is a 1-0-1 configuration, $c_q$ exists and $q \leq i$.  Then $(c_p, c_q, c_j)$
is a 1-0-1 configuration.  Find the next column $c_r$ to the right
of $c_q$ that resides in a different Venn class from $c_q$ and contains a 1
of $Z$.  Because $(c_p, c_q, c_j)$ is a 1-0-1 configuration, $c_r$ exists.
Then $(c_p, c_q, c_r)$ is a 1-0-1 configuration.  The procedure
always succeeds in producing a 1-0-1 configuration if one exists.  Conversely, if
$c_p$, $c_q$ and $c_r$ exist, then a 1-0-1 configuration exists,
since $(c_p, c_q, c_r)$ is an example of one.  

The procedure
takes $O(k)$ time to find $c_p$ and scan rightward looking for
$c_q$ and $c_r$, once $(c_1, c_2, \ldots, c_k)$ and its partition
into constrained Venn classes is known.
By symmetry of the treatment of 0's and 1's of $Z$, and the second 
statement of the lemma also follows.
\end{proof}

The implementation of {\tt FindColumns} is based on these tests, and
is given as Algorithm~\ref{alg:FindColumns}.

\begin{algorithm} 
\KwData{A matrix $M$ that does not have the consecutive-ones property
and such that every Tucker matrix in $M$ contains all rows of $M$, and such
that, excluding the last row of $M$, the overlap graph of the rows is a path, $\mP'$.}
\KwResult{A Tucker submatrix of $\mP' \cup \{Z\}$}
    Let $C$ be the columns of $M$\;
    $C_Q \longleftarrow \bigcup \mP'$\;
    Let $(c_1, c_2, \cdots, c_k)$ be the left-to-right ordering
      of elements of $C_Q$ in a\\ \quad \quad  consecutive-ones ordering of $\mP'$\;
    Compute the members of $\mA$\;
    \If{there is a 1-0-1 configuration}  {
	    \tcp{The first condition of Lemma~\ref{lem:violation2} applies}
       $(c_{p(1)}, c_{p(2)}, c_{p(3)}) \longleftarrow$ a 1-0-1 configuration for $Z$\;
       $C \longleftarrow C_Q$\;
    }\Else {
	    \tcp{The second condition of Lemma~\ref{lem:violation2} applies}
       $C \longleftarrow C_Q \cup \{c'\}$\;
       $(c_{p(1)}, c_{p(2)}, c_{p(3)}) \longleftarrow$ a 0-1-0 configuration for $Z$\;
    }
    \For{$i \in (1,2, \ldots, k)$}{
        \If{ $c_i \not\in \{c_{p(1)}, c_{p(2)}, c_{p(3)}\}$ and
             $c_i$ is not the only element of $C$ in any member of $\mA$} {
               $C \longleftarrow C \setminus \{c_i\}$\; 
         }
     }
     \If{$(c_{p(1)}, c_{p(2)}, c_{p(3)})$ is a 1-0-1 configuration for $Z$}{
         \For{$i = 1$ to 3}{
             \If{$c_{p(i)}$ is not the only element of $C$ in any member of $\mA$ and $C \setminus \{c_{p(i)}\}$ has 
                  a 1-0-1 configuration for $Z$}{
                  $C \longrightarrow C \setminus \{c_{p(i)}\}$\;
             }
         }
     }
     \Else{
         \For{$i = 1$ to 3}{
             \If{$c_{p(i)}$ is not the only element of $C$ in any member of $\mA$ and $C \setminus \{c_{p(i)}\}$ has 
                  a 0-1-0 configuration}{
                  $C \longrightarrow C \setminus \{c_{p(i)}\}$\;
             } 
         }
     }
 
return $(\mP' \cup \{Z\})[C]$\;
\caption{\texttt{FindColumns$(M)$}}\label{alg:FindColumns}

\end{algorithm}

\begin{lemma}\label{lem:FindColumnsCorrect}
{\tt FindColumns$(M)$} returns a Tucker submatrix of $M$.
\end{lemma}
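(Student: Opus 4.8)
The plan is to verify that \texttt{FindColumns} maintains the invariants of Lemma~\ref{lem:minCols} throughout, so that the set $C$ it returns is a minimal set of columns for which $\mP'[C]$ has a connected overlap graph and $(\mP' \cup \{Z\})[C]$ satisfies one of the two cases of Lemma~\ref{lem:violation2}. Combined with Lemma~\ref{lem:minCols}, this gives that $(\mP' \cup \{Z\})[C]$ is a Tucker submatrix. First I would establish the loop invariant: at every point in the algorithm after the initialization of $C$ (either to $C_Q$ when there is a 1-0-1 configuration, or to $C_Q \cup \{c'\}$ for a column $c'$ witnessing a 1 of $Z$ in the unconstrained Venn class when there is a 0-1-0 configuration), the set $C$ satisfies (1) the overlap graph of $\mP'[C]$ is connected, and (2) the triple $(c_{p(1)}, c_{p(2)}, c_{p(3)})$ remains a configuration of the appropriate type for $Z$ in $(\mP' \cup \{Z\})[C]$. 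I would check that this holds after initialization: for case 1, $C = C_Q$ and the overlap graph of $\mP'[C_Q] = \mP'$ is a path by hypothesis, hence connected; for case 2, $c'$ lies outside $C_Q$ so it cannot destroy connectivity of $\mP'[C_Q \cup \{c'\}]$ and it supplies the required 1 of $Z$ in the unconstrained class.

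Next I would argue that each removal step preserves the invariant. In the main \texttt{For} loop over $c_1, \ldots, c_k$, a column $c_i$ is removed only if it is not one of the three configuration columns (so removing it cannot destroy the configuration) and it is not the unique element of $C$ in any member of $\mA$. By Lemma~\ref{lem:A}, the overlap graph of $\mP'[C]$ is connected iff every member of $\mA$ contains an element of $C$, so removing $c_i$ under this guard preserves connectivity. After this loop, every remaining column of $C \cap C_Q$ except possibly the three configuration columns is the sole element of $C$ in some member of $\mA$, i.e. removing it would disconnect $\mP'[C]$; this is precisely the minimality condition on the non-configuration columns. The two subsequent \texttt{For} loops (one for each configuration type) then attempt to remove each of the three configuration columns $c_{p(i)}$, but only when doing so keeps all members of $\mA$ nonempty (preserving connectivity) \emph{and} leaves a configuration of the same type (preserving condition (2) of Lemma~\ref{lem:violation2}); the existence of a new configuration after removal is checked directly. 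After these loops, none of the remaining columns can be removed without violating either connectivity or the Lemma~\ref{lem:violation2} condition.

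The step I expect to be the main obstacle is proving genuine minimality of the final $C$: that no column of $C$ can be dropped while keeping both properties. The subtlety is that after the main loop the three configuration columns $c_{p(1)}, c_{p(2)}, c_{p(3)}$ might individually be removable, but removing one could require switching to a \emph{different} witnessing configuration, and one must rule out that removing two of them simultaneously is possible even though removing either alone (greedily, in the order tried) is not. I would handle this by a matroid-exchange-style argument exploiting the structure of Lemma~\ref{lem:violation2}: the condition ``$(\mP'\cup\{Z\})[C]$ has a 1-0-1 or 0-1-0 configuration'' is, once connectivity of $\mP'[C]$ is fixed, equivalent to ``$(\mP'\cup\{Z\})[C]$ fails the consecutive-ones property,'' which is downward-monotone only in a limited sense — but here we restrict to $C \subseteq C_Q$ (plus the single extra column $c'$ in case 2) and to connected $\mP'[C]$, and within this family the failure of consecutive-ones is witnessed by three columns in three distinct Venn classes. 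I would show that if some $c_{p(i)}$ is retained only because the \emph{current} triple uses it, then either there is already another triple avoiding it (contradicting that the greedy removal check failed) or retaining it is forced; and that removing more than one configuration column at once would, by Lemma~\ref{lem:violation2}, leave a matrix that is consecutive-ones orderable among those columns, so the set could not contain a Tucker submatrix. Since any Tucker submatrix contained in $(\mP'\cup\{Z\})[C]$ has $\mP'\cup\{Z\}$ as its row set (Lemma~\ref{lem:FindRowsCorrect}) and its columns form a connected-overlap, configuration-bearing subset by Proposition~\ref{prop:cycle} and Lemma~\ref{lem:violation2}, a minimal such column set is exactly what the algorithm computes, and Lemma~\ref{lem:minCols} finishes the proof. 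Finally, I would remark that the three distinguished columns together with the $\mA$-guards make the check after each candidate removal a local, constant-number-of-configuration-tests affair, so correctness does not depend on any global recomputation.
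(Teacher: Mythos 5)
Your overall structure---maintain the two conditions of Lemma~\ref{lem:minCols} as an invariant using the $\mA$-guards of Lemma~\ref{lem:A}, then argue that no retained column can be removed, and finish with Lemma~\ref{lem:minCols}---is exactly the paper's proof, and that half of your write-up is fine. The gap is in the step you yourself flag as the obstacle. The worry that removing two configuration columns simultaneously might succeed even though removing either alone fails is vacuous: by the proof of Lemma~\ref{lem:minCols}, the conjunction of its two conditions is equivalent to $(\mP'\cup\{Z\})[C]$ containing a Tucker submatrix, and this is preserved under adding columns (proper overlap of a fixed pair of rows, membership of a column in a Venn class, and the values of $Z$ are all unaffected by the presence of extra columns). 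Hence if some proper subset $C'\subsetneq C$ satisfied the conditions, so would $C\setminus\{c\}$ for any $c\in C\setminus C'$, so single-column irremovability already gives minimality. No matroid-exchange argument is needed, and your asserted shortcut---that removing more than one configuration column at once would leave a consecutive-ones-orderable matrix---is not justified as stated.

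More substantively, your removal test for $c_{p(i)}$ only verifies that a configuration of the \emph{same type} survives, whereas minimality requires that when this test fails, $C\setminus\{c_{p(i)}\}$ satisfies \emph{neither} condition of Lemma~\ref{lem:violation2}. You mention the possibility of ``switching to a different witnessing configuration'' but never rule it out, and the exchange-style argument you propose is not the mechanism that does so. The paper closes this with two direct observations missing from your sketch: in the 1-0-1 branch the algorithm sets $C\leftarrow C_Q$, so $C$ contains no column of the unconstrained Venn class and condition 2 of Lemma~\ref{lem:violation2} can never apply to any subset of $C$; in the 0-1-0 branch, that branch was reached only because no 1-0-1 configuration exists in the full column set, and since deleting columns changes neither the Venn class of a surviving column nor its entry in $Z$, no subset can acquire a 1-0-1 configuration. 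With these two facts, ``destroys all same-type configurations'' does imply ``violates the conditions of Lemma~\ref{lem:minCols},'' and your argument then goes through as the paper's does.
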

\begin{proof}
The initial matrix contains a Tucker submatrix.  
By Lemma~\ref{lem:A}, the tests applied before a column is removed ensure that the conditions
of Lemma~\ref{lem:minCols} continue to be satisfied by the final
submatrix $(\mP' \cup \{Z\})[C]$ returned by the procedure.  
Therefore, $(\mP' \cup \{Z\}[C]$ contains a Tucker submatrix.

We now show that $C$ is minimal with respect to this property.
If the overlap graph of $\mP'[C \setminus \{c\}]$ is not connected,
then $(\mP' \cup \{Z\})[C \setminus \{c\}]$ does not satisfy the conditions of
Lemma~\ref{lem:minCols}.  Thus, all columns of $C \cap C_Q \setminus \{c_{p(1)}, c_{p(2)}, c_{p(3)}\}$ 
are necessary for $(\mP' \cup \{Z\})[C]$
to satisfy the conditions of Lemma~\ref{lem:minCols}.  

If $(c_{p(1)}, c_{p(2)}, c_{p(3)})$ is a 1-0-1 configuration, due to the absence
of any element of the unconstrained class in $C$, after the first {\tt if} statement,
only condition 1 of
Lemma~\ref{lem:violation2} is satisfied, so each element
$\{c_{p(1)}, c_{p(2)}, c_{p(3)}\}$ that is retained in $C$ is necessary
for $(\mP' \cup \{Z\})[C]$ to satisfy the requirements of Lemma~\ref{lem:minCols},
either because its removal would undermine all 1-0-1 configurations or
would undermine the connectivity of the overlap graph of $\mP'$.

If $(c_{p(1)}, c_{p(2)}, c_{p(3)})$ is a 0-1-0 configuration, it is because
it is because condition 1 of Lemma~\ref{lem:violation2}
is not satisfied.  Removal of $c'$ from $C$ would undermine 
condition 2 of Lemma~\ref{lem:violation2}, hence the conditions
of Lemma~\ref{lem:minCols}.  Each element of
$\{c_{p(1)}, c_{p(2)}, c_{p(3)}\}$ that is retained in $C$ is necessary
for $(\mP' \cup \{Z\})[C]$ to satisfy the requirements of Lemma~\ref{lem:minCols},
either because its removal would undermine all 0-1-0 configurations or
would undermine the connectivity of the overlap graph of $\mP'$.

By Lemma~\ref{lem:minCols}, the returned submatrix, $(\mP' \cup \{Z\})[C]$, is a Tucker
submatrix.
\end{proof}

\begin{lemma}\label{lem:FindColumnsTime}
{\tt FindColumns$(M)$} can be implemented to take $O(size(M))$ time.
\end{lemma}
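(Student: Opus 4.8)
The plan is to account for the cost of each phase of Algorithm~\ref{alg:FindColumns} and show that every phase runs in $O(size(M))$ time, given a suitable preprocessing step. First I would set up the data structures: obtain a consecutive-ones ordering $(c_1,c_2,\ldots,c_k)$ of $\mP'$ and label each column with its value in $Z$; this takes $O(size(M))$ time by Booth and Lueker's algorithm (Lemma~\ref{lem:BLPrefix}) and a single pass over $Z$. The members of $\mA$ can be computed in $O(size(M))$ time by walking down the path $\mP'=(R_1,\ldots,R_h)$ and, for each consecutive pair $(R_i,R_{i+1})$, forming the three sets $R_i\setminus R_{i+1}$, $R_i\cap R_{i+1}$, $R_{i+1}\setminus R_i$; the total size of all these sets is $O(size(M))$ by Lemma~\ref{lem:ASize}. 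The initial 1-0-1 or 0-1-0 configuration is found in $O(size(M))$ time by Lemma~\ref{lem:101Time} (which also gives the partition into constrained Venn classes, needed below). Finding a column $c'$ in the unconstrained class containing a 1 of $Z$, when needed, is a single scan.

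The heart of the argument is the cost of the removal tests. For the test ``$c_i$ is not the only element of $C$ in any member of $\mA$,'' the plan is to maintain, for each member $X\in\mA$, a counter $\mathrm{cnt}(X)$ equal to $|X\cap C|$, together with, for each column $c$, the (constant-size, since each column lies in at most one pair of consecutive path-rows' three Venn-style sets — or more carefully, a bounded number of them) list of members of $\mA$ containing $c$. Actually, a column may belong to up to a linear number of members of $\mA$ in the worst case, so I would instead observe the following: the members of $\mA$ partition $C_Q$ in a coarsening-refinement sense along the path, but more simply, $\sum_{X\in\mA}|X|=O(size(M))$ by Lemma~\ref{lem:ASize}, so I can afford, once at the start, to build for each column $c$ the list $\mathrm{mem}(c)=\{X\in\mA : c\in X\}$, with $\sum_c|\mathrm{mem}(c)|=O(size(M))$. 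Then testing whether $c_i$ is the unique element of $C$ in some $X$ is done by scanning $\mathrm{mem}(c_i)$ and checking whether any $X$ has $\mathrm{cnt}(X)=1$; when $c_i$ is actually removed, we decrement $\mathrm{cnt}(X)$ for each $X\in\mathrm{mem}(c_i)$. The total cost over the whole main {\tt for} loop is $\sum_i |\mathrm{mem}(c_i)| = O(size(M))$.

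The remaining phases — the two {\tt for} loops over $i\in\{1,2,3\}$ that attempt to remove the three configuration columns $c_{p(1)},c_{p(2)},c_{p(3)}$ — each perform at most three iterations, and each iteration does one $\mathrm{mem}$-scan plus at most one call to Lemma~\ref{lem:101Time} to test for a 1-0-1 (resp. 0-1-0) configuration in $C\setminus\{c_{p(i)}\}$; each such call is $O(size(M))$, and there are only $O(1)$ of them. Finally, extracting and returning $(\mP'\cup\{Z\})[C]$ is one pass over the retained columns, $O(size(M))$. Summing, the whole algorithm is $O(size(M))$. The main obstacle I anticipate is exactly the bookkeeping for the ``unique element of $C$ in a member of $\mA$'' test: a naive implementation that rescans a member of $\mA$ on each query could cost $\Theta(size(M))$ per removed column and blow the bound to quadratic, so the argument must lean on the counter/incidence-list scheme together with the $O(size(M))$ bound on $\sum_{X\in\mA}|X|$ from Lemma~\ref{lem:ASize} to amortize the work; once that is in place, everything else is routine bookkeeping against earlier lemmas.
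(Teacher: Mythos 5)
Your proposal is correct and follows essentially the same route as the paper's proof: cardinality counters on the members of $\mA$ together with per-column incidence lists, amortized via the $O(size(M))$ bound of Lemma~\ref{lem:ASize}, plus $O(1)$ invocations of the configuration test of Lemma~\ref{lem:101Time} for the three special columns. The only difference is presentational (your brief detour about how many members of $\mA$ a column can lie in resolves to exactly the paper's bookkeeping scheme), so nothing further is needed.
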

\begin{proof}
A consecutive-ones ordering of $\mP'$
takes $O(size(M))$ time by~\cite{BL76}.

By Lemma~\ref{lem:ASize}, it takes $O(size(M))$ time to list the
members of each instance of $R_i \setminus R_{i+1}$, $R_i \cap R_{i+1}$,
and $R_{i+1} \setminus R_i$ for $i \in \{1, 2, \ldots, h-1\}$.
This gives the members of $\mA$, some of them possibly more than once.
Let $\mL$ be this collection of lists.
Give each column a list of members of $\mL$
to which it belongs.  Initialize
a {\em cardinality counter} on each list in $\mL$, indicating the number of columns of $C$ that it contains.
These operations take $O(size(M))$ time using elementary methods, given
that the sum of lengths of the lists in $\mL$ is $O(size(M))$.

In the first {\tt for} loop, each column $c_i$ is tested to see whether it
one of the lists of $\mL$ that contains it has a cardinality counter of 1.
If not, $c_i$ is removed from $C$ and the cardinality counters of the
lists of $\mL$ that contain it are decremented.  The total time for
this loop is bounded by summing, over all $c_i \in \{c_1, c_2, \ldots, c_k\}$,
the number of members of $\mL$ that contain $c_i$.
This is just the sum of cardinalities of lists in $\mL$,
hence $O(size(M))$.

The procedure needs to test for or find a 1-0-1 or a 0-1-0 configuration on at most
five occasions:  two when it initially determines whether there is a 1-0-1 
or 0-1-0 configuration, thereby finding $(c_{p(1)}, c_{p(2)}, c_{p(3})$, and three
in one of the last two {\tt for} loops, when it tests whether removal of 
$c_{p(1)}$, $c_{p(2)}$, and $c_{p(3)}$ leave a 1-0-1 or 0-1-0 configuration.
Each of these occasions requires $O(size(M))$ time by Lemma~\ref{lem:101Time}.
\end{proof}

{\bf Example:}
{\em
We give an illustration of how Algorithm~\ref{alg:FindColumns} works
on the example of Figure~\ref{fig:P1P2} (bottom).
$\mA$ consists of the following sets: $\{3,4, \ldots, 14\}$, $\{15\}$ and $\{16\}$,
which ensure the overlap relation of $F$ and $G$, $\{16\}$, $\{15\}$, $\{12,13,14\}$,
which ensure the overlap relation of $G$ and $H$, $\{13,14,15\}$, $\{12\}$ and 
$\{10,11\}$, which ensure the overlap relation of $H$ and $J$.  Some of these
sets are redundant, but there is no need to detect this.  
We put a cardinality counter on each of these sets, and decrement it whenever
a column in the set is deleted.  At least one column
from each of these sets must be retained to maintain the connectivity of 
the overlap graph of $\mP'$.

Initially, $C = \{0, 1, \ldots, 16\}$, and $\mC_Q = \{3,4, \ldots, 16\}$.  There
is no 1-0-1 configuration, so the algorithm retains a column $c'$ in the unconstrained
class that contains a 1 in row $Z$.  Suppose it selects $c' = 2$.
It eliminates the remaining columns $0$ and $1$ of the unconstrained class
from $C$.  It finds the 0-1-0 configuration $(c_{p(1)}, c_{p(2)}, c_{p(3)}) = (4, 10, 12)$.  

The first {\tt for} loop eliminates all but columns $\{2, 4,10,12, 15,16\}$; 2 is skipped
because it is a member of the unconstrained Venn class, 4, 10, and 12
are skipped because they are $c_{p(1)}, c_{p(2)}$ and $c_{p(3)}$, and 15
and 16 are retained because they are the only remaining elements of some member of $\mA$ when they are reached.
The final {\tt for} loop determines that elimination of 4 or 10 would undermine
all 0-1-0 configurations, and elimination of 12 would remove the last remaining
element of the member $\{12\}$ of $\mA$.

The resulting instance of a Tucker matrix is that depicted on the
righthand side of Figure~\ref{fig:P1P2}.
}

That the algorithm is incomplete without the last two {\tt for} loops is illustrated
by the following example.  Let
$R_1 = \{c_1, c_2\}$, $R_2 = \{c_2, c_3, c_4\}$, $R_3 = \{c_4, c_5\}$,
$Z = \{c_1, c_2, c_5\}$, 
and $(c_1, c_2, c_3, c_4, c_5)$ be a consecutive-ones ordering
of $\mR_Q = \{R_1, R_2, R_3\}$.
Let $(c_{p(1)}, c_{p(2)}, c_{p(3)}) = (c_1, c_3, c_5)$.
None of the columns is removed by the first {\tt for} loop, since
each of $c_2$ and $c_4$ is the sole element of a member of $\mA$.   However,
$c_{p(2)}$ can be eliminated without undermining Lemma~\ref{lem:minCols}.
Although $c_{p(1)}$, $c_{p(2)}$, and $c_{p(3)}$ are in separate Venn classes,
there is no member of $\mA$ that contains $c_{p(2)}$
and excludes both $c_{p(1)}$ and $c_{p(3)}$.  The second {\tt for} loop
is required to eliminate it.  The third {\tt for} loop handles analogous situations
when $(c_{p(1)}, c_{p(2)}, c_{p(3)})$ is a 0-1-0 configuration.

\begin{theorem}\label{thm:TuckerTime}
It takes $O(size(M))$ time to find an instance of a Tucker submatrix in any
matrix that does not have the consecutive-ones property.
\end{theorem}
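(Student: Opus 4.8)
The plan is to derive Theorem~\ref{thm:TuckerTime} as the composition of the correctness and running-time lemmas already established for the pieces of Algorithm~\ref{alg:TuckerSubmatrix}, together with a short direct argument for the base case in which {\tt TuckerRows} returns at most four rows. Two preliminary observations will be used throughout. First, since the input $M$ does not have the consecutive-ones property, Tucker's characterization of the minimal forbidden submatrices~\cite{Tuck72} guarantees that $M$ contains at least one Tucker submatrix, and the same then holds for every row-restriction of $M$ that still lacks the consecutive-ones property; this is what will make each search below succeed. Second, every subroutine called returns a submatrix obtained by restricting the current matrix to a subset of its rows and columns (possibly reordered), so the size of the working matrix never increases, and each ``$O(size(\cdot))$'' bound below is $O(size(M))$.

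I would then trace the two branches of Algorithm~\ref{alg:TuckerSubmatrix}. The call {\tt TuckerRows}$(M,4)$ takes $O(4\cdot size(M))=O(size(M))$ time by Lemma~\ref{lem:TuckerRowsTime}, and by Lemma~\ref{lem:TuckerRowsCorrect} its output $M'$ lacks the consecutive-ones property and either (a) has at most four rows, which are then the rows of every Tucker submatrix of $M'$, or (b) has at least five rows, of which the first five lie in every Tucker submatrix of $M'$. In case (b), $M'$ meets exactly the precondition of {\tt FindRows}, so by Lemmas~\ref{lem:FindRowsCorrect} and~\ref{lem:FindRowsTime} the call {\tt FindRows}$(M')$ returns, in $O(size(M'))=O(size(M))$ time, a matrix $M''$ that lacks the consecutive-ones property, all of whose rows lie in every Tucker submatrix of $M''$, and such that, excluding its last row, the overlap graph of its rows is a path $\mP'$. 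This is precisely the precondition of {\tt FindColumns}, so by Lemmas~\ref{lem:FindColumnsCorrect} and~\ref{lem:FindColumnsTime} the call {\tt FindColumns}$(M'')$ returns a Tucker submatrix in $O(size(M''))=O(size(M))$ time. Adding the three linear contributions establishes the theorem in case (b), and the returned matrix is a Tucker submatrix of the original $M$ since only subsets of its rows and columns were ever used.

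For case (a) I would argue directly. Because $M'$ lacks the consecutive-ones property it contains a Tucker submatrix, and because $M'$ has at most four rows that submatrix must use all rows of $M'$; hence it is one of the constantly many Tucker matrices with at most four rows (namely $M_{IV}$, $M_V$, and the finitely many small members of the $M_I$, $M_{II}$, $M_{III}$ families, those being the ones with parameter at most $4$), each of which has only a constant number of columns. Bucket the columns of $M'$ by their bit pattern on the at most four rows of $M'$ --- at most $2^{4}$ buckets, computed in $O(size(M))$ time --- and then, over the constantly many candidate Tucker matrices and the constantly many bijections from their rows to the rows of $M'$, test whether every required column pattern is available, returning the columns that realize the first success. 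This runs in $O(size(M))$ time, completing the theorem.

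I do not anticipate a real obstacle: the substance of the result is carried by the previously proved lemmas, and the chaining goes through precisely because each subroutine's postcondition is literally the next one's precondition. The only genuinely new point is the base case, and there the only thing to verify is the easy fact that a bounded number of rows forces certificates of bounded size, which can be located by brute force in linear time; the one prerequisite worth stating explicitly is Tucker's theorem, which is what guarantees that a Tucker submatrix is actually present at each stage.
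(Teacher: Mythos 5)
Your proposal is correct and follows essentially the same route as the paper: it chains Lemmas~\ref{lem:TuckerRowsTime}, \ref{lem:TuckerRowsCorrect}, \ref{lem:FindRowsCorrect}, \ref{lem:FindRowsTime}, \ref{lem:FindColumnsCorrect}, and~\ref{lem:FindColumnsTime} for the case of more than four rows, and handles the at-most-four-row case by a constant-size brute-force search over the bounded family of small Tucker matrices in $O(size(M))$ time, which is just a reformulation (bucketing column patterns rather than enumerating the $i!\leq 24$ row orderings) of the paper's own argument.
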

\begin{proof}
The calls to {\tt TuckerRows}, {\tt FindRows} and {\tt FindColumns} take $O(size(M))$ time
by Lemmas~\ref{lem:TuckerRowsTime},~\ref{lem:FindRowsTime} and~\ref{lem:FindColumnsTime}.
This gives the result if {\tt TuckerRows} returns a matrix with greater than four rows.
If {\tt TuckerRows} returns a matrix with $i \leq 4$ rows, then we cannot use {\tt FindColumns}
to find the columns of a Tucker submatrix.  However, by Lemma~\ref{lem:TuckerRowsCorrect},
every Tucker submatrix contains all $i$ rows.  One way to find one is to
generate all $i! \leq 24 = O(1)$ orderings of rows, and for each, to check
for the existence of the column vectors of length $i$, as
depicted in Figure~\ref{fig:TuckMatrices}.  One of the matrices has been found if all
of its depicted column vectors are found in one of the orderings.
This takes $O(m) = O(size(M))$ time.
\end{proof}

\section{Finding a Lekkerkerker-Boland Subgraph}

An {\em asteroidal triple} (AT) in a graph $G$ is a set of three vertices $\{x,y,z\}$
such that there is a path from $y$ to $z$ in $G  - N[x]$, a path from
$x$ to $z$ in $G - N[y]$, and a path from $x$ to $y$ in $G - N[z]$.  Lekkerkerker
and Boland showed that a chordal graph is an interval graph if and only if
it has no AT.
The AT's in the chordal LB subgraphs are those that are indicated by square vertices
in Figure~\ref{fig:LBGraphsTucker}.

In~\cite{KMMSJournal}, an algorithm is
given that finds an AT in a non-interval chordal graph in $O(n+m)$ time.  However,
it does not follow from this result that there is a linear-time algorithm for
finding an LB subgraph.  One reason is that the algorithm of that paper 
can produce AT's that are not the AT of any induced LB subgraph.  Some of the
difficulties posed by this and other pitfalls are explained
below.

Let $M$ be the clique matrix of a graph $G$.  For notational
convenience, we will treat the rows of $M$ as interchangeable with the
corresponding vertices of $G$.  This allows us to refer to the subgraph
of $G$ induced by a set $X$ of rows of $M$, for example.  Since
$G$ is the intersection graph of the rows of its clique
matrix, $G[X]$ is 
the intersection graph of the rows in $X$.  We will also treat
the columns interchangeably with the cliques they represent.
This allows us to refer to the intersection of two columns, for example.

Given that the clique matrices of chordal graphs have a consecutive-ones
ordering if and only if they are interval graphs, it follows
that the clique matrix of every non-interval chordal graph must
contain a Tucker matrix.
As observed above, the Tucker matrices other than $M_I(k)$ for
$k > 3$ are obtained by deleting the simplicial (square) vertices in Figure~\ref{fig:LBGraphsTucker}
from the clique matrices of chordal LB graphs.
Each of these vertices is simplicial, so each is a member of
single column of the clique matrix.
Let the {\em incomplete columns} of the Tucker matrices be
the three columns from which these simplicial vertices are
removed.  Other than for $M_{III}(k)$ for $k > 3$, this uniquely
defines the three incomplete columns for every Tucker matrix.
By {\em completing a Tucker matrix}, let us denote the inverse operation,
namely, for each incomplete column, adding a row that contains a 1
in that column and in no other column of the Tucker matrix.

One could mistakenly believe that, since an instance of a Tucker
submatrix in the clique matrix of $G$ proves that $G$ is not
an interval graph, and an induced LB subgraph proves the same thing,
all that is required to complete an induced LB subgraph of $G$
is to find the rows that complete the matrix.

The first pitfall is that an instance of a Tucker submatrix
cannot always be completed using rows of the clique matrix of $G$.
This illustrated by Figure~\ref{fig:LBExtension}.  

\begin{figure}
\centerline{\includegraphics[scale=.35]{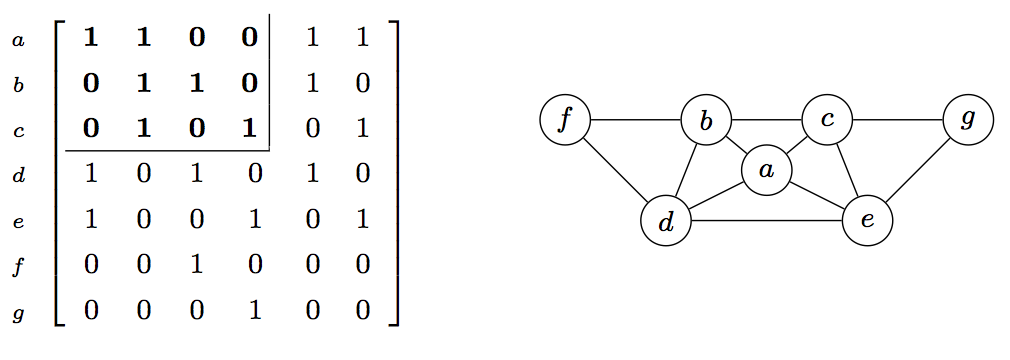}}
\caption{A Tucker submatrix of a clique matrix cannot always be completed
using rows of the clique matrix.  On the left is a clique matrix
of the graph on the right.  In the upper left is an instance
of a Tucker submatrix, $M_{III}(3)$.  
Its first, third and fourth columns are its incomplete
columns.  However, the first column cannot be completed
using rows of the clique matrix, since every row that
has a 1 in this column also has a 1 in another column
of the submatrix.  Its rows, $\{a,b,c\}$, are not vertices in 
any instance of an LB subgraph, since the only LB subgraph of $G$ is 
the chordless cycle $(b,c,e,d)$.}\label{fig:LBExtension}.
\end{figure}

Fortunately, a Tucker submatrix of the clique matrix of $G$
can always be completed if $G$ is chordal, as we show below.
However, it still does not follow that 
completing the Tucker matrix in this way yields an instance of an LB subgraph.
The problem is that that the intersection graph of rows of the 
instance of the Tucker matrix may not
faithfully represent the subgraph of $G$ that these rows induce.

Figure~\ref{fig:M4M5} illustrates this pitfall.
Suppose the depicted instance of $M_{IV}$ is found to be a submatrix
of the clique matrix $M$ of $G$.  Without loss of generality,
suppose that the depicted ordering of $M_{IV}$ is consistent
with the ordering of the clique matrix, $M$, and that the 1's in each of rows $0$, $1$ and $2$ are consecutive
in $M$.  If the intersection graph of these rows is the depicted instance $M_{IV}$, then 
the completion of this Tucker matrix yields the first graph
to its right, which is $G_I$.
However, it may be that rows 0 and 1 or that rows 1 and 2
intersect in one of the columns that do not form part
of the submatrix.  In $G$, the pairs $\{0,1\}$ and $\{1,2\}$
might be adjacent.

\begin{figure}[!ht]
  \captionsetup[subfigure]{labelformat=empty}
\centering
\subfloat[]{\includegraphics[scale=.3]{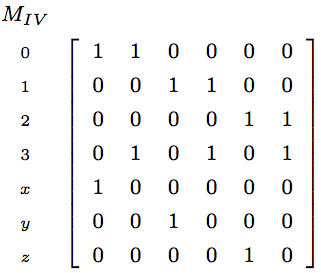}} \\
\subfloat[$G_I$]{\includegraphics[scale=.29]{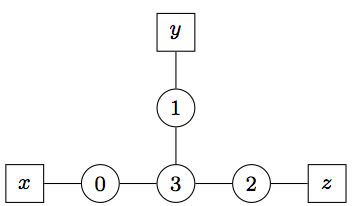}} \quad
\subfloat[$\times$]{\includegraphics[scale=.29]{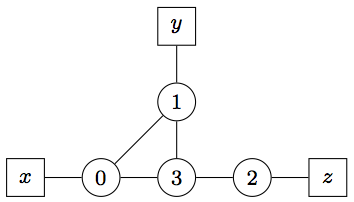}} \quad
\subfloat[$\times$]{\includegraphics[scale=.29]{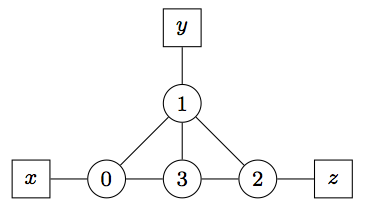}} \\
\subfloat[]{\includegraphics[scale=.3]{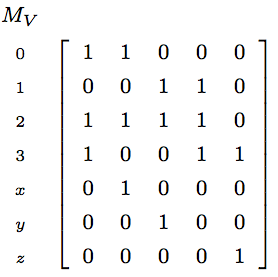}}\\
\subfloat[$G_{II}$]{\includegraphics[scale=.3]{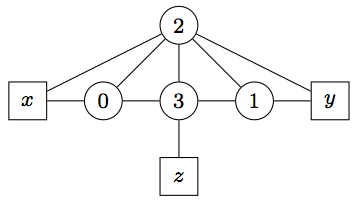}} \quad \quad
\subfloat[$\times$]{\includegraphics[scale=.26]{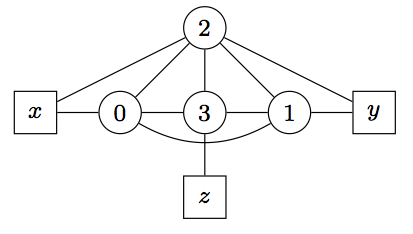}}

\caption{Even when a Tucker submatrix of a clique matrix can
be completed using rows of the clique matrix $M$ of a graph $G$,
this completion does not necessarily give the
vertices of an LB subgraph.  The intersection
graph of rows of the completed submatrix may not
accurately reflect the subgraph  of $G$ that they induce.
Rows that do not intersect in the submatrix may intersect in $M$.
The graphs below each matrix are the possible subgraphs induced
by the rows in the completion of the Tucker submatrices
$M_{IV}$ and $M_V$.
The middle graph below $M_{IV}$ is not an LB subgraph,
though it contains one, a $G_{IV}(6)$ induced by $\{x,y,0,1,2,3\}$.
The second graph below $M_V$ is not an LB subgraph, though
it contains a $G_{IV}(6)$ induced by $\{x,y,z,0,1,3\}$.
Similar issues arise in the completion of Tucker
submatrices that are examples of $M_{II}(k)$ and $M_{III}(k)$.
}\label{fig:M4M5}
\end{figure}

If 0 and 1 are adjacent but 1 and 2 are not, then completing
the $M_{IV}$ yields the clique matrix of $G_I$.  However,
the subgraph induced by rows in $G$
is second graph on the right.  This is not an LB subgraph.  
When $z$ is deleted from it, however,
it yields an LB subgraph, $G_{IV}(6)$.  The case where
0 and 1 are nonadjacent and 1 and 2 are adjacent is symmetric with this
case.

If 0 and 1 are adjacent and 1 and 2 are adjacent,
the completion still gives the clique matrix of $G_I$,
but the subgraph of $G$ induced by 
its rows is the third graph on
the right, which is an instance of $G_{IV}(7)$.

Similarly, if an instance of $M_V$ is found, its
completion yields a submatrix of the clique matrix of $G$
that is the clique matrix of $G_{II}$.
However, the subgraph of $G$ induced by 
the rows of the instance could be
the second graph on the right, which is not an LB graph,
though it contains an induced LB subgraph, an instance
of $G_{IV}$.

Similar issues arise with the completion of instances of
$M_I(k)$, $M_{II}(k)$ and $M_{III}(k)$.    

To get around these problems, we make use of a stronger version of 
Theorem~\ref{thm:TuckerTime}:

\begin{lemma}\label{lem:TuckMinimal}
When $M$ does not have the consecutive-ones property,
it takes $O(size(M))$ time to find an instance
of a Tucker submatrix whose rows are a minimal
set of rows of $M$ that contain a Tucker submatrix of $M$.
\end{lemma}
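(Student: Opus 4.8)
The plan is to observe that the procedure already assembled in Algorithm~\ref{alg:TuckerSubmatrix}, together with the correctness lemmas proved for its subroutines, delivers exactly this stronger statement; essentially no new construction is needed, only a reinterpretation of what has already been proved. First I would split on the outcome of the call \texttt{TuckerRows}$(M,4)$, exactly as in Algorithm~\ref{alg:TuckerSubmatrix}. If it returns a matrix $M_0$ with $i\le 4$ rows, then by Lemma~\ref{lem:TuckerRowsCorrect} these $i$ rows are the rows of every Tucker submatrix in $M_0$, and I complete it to an actual Tucker submatrix by the brute-force column search of Theorem~\ref{thm:TuckerTime} in $O(m)$ time. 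If $M_0$ has at least five rows, Lemma~\ref{lem:TuckerRowsCorrect} guarantees that every Tucker submatrix of $M_0$ contains its first five rows, so I run \texttt{FindRows} to obtain $M'$ and then \texttt{FindColumns}$(M')$ to obtain a column set $C$; by Lemmas~\ref{lem:FindRowsCorrect} and~\ref{lem:FindColumnsCorrect}, $(\mP'\cup\{Z\})[C]$ is a Tucker submatrix whose rows are all of $\mP'\cup\{Z\}$, i.e.\ all rows of $M'$, and by Lemma~\ref{lem:FindRowsCorrect} every Tucker submatrix in $M'$ contains all rows of $M'$.

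Next I would prove minimality. In both cases the algorithm produces a matrix $M'$ obtained from $M$ by deleting rows but retaining \emph{all} columns of $M$, with the property that every Tucker submatrix occurring inside $M'$ uses all of the rows of $M'$, and the returned Tucker submatrix uses exactly this row set, call it $S$. I claim $S$ is a minimal set of rows of $M$ that contains a Tucker submatrix of $M$. Suppose, for contradiction, that some proper subset $S'\subsetneq S$ has the property that $M$ restricted to $S'$ and to some column set $C'$ contains a Tucker submatrix. Since $M'$ has the same columns as $M$ and $S'\subseteq S$ is a set of rows of $M'$, the matrix $M'$ restricted to $S'$ and $C'$ is literally the same matrix, so $M'$ contains a Tucker submatrix whose rows form the proper subset $S'$ of $S$, contradicting the property above. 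Hence no proper subset of $S$ works, so $S$ is minimal, and the returned submatrix is the desired instance.

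The running-time claim is then identical to the analysis of Theorem~\ref{thm:TuckerTime}: the calls to \texttt{TuckerRows}, \texttt{FindRows} and \texttt{FindColumns} take $O(size(M))$ time by Lemmas~\ref{lem:TuckerRowsTime},~\ref{lem:FindRowsTime} and~\ref{lem:FindColumnsTime}, and the brute-force column search in the $i\le 4$ case takes $O(m)=O(size(M))$ time. The only point requiring care — and the step I would expect a skeptical reader to scrutinize — is the equivalence between the phrasing used in the correctness lemmas, namely ``every Tucker submatrix in $M'$ contains all rows of $M'$'', and the assertion that the row set of $M'$ is minimal with respect to containing a Tucker submatrix of $M$. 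That equivalence hinges entirely on the (immediately verifiable) fact that \texttt{TuckerRows} and \texttt{FindRows} only ever discard rows and never columns, so that a Tucker submatrix sitting on a proper row-subset of $M'$ with arbitrary columns of $M$ is automatically a Tucker submatrix ``in $M'$''. All of the substantive work has already been carried out in establishing those lemmas; nothing further is needed.
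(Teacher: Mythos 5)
Your proposal is correct and takes essentially the same route as the paper: the paper's own proof is just the same case split, citing Lemma~\ref{lem:TuckerRowsCorrect} when the returned matrix has at most four rows and Lemma~\ref{lem:FindRowsCorrect} otherwise, with the running time already covered by the analysis of Theorem~\ref{thm:TuckerTime}. Your explicit observation that minimality with respect to the original $M$ follows because \texttt{TuckerRows} and \texttt{FindRows} discard only rows and never columns is precisely the step the paper leaves implicit.
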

\begin{proof}
If {\tt TuckerSubmatrix} (Algorithm~\ref{alg:TuckerSubmatrix})
returns a submatrix with at most four rows, this follows
from Lemma~\ref{lem:TuckerRowsCorrect}.
Otherwise, it follows from Lemma~\ref{lem:FindRowsCorrect}.
\end{proof}

The key result of this section is the following:

\begin{lemma}\label{lem:completion}
Suppose $G$ is chordal,
$M_T$ is an instance of a Tucker submatrix
in the clique matrix $M$ of $G$, and that the rows
of $M_T$ are a minimal set of rows that contain a Tucker
submatrix of $M$.  Then $M_T$ is not an instance of $M_I(k)$ for $k \geq 4$,
hence it has three incomplete columns.
For every choice of rows $x,y,z$ that
complete the three incomplete columns of $M_T$,
$x,y,z$ and the
rows of $M_T$ induce an LB subgraph of $G$.
\end{lemma}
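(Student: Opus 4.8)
The plan is to argue in two stages: first, that a minimal Tucker submatrix $M_T$ in the clique matrix of a \emph{chordal} graph cannot be an instance of $M_I(k)$ for $k \geq 4$, so it is one of $M_{II}(k)$, $M_{III}(k)$, $M_{IV}$ or $M_V$, each of which has exactly three incomplete columns; and second, that completing those three columns with any rows $x,y,z$ of $M$ produces exactly the clique matrix of the corresponding LB graph as an \emph{induced} subgraph of $G$, with no spurious edges among the completing rows or between them and the rows of $M_T$.

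For the first stage, I would recall that $M_I(k)$ for $k \geq 4$ is the clique matrix of a chordless cycle $G_{III}(k)$ \emph{without} any simplicial vertices to delete --- i.e., every row of $M_I(k)$ already has two 1's, so it has no incomplete columns, and completing it is vacuous. The point is that if $M_T$ were an instance of $M_I(k)$ with $k\ge 4$, its rows would induce a chordless cycle of length $\ge 4$ in $G$, which is already an LB subgraph; but more to the point, the minimality hypothesis combined with chordality forces this case out, because in a chordal graph the rows of $M_I(k)$ for $k\ge4$ would themselves be a chordless cycle on $k$ vertices, hence already an induced $G_{III}(k)$ --- actually the cleanest route is: a chordless cycle of length $\ge 4$ contradicts chordality of $G$ unless... no, wait --- $G$ being chordal only forbids chordless cycles in $G$; but the \emph{clique matrix} of a chordal $G$ can still contain $M_I(k)$ as a submatrix only if those rows induce a chordless cycle, contradicting chordality. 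So $M_T \ne M_I(k)$ for $k\ge 4$ simply because $G$ is chordal and any instance of $M_I(k)$ in $M$ would be the clique matrix of an induced $G_{III}(k)$. That handles stage one and gives the ``three incomplete columns'' conclusion by inspection of Figure~\ref{fig:TuckMatrices} and Figure~\ref{fig:LBGraphsTucker}.

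For the second stage, fix $M_T$ and completing rows $x,y,z$, and let $X$ be the set of rows of $M_T$ together with $\{x,y,z\}$. I want to show $G[X]$ is the LB graph depicted alongside $M_T$ in Figure~\ref{fig:LBGraphsTucker}. The adjacencies involving $x,y,z$ are forced: each completing row contains a 1 only in its incomplete column among columns of the completed matrix, but we must check it also has no 1 in any column of $M_T$ \emph{outside} the completed submatrix that would create an edge to a row of $M_T$ --- here I would use that $x,y,z$ are simplicial-type rows in the sense that they complete exactly one column, so their neighbourhood in $G[X]$ is exactly the members of that clique that lie in $X$, which are precisely the rows of $M_T$ with a 1 in the incomplete column; this matches the LB picture. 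The genuinely delicate part --- and the main obstacle --- is ruling out the spurious edges illustrated in Figure~\ref{fig:M4M5}: two rows $R_i, R_{i+1}$ of $M_T$ that are non-adjacent \emph{in the submatrix} could still intersect in a column of $M$ not in the submatrix. Here is where minimality does the work: if $R_i$ and $R_{i+1}$ shared such a column $c$ in $G$, then I claim the rows of $M_T$ with $c$ adjoined, or a proper subset of $M_T$'s rows, would still contain a Tucker submatrix, contradicting minimality --- more precisely, I would show that the overlap/intersection pattern forced by chordality (every clique is a maximal complete subgraph, and a chordal graph's cliques have the Helly-type running-intersection structure) makes any such extra intersection column ``redundant,'' allowing a smaller Tucker submatrix; alternatively, using that $G[X]$ with such an extra edge would contain a strictly smaller LB subgraph (as Figure~\ref{fig:M4M5} shows: deleting $z$ yields $G_{IV}(6)$), whose clique matrix restricted appropriately is a Tucker submatrix on fewer rows of $M$, again contradicting minimality of $M_T$'s row set. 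So the plan for the hard step is: assume a spurious edge, exhibit a smaller LB subgraph inside $G[X]$, extract from it a Tucker submatrix of $M$ on a proper subset of the rows of $M_T$ (using the Tucker--LB correspondence of Figure~\ref{fig:LBGraphsTucker} in reverse, together with the fact that deleting simplicial vertices from an LB subgraph's clique matrix gives a Tucker matrix), and derive a contradiction with Lemma~\ref{lem:TuckMinimal}'s minimality. Finally, having excluded all spurious edges, $G[X]$ has exactly the edge set of the clique matrix of the LB graph paired with $M_T$, so it \emph{is} that LB subgraph, completing the proof. I expect that carefully executing this contradiction --- precisely identifying which smaller Tucker submatrix appears for each of $M_{II}, M_{III}, M_{IV}, M_V$ and each possible spurious edge --- will be the bulk of the work and the place where chordality is used in an essential way (to guarantee the completion rows exist and to control which extra column-intersections are possible).
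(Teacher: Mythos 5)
Your plan has two genuine gaps. First, your exclusion of $M_I(k)$ for $k \geq 4$ does not work as stated: you claim that chordality alone rules it out because ``any instance of $M_I(k)$ in $M$ would be the clique matrix of an induced $G_{III}(k)$,'' but this ignores exactly the spurious-edge pitfall you acknowledge later --- rows that are disjoint inside the submatrix may still intersect in columns of $M$ outside it, so the rows of an $M_I(k)$ need not induce a chordless cycle. Concretely, the complete $4$-sun (inner clique $a_1,\dots,a_4$, with a simplicial vertex $x_i$ adjacent to $a_i,a_{i+1}$ for each $i$) is chordal, yet its clique matrix contains $M_I(4)$ on rows $a_1,\dots,a_4$ restricted to the four columns $\{a_i,a_{i+1},x_i\}$; those rows induce $K_4$, not a $4$-cycle. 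The paper's route is the reverse of yours: it first uses the \emph{minimality} of the row set to prove that the intersection graph of $M_T$ faithfully represents $G[X]$ (a chord column between rows $i$ and $i+2$ lets one replace two columns of $M_T$ by that column and delete row $i+1$, producing a Tucker submatrix on fewer rows), and only then invokes chordality to conclude that a faithful $M_I(k)$, $k\geq 4$, is impossible.

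Second, your stage two sets out to contradict minimality for \emph{every} spurious edge and to conclude that $G[X]$ is exactly the LB graph paired with $M_T$. That conclusion is false and the contradiction you plan to extract does not always exist. For $M_{IV}$ the only possible spurious edges are $\{0,1\}$ and $\{1,2\}$ (Lemma~\ref{lem:simplicialIS}); if exactly one of them is present, the paper does argue as you propose (the completion contains a $G_{IV}(6)$ whose clique matrix yields a Tucker submatrix on the proper subset $\{0,1,3\}$, violating Lemma~\ref{lem:TuckMinimal}), but if \emph{both} are present there is no violation of minimality at all: the rows of $M_T$ together with $x,y,z$ induce $G_{IV}(7)$, which is an LB subgraph but not $G_I$ (see Figure~\ref{fig:M4M5}). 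The correct resolution, as in the paper, is to accept this configuration and observe that the lemma only claims \emph{some} LB subgraph is induced; your plan, executed as written, would either chase a nonexistent contradiction in this case or attempt to prove a strengthening of the lemma that is simply not true.
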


We prove the lemma below.  This gives the strategy
for finding an LB subgraph in
any graph that is not an interval graph, which is summarized
as Algorithm~\ref{alg:FindLBSubgraph}.

\begin{algorithm}
\KwData{$G$ is not an interval graph.}
\KwResult{The vertices of $G$ inducing an LB subgraph.}
\begin{enumerate}
\item Test whether $G$ is chordal using the 
algorithm of~\cite{RTL:triangulated}.

\item If it is not chordal, return an instance
of $G_{III}(k)$ for $k \geq 4$, using the algorithm of~\cite{TarjYan85}.

\item Otherwise, let $M$ be the clique
matrix of $G$ produced by the algorithm 
of~\cite{RTL:triangulated}.

\item Using a call to {\tt TuckerSubmatrix$(M)$}, find
an instance $M_T$ of a Tucker submatrix in $M$.
Let $X$ be its rows in $M$.

\item Find three rows $\{x,y,z\}$ of $M$ that complete $M_T$.

\item Return $X \cup \{x,y,z\}$.
\end{enumerate}

\caption{\texttt{FindLBSubgraph($G$)}\label{alg:FindLBSubgraph}}
\end{algorithm}

It remains 
to prove Lemma~\ref{lem:completion}.  We begin by showing that
the three incomplete columns in
any instance of a Tucker submatrix in the clique matrix of
a chordal graph can always be completed with three 
additional rows.

A {\em clique tree} of a chordal graph is a tree $\mT$ that has one node
for each maximal clique, and with the property that, for each
vertex $v$ of $G$, the cliques that contain $v$ induce a connected
subtree.
Every connected chordal graph has a clique tree, see for example~\cite{Gol80}.
There is not necessarily a unique clique tree.

Observe that if $\mK$ is the set of cliques that contain some vertex $v$
and a clique $C$ does not contain $v$, $C$ cannot lie on the path
between any pair of members of $\mK$ in any clique tree.  Otherwise,
the subtree induced by cliques containing it would not be connected,
contradicting the definition of a clique tree.

Generalizing from this insight, we obtain the following:

\begin{lemma}\label{lem:Cleaf}
Let $G$ be a connected chordal graph and let $\mT$ be a clique tree for $G$.
Let $\mK$ be a set of cliques of $G$
and let $C$ be a clique such that $C \not\in \mK$.  
Let $\mK'$ be the multiset obtained
by removing the members of $C$ from every clique in $\mK$.
If the intersection graph of $\mK'$ is connected, then $C$
does lie on the path in $\mT$ between any two members of $\mK$.
\end{lemma}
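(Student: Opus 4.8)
Before giving the plan I would flag that the displayed conclusion appears to be missing a negation: the hypothesis that the intersection graph of $\mK'$ is connected is exactly the generalization of the preceding observation's hypothesis (that all members of $\mK$ share a vertex $v \notin C$), and in that special case $C$ provably \emph{avoids} every such path; so the intended conclusion is that $C$ does not lie on the $\mT$-path between any two members of $\mK$. The plan is to prove this non-separation statement directly from the defining property of a clique tree, namely that for every vertex $w$ of $G$ the cliques containing $w$ induce a connected subtree of $\mT$. The single fact I would extract first is: if a vertex $w$ lies in two cliques $K$ and $K''$, then $w$ lies in every clique on the $K$–$K''$ path of $\mT$, so any clique omitting $w$ cannot occur on that path.

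Next I would use the connectivity of $\mK'$ to produce, between any two chosen members $K_1, K_2 \in \mK$, a witnessing chain that avoids $C$. Concretely, since the intersection graph of $\mK'$ is connected there is a sequence $K_1 = J_0, J_1, \ldots, J_t = K_2$ of members of $\mK$ in which the consecutive sets $J_i \setminus C$ and $J_{i+1} \setminus C$ intersect; for each $i$ I would pick a witness $w_i \in (J_i \cap J_{i+1}) \setminus C$. Applying the extracted fact with $w = w_i$, and using $w_i \notin C$, shows that $C$ does not occur on the $J_i$–$J_{i+1}$ path of $\mT$.

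Finally I would stitch the consecutive paths together using the component structure of $\mT$ with the node $C$ deleted. Because $C$ is not on the $J_i$–$J_{i+1}$ path, the cliques $J_i$ and $J_{i+1}$ lie in the same component of $\mT - C$; by transitivity $K_1$ and $K_2$ lie in a common component of $\mT - C$, which is equivalent to saying $C$ is not on the $K_1$–$K_2$ path in $\mT$. Since $K_1, K_2$ were arbitrary, $C$ avoids the path between every pair of members of $\mK$, as claimed. The hard part will be the correct orchestration of the quantifiers in this stitching step — one must extract the chain from the connectivity of $\mK'$ and then pass from ``adjacent pairs share a component of $\mT - C$'' to ``all of $\mK$ shares a component'' by transitivity; everything else reduces to bookkeeping on the subtree property of clique trees.
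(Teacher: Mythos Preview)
Your proposal is correct (including the observation about the missing negation) and is essentially the same argument as the paper's: the paper proceeds by contradiction, assuming $C$ separates two members of $\mK$ in $\mT$, then uses connectivity of $\mK'$ to find a pair $K_1,K_2$ in different components of $\mT - C$ that share a vertex $v \notin C$, contradicting the subtree property for $v$. Your direct version---build a $\mK'$-chain and show each consecutive pair lies in the same component of $\mT - C$ via its witness vertex---is simply the contrapositive of this.
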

\begin{proof}
Suppose to the contrary that $C$ lies on the path between
two members of $\mK$ in $\mT$.
Removal
of $C$ from $\mT$ leaves a set of two or more trees
that partition $\mK$.  Since the intersection graph of $\mK'$
is connected, there exist two of these trees, one with
$K_1 \in \mK'$ and the other with $K_2 \in \mK'$ such that
$K_1$ intersects $K_2$ on a vertex $v \not\in C$.
Since $C$ lies on the path from $K_1$ to $K_2$ in $\mT$,
the subtree of the clique
tree induced by cliques containing $v$ is not connected,
contradicting the definition of a clique tree.
\end{proof}

\begin{figure}[!ht]
  \captionsetup[subfigure]{labelformat=empty}
  \centering
\subfloat[A chordal graph $G$]{\includegraphics[scale=.3]{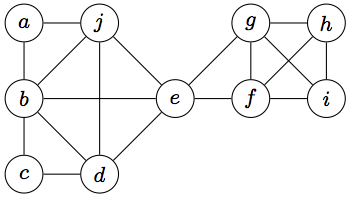}}\\

\subfloat[$G$'s clique tree]{\includegraphics[scale=.3]{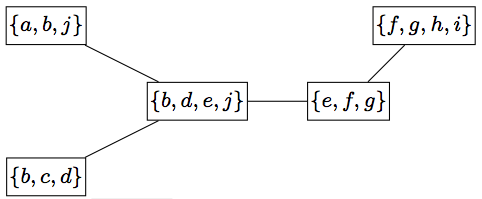}} \quad \quad \quad
\subfloat[Intersection graph of $\mathcal{K} \setminus C$]{\includegraphics[scale=.3]{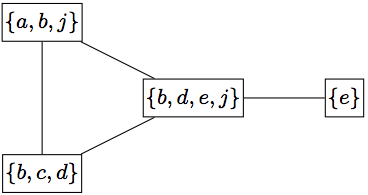}}
\caption{Let $C$
be the clique $\{f,g,h,i\}$ and let $\mK$ be the remaining cliques of $G$.
To the lower right
is the intersection graph of the members of $\mK$ after elements of $C$
have been removed from them.  This graph is connected,
which means by Lemma~\ref{lem:Cleaf} that $\{f,g,h,i\}$ cannot lie on the path
between any pair of members of $\mK$ in any clique tree of $G$.  
}\label{fig:Cleaf}
\end{figure}

Figure~\ref{fig:Cleaf} illustrates the idea.  
The following lemma
is immediate from results that appear in~\cite{Gol80}.

\begin{lemma}\label{lem:leafDeletion}
Let $\mT$ be a clique tree for a chordal graph $G$ and let $K$ be a leaf.  Then $K$ contains
a simplicial vertex of $G$.  Let $S$ be the simplicial vertices of $K$ and
let $\mT'$ be the result of deleting leaf $K$ from $\mT$. 
Deleting $S$ from $G$ yields an induced subgraph that has
$\mT'$ as a clique tree.  (See Figure~\ref{fig:shrink}.)
\end{lemma}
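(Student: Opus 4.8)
The plan is to derive everything from the two defining properties of a clique tree: every maximal clique of $G$ is a node of $\mT$, and for each vertex $v$ the set of nodes of $\mT$ containing $v$ induces a connected subtree. (A clique tree being given, $G$ is in particular connected.)

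First I would identify $S$ and prove it is nonempty. Let $K'$ be the unique neighbor of $K$ in $\mT$. Since $K$ and $K'$ are distinct maximal cliques, neither contains the other, so $K \setminus K' \neq \emptyset$. For any $v \in K \setminus K'$, the subtree of $\mT$ on the nodes containing $v$ includes $K$ but not $K'$; since $K'$ is the only neighbor of $K$, this subtree is the single node $\{K\}$, so $v$ lies in exactly one maximal clique. Then $N(v) = K \setminus \{v\}$, which is complete, so $v$ is simplicial; hence $K$ contains a simplicial vertex. Conversely, a simplicial vertex of $K$ lies in a unique maximal clique and so cannot belong to $K \cap K'$, giving $S = K \setminus K'$. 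I would also record that $K \cap K' = K \setminus S \subsetneq K'$, since $K' \subseteq K$ would contradict the maximality of $K'$.

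Next I would show that the nodes of $\mT'$ are exactly the maximal cliques of $G - S$. Each node of $\mT$ other than $K$ is disjoint from $S$ (only $K$ meets $S$), hence is unchanged by deleting $S$; and a clique that is maximal in $G$ and disjoint from $S$ remains maximal in $G - S$, because any clique of $G - S$ properly containing it would also be a clique of $G$. Conversely, any clique of $G - S$ is a clique of $G$, hence lies inside some node of $\mT$; if that node is $K$, the clique lies inside $K \setminus S = K \cap K' \subseteq K'$, which is a node of $\mT'$. So every clique of $G - S$ lies in a node of $\mT'$, and the maximal cliques of $G - S$ are precisely the nodes of $\mT'$. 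Since $\mT'$ is a tree minus a leaf, it is a tree on exactly this node set.

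Finally I would verify the subtree property for $\mT'$: for a vertex $v$ of $G - S$, let $T_v$ be the subtree of $\mT$ induced by the nodes containing $v$. If $v \notin K$, then $K \notin T_v$, so $T_v$ lies in $\mT'$ and is still connected. If $v \in K$, then $v \notin S$ forces $v \in K \cap K' \subseteq K'$, so $K' \in T_v$; and since $K$ is a leaf of $\mT$, it is a leaf of $T_v$, so $T_v - K$ is nonempty and connected, and it is exactly the set of nodes of $\mT'$ containing $v$. Hence $\mT'$ is a clique tree of $G - S$. The only point that needs care — the ``main obstacle,'' such as it is — is the bookkeeping at the deleted node: confirming that after removing $S$ the residue $K \cap K'$ of $K$ is absorbed into the node $K'$ rather than surviving as a new maximal clique, and that deleting $S$ creates no other new maximal clique. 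Everything else is a direct application of the clique-tree axioms.
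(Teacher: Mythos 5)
Your proof is correct, and it is genuinely more self-contained than what the paper does: the paper offers no argument at all for Lemma~\ref{lem:leafDeletion}, dismissing it as immediate from results in Golumbic's book, whereas you derive it directly from the two clique-tree axioms. The engine of your argument is the identification $S = K \setminus K'$, where $K'$ is the unique neighbor of the leaf $K$: vertices of $K \setminus K'$ lie in no other node of $\mT$ (the subtree containing $K$ but missing $K'$ is just $\{K\}$), hence are simplicial, and conversely a simplicial vertex of $K$ lies only in $K$, so it avoids $K'$. From that, your two remaining steps --- that the nodes of $\mT'$ are exactly the maximal cliques of $G-S$ (with the residue $K \cap K'$ absorbed into $K'$), and that the per-vertex subtrees remain connected because $K$ is a leaf of each subtree it belongs to --- are exactly the right bookkeeping, and both are carried out correctly. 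What your route buys is a proof that uses nothing beyond the paper's own definition of a clique tree, at the cost of a page of detail the authors chose to outsource. Two trivial quibbles: your parenthetical that the existence of a clique tree forces $G$ to be connected is not actually a consequence of the stated definition (and is never used), and your argument tacitly assumes $\mT$ has at least two nodes so that $K'$ exists; in the degenerate one-node case $G$ is complete, $S = V$, and the statement holds vacuously, so nothing is lost.
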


\begin{figure}[!ht]
  \captionsetup[subfigure]{labelformat=empty}
  \centering
\subfloat[A chordal graph $G$]{\includegraphics[scale=.3]{cleaf1.png}} \quad \quad \quad
\subfloat[$G$'s clique tree $\mathcal{T}$]{\includegraphics[scale=.3]{cleaf2.png}}\\

\subfloat[$G - \{h,i\}$]{\includegraphics[scale=.3]{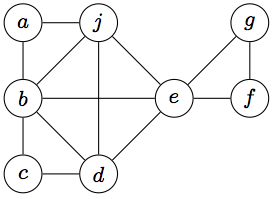}}
\quad \quad \quad
\subfloat[$G - \{h,i\}$'s clique tree]{\includegraphics[scale=.3]{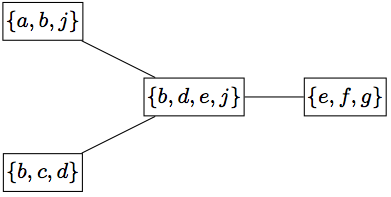}}\\

\caption{``Shrinking'' a clique tree.  If $G$ is a chordal graph and $\mT$
is a clique tree for it, then deleting the simplicial vertices $\{h,i\}$ in a leaf $\{f,g,h,i\}$
of $\mT$ gives a smaller graph that has as a clique tree the tree obtained by deleting
$K$ from $\mT$.}\label{fig:shrink}
\end{figure}

\begin{definition}\label{def:shrinking}
By {\em shrinking a clique tree $\mT$}, let us denote
the operation of deleting the set $S$ of simplicial vertices in a leaf
$K$ of $\mT$, yielding a smaller graph $G'$ that has $\mT - K$
as a clique tree.
\end{definition}

\begin{lemma}\label{lem:TuckerSimplicials}
Let $G$ be a chordal graph and let $M_T$ be a Tucker submatrix of a clique matrix $M$ of $G$.
Then $M_T$ can be completed using rows of $M$.
\end{lemma}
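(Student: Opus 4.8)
The plan is to turn the existence of the three completing rows into a statement about clique trees and let Lemma~\ref{lem:Cleaf} carry the argument. First I would dispose of the trivial case: if $M_T$ is an instance of $M_I(k)$ with $k>3$ it has no incomplete columns, so there is nothing to complete. Otherwise $M_T$ has three incomplete columns, which I regard as maximal cliques $C_1,C_2,C_3$ of $G$; write $\mK$ for the set of all columns of $M_T$ regarded as maximal cliques. It is easily checked that the columns of $M_T$ lie in a single connected component of $G$ (since $M_T$, being a minimal matrix without the consecutive-ones property, is not a block-diagonal sum, its columns are connected to one another through shared rows, and a shared row is a vertex of $G$ lying in both of the corresponding cliques), so I may assume $G$ is connected and fix a clique tree $\mT$ of it. Since the columns of $M_T$ are pairwise distinct maximal cliques and number at least three, $\mT$ has at least three nodes.

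Fix one incomplete column $C\in\{C_1,C_2,C_3\}$. I want a vertex $v$ of $G$ with $v\in C$ but $v\notin K$ for every $K\in\mK\setminus\{C\}$; such a $v$ is automatically not a row of $M_T$, because every row of $M_T$ has at least two $1$'s among the columns of $M_T$ (a vertex lying in a single clique is simplicial and was deleted), whereas the row of $v$ restricted to the columns of $M_T$ has exactly one. To produce $v$ I would invoke Lemma~\ref{lem:Cleaf} with its ``$\mK$'' taken to be $\mK\setminus\{C\}$ and its ``$C$'' taken to be $C$. Its hypothesis is that the intersection graph of $\{K\setminus C : K\in\mK\setminus\{C\}\}$ is connected, and I claim this holds: deleting from $M_T$ the column $C$ together with every row that has a $1$ in $C$ leaves a matrix whose columns are still linked pairwise through shared rows, and each such shared row is a vertex of $G$ that lies in two of the cliques $K$ and not in $C$, hence witnesses the required edge. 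Granting this, Lemma~\ref{lem:Cleaf} says $C$ does not lie on the path in $\mT$ between any two members of $\mK\setminus\{C\}$; since $\mK\setminus\{C\}$ is nonempty and $\mT$ is a tree with $C$ a node of degree at least one, $C$ has a neighbour $D$ in $\mT$ with every member of $\mK\setminus\{C\}$ in the component of $\mT-C$ containing $D$ (otherwise two members in different components would have their connecting path pass through $C$). As $C$ and $D$ are distinct maximal cliques, $C\setminus D\ne\emptyset$; I take $v$ to be any vertex of $C\setminus D$. If $v$ lay in some $K\in\mK\setminus\{C\}$, the subtree of $\mT$ induced by the cliques containing $v$ would be connected and contain both the node $C$ and the node $K$, and since $K$ is on the $D$-side of $C$ it would then contain $D$, forcing $v\in D$, a contradiction. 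Thus $v$ completes column $C$. Applying this to $C_1,C_2,C_3$ yields three vertices that complete $M_T$, and they are pairwise distinct since a vertex private to $C_i$ does not lie in $C_{i'}$ for $i'\ne i$.

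The clique-tree manipulations in the second paragraph are routine and reuse exactly the ideas already present in the proofs of Lemmas~\ref{lem:Cleaf} and~\ref{lem:leafDeletion}. The one case-based step, and where I expect the real work to sit, is the claim that after deleting an incomplete column and all rows supporting it the surviving columns of a Tucker matrix stay connected through shared rows; this has to be checked for $M_I(3)$, $M_{II}(k)$, $M_{III}(k)$, $M_{IV}$ and $M_V$ from Figure~\ref{fig:TuckMatrices}, and for $M_{III}(k)$ with $k>3$ some care is needed because the three incomplete columns are not uniquely determined, so the check should be made for the chosen triple. If that case analysis turns out to be awkward, an alternative is to read the claim off Figure~\ref{fig:LBGraphsTucker}: in the corresponding LB graph each incomplete column sits at an end of the sequence of constrained Venn classes of the other columns, so removing it cannot disconnect them.
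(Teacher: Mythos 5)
Your proposal is correct, and it follows the paper's proof in its essentials: both reduce the existence of a completing vertex for an incomplete column $C$ to the hypothesis of Lemma~\ref{lem:Cleaf}, and both discharge that hypothesis by the same per-matrix inspection (delete $C$ and the rows supporting it from the Tucker matrix and observe that the remaining columns stay connected through shared rows -- which, as you note, transfers to the full cliques of $M$ because $M_T$ is a genuine submatrix of $M$). Where you genuinely diverge is in how the private vertex is extracted once Lemma~\ref{lem:Cleaf} applies: the paper iteratively \emph{shrinks} a clique tree (Lemma~\ref{lem:leafDeletion}, Definition~\ref{def:shrinking}) until $C$ is forced to be a leaf of the shrunken tree $\mT'$, and then takes a simplicial vertex of the corresponding induced subgraph $G'$; you instead work in an arbitrary clique tree, observe that all of $\mK\setminus\{C\}$ lies in one component of $\mT-C$ with attachment neighbour $D$, and take any $v\in C\setminus D$, which the clique-tree property keeps out of every member of $\mK\setminus\{C\}$. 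Your route buys a shorter argument that does not need the shrinking machinery at all (your $v$ need not be simplicial in $G$, but only its $0$--$1$ pattern across the columns of $M_T$ matters), while the paper's shrinking formulation ties the completing rows to simplicial vertices, matching the picture in Figure~\ref{fig:LBGraphsTucker}. Both treatments leave the same case-by-case connectivity check as an inspection of Figure~\ref{fig:TuckMatrices} (restricted, as you correctly say, to the designated triple of incomplete columns -- e.g.\ in $M_V$ the check would fail for the ``large'' column, which is not an incomplete one), and your handling of the side issues the paper leaves tacit (distinctness of the three completing vertices, and that $v$ cannot coincide with a row of $M_T$ since every row of a Tucker matrix has at least two $1$'s) is sound, though the latter is better justified by minimality of Tucker matrices than by the simplicial-deletion correspondence you cite parenthetically.
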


\begin{proof}
Let $C$ be the column of $M$ that contains an incomplete column 
of an instance of a Tucker matrix $M_T$, and let
$\mK$ be the columns of $M$ that
contain the the remaining columns of the instance of $M_T$.
Iteratively shrink the clique tree of $G$ subject
to the constraint that we do not delete $C$ or any member of $\mK$.
Let $\mT'$ be the resulting
clique tree when no more cliques can be deleted without violating
the constraint, and let $G'$ be the corresponding induced
subgraph of $G$.  

All leaves of $\mT'$ are members of $\mK$;
otherwise they could be deleted through further
shrinking without violating
the constraint.  Therefore, any member of $\mK$ that
is not a leaf lies on the path in $\mT'$ between two
other members of $\mK$.  

By inspection on each of $M_I$, $M_{II}$, $M_{III}(3)$, $M_{IV}$
and $M_V$, each choice of $C$ satisfies the requirements
of Lemma~\ref{lem:Cleaf} that prevent it from lying on the path between any two
members of $\mK$.  Therefore, each incomplete column of $M_T$
is a leaf in $\mT'$, hence contains a simplicial vertex $v$ in $G'$.
The row corresponding to $v$ has a single 1 in the clique matrix
of $G'$.  Therefore, the addition of $v$ to $M_T$ completes the column.
\end{proof}

\begin{lemma}\label{lem:simplicialIS}
If three vertices complete three columns of a Tucker submatrix
of the clique matrix of a chordal graph $G$, then they are
pairwise nonadjacent.
\end{lemma}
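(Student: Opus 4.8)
The plan is to argue by contradiction, in the spirit of the proof of Lemma~\ref{lem:TuckerSimplicials}, using a clique tree together with Lemma~\ref{lem:Cleaf}. By the definition of \emph{completing}, the three incomplete columns $C_1,C_2,C_3$ of $M_T$ are distinct, the vertex completing $C_1$ lies in the clique $C_1$ and in no other column of $M_T$, and likewise for the vertices completing $C_2$ and $C_3$. By the symmetry of these three vertices it suffices to show that the vertices $x,y$ completing $C_1,C_2$ respectively are nonadjacent. Suppose instead that $x$ and $y$ lie together in a maximal clique $D$ of $G$. Since $y\notin C_1$ while $x$ lies in no column of $M_T$ other than $C_1$, the clique $D$ is not a column of $M_T$.

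Next I would fix a clique tree $\mathcal{T}$ of $G$ (restricted to the component containing the rows of $M_T$) and use the defining property that for each vertex $v$ the cliques containing $v$ induce a subtree of $\mathcal{T}$. Since $x\in C_1\cap D$ and $y\in C_2\cap D$, the median in $\mathcal{T}$ of the three cliques $C_1,C_2,D$ is a clique $m$ that contains both $x$ and $y$ (it lies on the $C_1$-to-$D$ path, hence in the subtree of cliques containing $x$, and on the $C_2$-to-$D$ path, hence in the subtree containing $y$) and that lies on the $C_1$--$C_2$ path of $\mathcal{T}$. Moreover $m\neq C_1$ and $m\neq C_2$ (if $m=C_1$ then $y\in C_1$, if $m=C_2$ then $x\in C_2$, both impossible), and $m$ is not a column of $M_T$ (it contains $x$ but differs from $C_1$). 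I would also observe that no column of $M_T$ other than $C_1$ and $C_2$ lies on the $C_1$--$C_2$ path: a column on the $C_1$-to-$m$ subpath would contain $x$, and a column on the $m$-to-$C_2$ subpath would contain $y$, each contradicting the defining property of the completing vertex. Thus $m$ is a clique, not belonging to $M_T$, that separates the two columns $C_1$ and $C_2$ of $M_T$ in $\mathcal{T}$.

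This is exactly the situation forbidden by Lemma~\ref{lem:Cleaf}: taking $\mathcal{K}$ to be the set of all columns of $M_T$, it suffices to show that the intersection graph of $\{K\setminus m\mid K\in\mathcal{K}\}$ is connected, since then $m$ cannot lie on the path between two members of $\mathcal{K}$. Because enlarging each set can only add edges, it is enough that the column-intersection graph of $M_T$, after deleting the rows of $M_T$ that lie in $m$, be connected. Here lies the main obstacle, namely pinning down which rows of $M_T$ can belong to the clique $m$. I expect to handle it by first replacing $G$ with the shrunken graph of Lemma~\ref{lem:TuckerSimplicials}'s proof, in which $C_1,C_2,C_3$ are leaves of the clique tree and the completing vertices survive; a leaf cannot separate two other cliques, which sharply constrains where $m$ can sit relative to $C_1$ and $C_2$, and I expect a direct check on $M_I(3)$, $M_{II}(k)$, $M_{III}(k)$, $M_{IV}$, and $M_V$ --- the only Tucker matrices possessing three incomplete columns --- to show that after deleting whatever rows of $M_T$ can lie in $m$, the remaining columns of $M_T$ still have connected intersection graph (their transposes being of the same ``cyclic'' type reflected in Proposition~\ref{prop:cycle}). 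The tree-and-median bookkeeping is routine; the structural case check is the real work, and it is the same sort of ``by inspection'' verification already invoked in the proof of Lemma~\ref{lem:TuckerSimplicials}.
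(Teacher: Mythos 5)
Your clique-tree setup is fine as far as it goes: the median $m$ of $C_1$, $C_2$, $D$ does contain both completing vertices, lies strictly between $C_1$ and $C_2$ on the tree path, is not a column of $M_T$, and no other column of $M_T$ lies on that path. But the step you yourself flag as the ``main obstacle'' is not a routine inspection, and as proposed it does not go through. To invoke Lemma~\ref{lem:Cleaf} you must show that the intersection graph of the columns of $M_T$, after deleting the rows of $M_T$ that lie in $m$, is connected; and which rows of $M_T$ lie in $m$ is exactly the information you do not control. A row $r$ of $M_T$ lies in $m$ only if it is adjacent in $G$ to both completing vertices, and nothing in the definition of completion forbids such adjacencies --- they can arise through maximal cliques of $G$ that are not columns of $M_T$, so they are invisible to the submatrix. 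Concretely, take $M_T=M_{IV}$ (Figure~\ref{fig:M4M5}): its columns restricted to its rows are $\{0\},\{0,3\},\{1\},\{1,3\},\{2\},\{2,3\}$, and if the hub row $3$ lies in $m$, deleting it leaves $\{0\},\{0\},\{1\},\{1\},\{2\},\{2\}$, whose intersection graph has three components, so Lemma~\ref{lem:Cleaf} yields no contradiction. Excluding that scenario means bounding the unknown adjacencies between completing vertices and rows of $M_T$, which is essentially the content the lemma itself is supposed to deliver, so the deferred ``direct check'' risks circularity rather than being bookkeeping. Shrinking the clique tree as in the proof of Lemma~\ref{lem:TuckerSimplicials} does not help here, because it only deletes vertices and does not change adjacencies among the surviving ones.

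For comparison, the paper's proof avoids clique trees entirely and uses chordality directly: supposing two completing vertices $a,b$ adjacent, it picks a neighbor $a'$ of $a$ among the rows $X$ of $M_T$ with $a'\notin N(b)$, a neighbor $b'$ of $b$ with $b'\notin N(a)$, and a path in $X$ from $a'$ to $b'$ avoiding common neighbors of $a$ and $b$; even allowing unknown chords, $a$, $a'$, $b'$, $b$ and some of the path vertices then induce a chordless cycle on at least four vertices, contradicting chordality of $G$. If you want to keep your tree-theoretic route, you would need to add an argument controlling which rows of $M_T$ can be adjacent to two completing vertices simultaneously (or apply Lemma~\ref{lem:Cleaf} to a carefully chosen subfamily of columns rather than all of them); as written, the proposal has a genuine gap at precisely that point.
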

\begin{proof}
Let $X$ be the vertices of the Tucker submatrix.
For each pair $a,b$ of the vertices that complete it, the adjacencies
of $a$ and $b$ to members of $X$ are known, by definition.  Also, $a$
has a neighbor $a' \in X \setminus N(b)$ and $b$ has a neighbor
$b' \in X \setminus N(a)$, and there is a path $P$ in $X$ from $a'$ to $b'$
that avoids the common neighbors of $a$ and $b$.  Even if there are 
unknown chords on $P$, $a$, $a'$, $b'$, $b$ and zero or more members
of $P$ form a chordless cycle, contradicting the chordality of $G$.
\end{proof}

{\bf Proof of Lemma~\ref{lem:completion}}.
The cases for $M_{IV}$ and $M_V$ are illustrated in Figure~\ref{fig:M4M5}.
By assumption in each case, $\{0,1,2,3\}$ is a minimal set of rows of $M$
that contain an instance of a Tucker matrix.
Also, assume without loss of generality that
the depicted the ordering of columns is consistent with a consecutive-ones
ordering of rows $\{0,1,2\}$ of $M$.  
By Lemma~\ref{lem:simplicialIS}, the only possible adjacencies
among $\{0,1,2,3\}$ that are not reflected by the intersection graph 
of $M_T$ are $\{0,1\}$ and $\{1,2\}$.
If only 0 and 1 are adjacent, 
$\{0,1,3,x,y,z\}$ induces an instance of $G_{IV}(6)$.  Since the clique matrix
of $G_{IV}(6)$ consists of three rows for its simplicial
vertices, plus a Tucker matrix, rows
$\{0,1,3\}$ induce a Tucker matrix in the clique matrix of $G_{IV}(6)$.
Thus, it is a Tucker matrix in the clique matrix of $G$,
contradicting the minimality of $\{0,1,2,3\}$.  This case cannot
occur.  By symmetry, it cannot be the case that only $1$ and $2$ are
adjacent.  Therefore, $x$, $y$ and $z$,
together with the rows of the $M_V$, induce
an instance of $G_I$ or of $G_{IV}$.

Similarly, for $M_V$, by Lemma~\ref{lem:simplicialIS},
the case where 0 and 1 are adjacent
yields an $M_{II}(3)$ on $\{0,1,3\}$, contradicting the minimality
of $\{0,1,2,3\}$, hence, the completion with $x$, $y$ and $z$
induces a $G_{II}$.

For $M_T \in \{M_I(k), M_{II}(k), M_{III}(k)\}$, the rows
of $M$ corresponding to $\{0,$ $1,$ $ 2, \ldots, k-1\}$
are a minimal set $\mR$ of rows of $M$ that contain an instance of a Tucker
submatrix, by assumption.  Let $X$ be the corresponding vertices of $G$.
Assume that the rows of $\mR$ are ordered as shown in Figure~\ref{fig:TuckMatrices},
so that only the last row of $\mR$ is not consecutive-ones ordered.

Every pair of nonadjacent vertices in the intersection graph
of $M_T$ contains some
$i \in \{1,2, \ldots, k-4\}$ if $M_T$ is an instance of $M_{II}(k+1)$,
or $i \in \{0,1, \ldots, k-4\}$ if $M_T$ is an instance of $M_{III}(k)$
for $k \geq 4$.  Without loss of generality, we may assume that
$i$ is the lower-numbered member of the adjacent pair.
In each case, row $i+1$ contains the right endpoint of row $i$
in $M$.  Therefore,
$i+2$ the only possible higher-numbered neighbor of $i$ in $G$ that is not a neighbor
in the intersection graph of $M_T$.  Suppose $i+2$
is a neighbor of $i$.  Then $M$ has a column that contains both $i$ and $i+2$.
In $M_T$, we may replace the column that contains $i$ and $i+1$
and the column that contains $i+1$ and $i+2$ with this column,
and then delete row $i+1$ from $M_T$,
yielding a smaller instance of $M_{II}$ or of $M_{III}$.  This contradicts
the assumed minimality of the rows of $M_T$.  It follows that 
the intersection graph of rows of $M_T$ faithfully represents $G[X]$.
The completion of $M_T$ with three vertices $\{x,y,z\}$ induces an
induced LB subgraph by Lemma~\ref{lem:simplicialIS}.

An identical argument applies to $i \in \{0,1, \ldots, k-4\}$ for $M_I(k)$
and $k \geq 4$.  By the cyclic symmetry of $M_I$, it therefore applies for all rows
of $M_I$, and the intersection graph of $M_I$ faithfully represents $G[X]$.
Since this is a chordless cycle, no instance of $M_I(k)$ for $k \geq 4$
satisfies the conditions of Lemma~\ref{lem:TuckMinimal}.
The intersection graph of $M_I(3)$ is complete, so it faithfully
represents $G[X]$, and its completion is an induced LB subgraph
by Lemma~\ref{lem:simplicialIS}.

Summarizing the results of this section, we obtain the following:

\begin{theorem}\label{thm:LBTime}
Given an adjacency-list representation of an arbitrary graph $G$, 
it takes $O(n+m)$ time to find either an interval model of $G$
or an induced LB subgraph.
\end{theorem}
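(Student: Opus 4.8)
The plan is to follow Algorithm~\ref{alg:FindLBSubgraph} step by step and, for each step, verify an $O(n+m)$ time bound and invoke the appropriate lemma for correctness. First I would run the algorithm of~\cite{RTL:triangulated} to decide whether $G$ is chordal, in $O(n+m)$ time. If $G$ is not chordal, then $G$ is not an interval graph, and I would return an induced chordless cycle, i.e.\ an instance of $G_{III}(k)$ for $k \geq 4$, using the addendum of~\cite{TarjYan85}, again in $O(n+m)$ time. If $G$ is chordal, then the same run of~\cite{RTL:triangulated} produces the maximal cliques of $G$, hence a sparse representation of the clique matrix $M$, and since $G$ is chordal, $size(M) = O(n+m)$. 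I would then run Booth and Lueker's algorithm on $M$ (Lemma~\ref{lem:BLPrefix} applied to the full matrix). If $M$ has the consecutive-ones property, a consecutive-ones ordering of $M$ yields an interval model of $G$, the interval of each vertex being the block of consecutive $1$'s in its row; I would return that model. All of this costs $O(size(M)) = O(n+m)$.

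In the remaining case, $M$ does not have the consecutive-ones property, so $G$ is not an interval graph and I must produce an induced LB subgraph. By Lemma~\ref{lem:TuckMinimal}, a call to {\tt TuckerSubmatrix$(M)$} returns, in $O(size(M)) = O(n+m)$ time, an instance $M_T$ of a Tucker submatrix whose row set $X$ is a \emph{minimal} set of rows of $M$ containing a Tucker submatrix. Since $G$ is chordal, Lemma~\ref{lem:completion} applies: $M_T$ is not an instance of $M_I(k)$ for $k \geq 4$, hence it has three incomplete columns (as defined in the discussion preceding Lemma~\ref{lem:TuckMinimal}; for the one family, $M_{III}(k)$ with $k>3$, in which the triple is not unique, any valid triple suffices by Lemma~\ref{lem:completion}). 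Recognizing which of the five Tucker families $M_T$ belongs to and reading off its three incomplete columns is a fixed-structure computation costing $O(size(M_T)) = O(n+m)$.

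It then remains to complete $M_T$ with three rows of $M$ and appeal to Lemma~\ref{lem:completion}; this is the step that needs the most care, since Lemma~\ref{lem:TuckerSimplicials} only asserts \emph{existence} of completing rows, via a clique-tree shrinking argument that is not obviously efficient. The key observation is that a row of $M$ that completes an incomplete column $C_i$ is exactly a vertex of $G$ lying in the clique $C_i$ but in no other clique that occurs as a column of $M_T$. Hence for each $i \in \{1,2,3\}$ I would mark every vertex occurring in a clique that is a column of $M_T$ other than $C_i$, then scan the vertices of $C_i$ for an unmarked one; by Lemma~\ref{lem:TuckerSimplicials} such a vertex $v_i$ exists, and both the marking and the scan cost $O(size(M_T))$ in total over the three columns. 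By Lemma~\ref{lem:completion}, $X \cup \{v_1, v_2, v_3\}$ induces an LB subgraph of $G$, which I return.

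Finally, summing the costs — $O(n+m)$ each for chordality testing, for extracting a chordless cycle, for building $M$, for running Booth and Lueker, for {\tt TuckerSubmatrix}, for identifying the incomplete columns, and for the completion step — yields the claimed $O(n+m)$ bound. The main obstacle, as indicated above, is step~5 of Algorithm~\ref{alg:FindLBSubgraph}: making Lemma~\ref{lem:TuckerSimplicials} constructive within the time budget, which the marking argument resolves without ever carrying out the clique-tree shrinking. Everything else is a direct assembly of the preceding lemmas, and the sparsity requirement $size(M) = O(n+m)$ is exactly why the non-chordal case is peeled off and handled separately before the clique matrix is ever formed.
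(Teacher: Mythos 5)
Your proposal is correct and is built on the same skeleton as the paper's proof: the chordality test and clique matrix from~\cite{RTL:triangulated}, the chordless cycle from~\cite{TarjYan85} in the non-chordal case, the sparsity bound $size(M)=O(n+m)$ for chordal $G$, then {\tt TuckerSubmatrix} with Lemma~\ref{lem:TuckMinimal}, and finally Lemma~\ref{lem:completion} applied to a completion of $M_T$. The only genuinely different step is your implementation of step~5 of Algorithm~\ref{alg:FindLBSubgraph}. The paper works on the graph side: for each vertex $v\notin X$ it checks, by marking along adjacency lists, whether $N(v)\cap X$ equals the set of rows having a 1 in one of the incomplete columns, in $O(n+m)$ total. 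You work on the matrix side: $v$ completes incomplete column $C_i$ exactly when $v$ lies in the clique $C_i$ and in no other clique that is a column of $M_T$, found by marking the relevant columns of $M$. Your test is the literal definition of completing, so Lemma~\ref{lem:completion} as stated applies to whatever triple you find, and existence is Lemma~\ref{lem:TuckerSimplicials}, as you say; the paper's equality test additionally pins down that the chosen $x,y,z$ have no neighbors in $X$ outside their incomplete columns, which is the ``adjacencies known by definition'' property that the proofs of Lemmas~\ref{lem:simplicialIS} and~\ref{lem:completion} appeal to (the paper regards the two tests as selecting the same rows). Two small corrections to your accounting: the columns of $M_T$ are full cliques of $G$, so your marking pass touches up to $O(size(M))$ entries, not $O(size(M_T))$ --- repeated three times this is still $O(size(M))=O(n+m)$, so the bound is unaffected; and your preliminary run of Booth--Lueker to output the interval model when $M$ has the consecutive-ones property is exactly the paper's appeal to~\cite{BL76}, not an extra ingredient.
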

\begin{proof}
If $G$ is an interval graph, an interval model can be produced
in linear time by~\cite{BL76}.
Otherwise, in a call to {\tt FindLBSubgraph$(G)$},
the cited algorithms run in $O(n+m)$ time.  
When $G$ is not chordal, it returns an instance of $G_{III}(k)$
for $k \geq 4$, which is an LB subgraph.
When $G$ is chordal, the clique matrix $M$ returned by the
algorithm of~\cite{RTL:triangulated} has $size(M) = O(n+m)$.
Since $G$ is not an interval graph, $M$ does not have the consecutive-ones
property, so it contains a Tucker submatrix.
By Lemma~\ref{lem:TuckMinimal}, it takes
in $O(size(M)) = O(n+m)$ time to find one that satisfies
the conditions of Lemma~\ref{lem:completion}.  Denote it by $M_T$.

Let $X$ be the vertices corresponding to rows
of $M_T$.  For each incomplete column of $M_T$, make
a list of the rows that have a 1 in the column.
For each vertex of $G$ other than those in $X$,
check whether $N(v) \cap X$ is the set of rows in one of these
lists.  This takes $O(1+|N(v)|)$ by 
marking and counting of neighbors of $v$ in $X$, checking each list for an unmarked
vertex, then unmarking them, for a total of $O(n+m)$ over all
vertices of $G$.  
This gives the set of rows that complete incomplete
columns of $M_T$, since a row that has a 1 in any
other column of $M_T$ has neighbors in $X$ that
are not in any incomplete column.  Selecting one row from
each of these sets gives an LB subgraph by Lemma~\ref{lem:completion}.
\end{proof}

\bibliographystyle{plain}
\bibliography{to}
\end{document}